\newcommand{\mypar}[1]{\smallskip\noindent{\bfseries\boldmath #1.}}
\newcommand{\edge}[2]{\ensuremath{(#1,#2)}}
\newcommand{\xtriangle}[3]{\ensuremath{\triangle #1 #2 #3}}
\title{Near-Delaunay Metrics}
\author{Nathan van Beusekom\thanks{Department of Mathematics and Computer Science, TU Eindhoven, the Netherlands; {\tt n.a.c.v.beusekom@tue.nl, k.a.buchin@tue.nl, h.o.koerts@student.tue.nl, w.meulemans@tue.nl, b.speckmann@tue.nl.}} \qquad Kevin Buchin$^*$ \qquad Hidde Koerts$^*$\\ Wouter Meulemans$^*$ \qquad Benjamin Rodatz\thanks{Department of Computer Science, University of Oxford; \newline {\tt benjamin.rodatz@cs.ox.ac.uk.}} \qquad  Bettina Speckmann$^*$}
\begin{document}
\thispagestyle{empty}
\maketitle

\begin{abstract}
We study metrics that assess how close a triangulation is to being a Delaunay triangulation, for use in contexts where a good triangulation is desired but constraints (e.g., maximum degree) prevent the use of the Delaunay triangulation itself. Our near-Delaunay metrics derive from common Delaunay properties and satisfy a basic set of design criteria, such as  being invariant under similarity transformations. We compare the metrics, showing that each can make different judgments as to which triangulation is closer to Delaunay. We also present a preliminary experiment, showing how optimizing for these metrics under different constraints gives similar, but not necessarily identical results, on random and constructed small point sets.
\end{abstract}

\section{Introduction}
\label{sec:introduction}

Delaunay triangulations are a common construct in computational geometry: practically any class on computational geometry teaches about Delaunay triangulations of point sets and their duality to the Voronoi diagram.
They are efficiently computable, and have many desirable properties, for instance, lexicographically maximizing the minimum angle of the corners of the resulting triangles~\cite{sibson1978locally}, having bounded dilation~\cite{dobkin1990delaunay,keil1992classes},
minimizing the maximum circumcircle~\cite{d1989optimal}, maximizing the minimum enclosed circle~\cite{d1989optimal,rajan1994optimality} of its triangles.
As such, they form the basis of many algorithms that require some triangulation of a point set.

However, there are various scenarios imaginable where the Delaunay triangulation is not immediately applicable due to constraints on the desired triangulation, such as a limited vertex degree or a set of edges that needs to be included. In such cases, we may want to find a triangulation that is as close as possible to the Delaunay triangulation while adhering to the constraints. This, however, requires a way to measure how \emph{near-Delaunay} a triangulation is. Another scenario in which such a measure would be useful is when a triangulation is already given -- for example, the triangulation of a terrain. In such a setting, we could use the measure to assess the quality of the triangulation.

An example of a triangulation that aims to be as close as possible to the Delaunay triangulation given a set of edges that needs to be included is the Constrained Delaunay Triangulation (CDT)~\cite{PaulChew1989,lee1986generalized}.
It provides an alternative definition of when an edge or triangle may be part of the triangulation, such that it is ``as close as possible'' to being Delaunay for the given constraints. However, it does not help in assessing how close a triangulation is to being the actual Delaunay triangulation, nor does it generalize to other forms of constraints.

In this work, we consider various ways to measure
how close a triangulation is to being Delaunay. The problem of studying the properties of triangulations that are close to the Delaunay triangulation was proposed at CCCG 2017 by O'Rourke~\cite{openproblemscccg17}. In this context, two measures were already proposed~\cite{openproblemscccg17,mullen2011hot}, which we discuss further in Section~\ref{sec:metrics}. The aim of our work, is to explore a broader range of measures together with algorithms to compute them and with an analytical and experimental comparison.
We identify several criteria for near-Delaunay metrics:
\begin{enumerate}[label={\textit{C\arabic*}},ref={\textit{C\arabic*}}]
\item \label{c:dtopt} The Delaunay triangulation should obtain the perfect score. We do not want to distinguish how nice Delaunay triangulations are of different point sets -- because this would be a factor of the point set itself. A non-Delaunay triangulation should always score less than perfect, such that any non-Delaunay triangulation is considered less Delaunay.
\item \label{c:continuous} The measure should behave continuously for slight perturbations of the point set. Triangles that ``severely'' violate properties of a Delaunay triangulation should score worse than those with ``slight'' violations.
\item \label{c:invariant} The measure should be invariant under similarity transformations (translations, rotations, scaling and reflections). Though this criteria follows from the first for the Delaunay triangulation (which is invariant under similarity transformations), it is desirable for this to also hold for non-Delaunay triangulations, such that triangulations of different point sets can still be reasonably compared.
\item \label{c:decomp} The measure should be decomposable, that is, evaluated separately on different elements of the triangulation. Though not strictly necessary, this allows various forms of aggregation (worst situation, average situation, etc.).
\end{enumerate}
We observe, that the combination of criterion \ref{c:invariant} and~\ref{c:decomp} suggests that the metric should also be ``locally invariant''. That is, even within the same triangulation, a decomposition element that is subject to the same constraints to another element up to similarity transformations should score the same. That is, the metric should not naturally award higher (or lower) scores to elements that are larger; instead small and large elements should contribute equally to the overall metric. In other words, a triangulation should not be considered near-Delaunay, simply because the deviations from the Delaunay property are only at very small elements.

We propose and compare several \emph{near-Delaunay metrics}.
These metrics (as shown in Table~\ref{tab:results}) satisfy our four criteria. They differ in the properties of the Delaunay triangulation they aim to capture, as well as how they decompose the given triangulation. For this, we use decompositions into: (1) quadrilaterals -- a quadrilateral here is an edge and its two incident triangles, ignoring all other points; (2) edges -- an edge is considered in context of all other points; (3) triangles in context of all other points.
We show that these measures behave differently, capturing different aspects of how near-Delaunay a triangulation is. Finally, we briefly compare how our measures relate to the CDT.

\mypar{Related work}
One of the well-known results in computational geometry is that any triangulation can be transformed into the Delaunay triangulation using Lawson flips \cite{lawson1977software}. A natural consideration would thus be to measure the number of flips necessary to transform a triangulation into a Delaunay triangulation. Though it satisfies \ref{c:dtopt} and \ref{c:invariant}, it satisfies neither \ref{c:continuous} -- it is a discrete measure and it is not immediately based on the violations of a Delaunay property -- nor \ref{c:decomp}. An additional complication is that computing such flip distances is generally hard~\cite{pilz2014flip}.

Another natural consideration is to measure the number of points in any circumcircle of a triangle in the triangulation. This leads to the notion of higher-order Delaunay triangulations~\cite{gudmundsson2002higher}. We do not consider them in this paper, since a small perturbation can greatly change the number of points in a circumcircle. That is, the resulting measure would not adhere to \ref{c:continuous}. But when this criterion is not needed, higher-order Delaunay triangulations are well suited to obtain triangulations close to the Delaunay triangulation that at the same time are optimized for another criterion. Van Kreveld et al.~\cite{van2010optimization} discuss optimizing over first-order Delaunay triangulations for various criteria like minimizing the maximum degree.

Computing a triangulation that is as close as possible to the Delaunay triangulation (given certain constraints) can be seen as optimization problem.
There are many papers studying optimal triangulations under various criteria. Bern et al.~\cite{bern1993edge} show how to efficiently compute triangulations under criteria like maximizing the minimum height or minimizing the maximum eccentricity. Unfortunately for other criteria, computing optimal triangulations is more difficult. For instance computing the minimum-weight triangulation is NP-hard~\cite{mulzer2008minimum}. While the complexity of finding the minimum-dilation triangulation is open~\cite{dilation-survey}, many related problems on minimizing dilation are NP-hard~\cite{giannopoulos2010computing}. Similarly, the complexity of finding the minimum-degree triangulation seems open, while it is NP-hard if as a constraint certain edges have to be included~\cite{jansen1993one,kant1997triangulating}.

The fact that for many optimization criteria efficient algorithms are not known, also limits the size of the point sets that we can include in our experiments, in which we for instance want to compute near-Delaunay triangulations with additional constraints like a degree bound or a bound on the weight of the triangulation. We here resort to enumerating triangulations, however, the total number of triangulation for a given point set is exponential~\cite{aichholzer2004lower,sharir2011counting}.

\begin{table*}[t]
\caption{Overview of metrics. Listed with each is the form of decomposition, the Delaunay property it is based on, and the running time for computing the measure for a given triangulation.}
\label{tab:results}
\small
\begin{tabu} to \textwidth {X[1c,m]|X[1,c,m]|X[1,c,m]|X[1,c,m]|X[1,c,m]|X[1,c,m]|X[1,c,m]}
        \toprule
        Opposing Angles & Dual Edge Ratio & Dual Area overlap & Lens & Shrunk Circle & Triangular Lens & Shrunk Circumcircle\\
        \midrule
        \includegraphics[page=1]{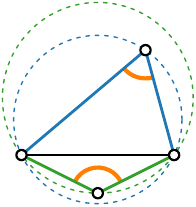} &
        \includegraphics[page=2]{metric_illustrations_small} &
        \includegraphics[page=3]{metric_illustrations_small} &
        \includegraphics[page=4]{metric_illustrations_small} &
        \includegraphics[page=5]{metric_illustrations_small} &
        \includegraphics[page=6]{metric_illustrations_small} &
        \includegraphics[page=7]{metric_illustrations_small} \\
        \midrule
        Quadrilateral & Quadrilateral & Quadrilateral & Edge & Edge & Triangle & Triangle\\
        \midrule
        Max-min angle & Voronoi Dual (edges) & Voronoi Dual (faces) & Empty Circle & Empty Circle & Empty Circumcircle & Empty Circumcircle\\
        \midrule
        $O(n)$ & $O(n)$ & $O(n)$ & $O(n^2)$ & $O(n^2)$ & $O(n^2)$ & $O(n^2)$\\
        \bottomrule
\end{tabu}
\end{table*}

\section{Near-Delaunay metrics}\label{sec:metrics}

We define seven near-Delaunay metrics as shown in Table~\ref{tab:results}, which all satisfy our four criteria. Throughout, we assume that we are to measure a triangulation $T$ on a point set $P$. We further assume, for sake of simplicity, that $P$ is in general position: no three points are on a line and no four points are cocircular. We organize them below into three categories, depending on their form of decomposition (quadrilateral, edge and triangle). We always describe the measure just for a single decomposition element, silently assuming some form of aggregation such as taking their sum or extremal value.

\subsection{Quadrilateral-based metrics}

We start by introducing intuitive metrics that evaluate quadrilaterals. A quadrilateral consists of a non-convex-hull edge and the two triangles that are incident to it, defined by two vertices. Particularly, note that any point in $P$ that is not one of the four defining vertices does not influence the metric on this particular quadrilateral -- this contrasts the edge-based and triangle-based metrics. Thus, to compute the metrics it is sufficient to be given only the quadrilateral.
Consequently, the metrics we introduce for quadrilaterals can be computed in constant time per quadrilateral and in linear time overall. Note that for each quadrilateral-based metric, lower scores mean closer to Delaunay (contrasting our other two forms of decomposition).

Throughout, we denote by $\edge{u}{v}$ the defining edge of the quadrilateral, and with $p$ and $q$ the \emph{opposing} vertices of the two incident triangles $\triangle{u}{v}{p}$ and $\triangle{u}{v}{q}$. We use $c_p$ and $c_q$ to denote the center of the circumcircles of these triangles. A quadrilateral is \emph{locally Delaunay} if the edge $\edge{u}{v}$ is part of the Delaunay triangulation of $\{u,v,p,q\}$ -- in other words, if the circumcircle of the one triangle does not contain the other vertex.

\mypar{Opposing Angles}
At CCCG 2017, O'Rourke suggested that a triangulation $T$ is a near-Delaunay triangulation if the opposite angles $\alpha$ and $\beta$ of a quadrilateral sum to at most $\pi + \varepsilon$ for $\varepsilon \geq 0$ \cite{openproblemscccg17} (see Fig.~\ref{fig:opposing_angles}). If $\varepsilon = 0$, then $T$ is Delaunay. Hence, it is natural to consider the smallest $\varepsilon$ for a triangulation as a metric for how close a triangulation is to a Delaunay triangulation.
We can readily interpret $\varepsilon$ on a per-quadrilateral basis for a metric in our context; O'Rourke's suggestion is simply to take their maximum as the overall metric.

\begin{figure}[b]
    \centering
    \includegraphics[page=1]{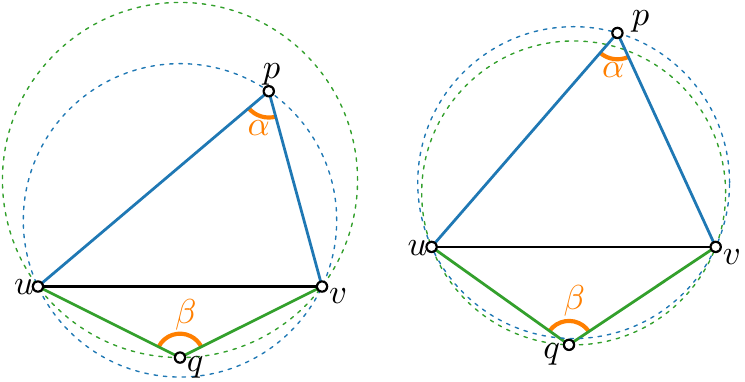}
    \caption{Opposing Angles metric. (left) Sum of opposing angles is less than $\pi$ for quadrilateral that is not locally Delaunay. (right) It is at least $\pi$ for quadrilaterals that are locally Delaunay.}
    \label{fig:opposing_angles}
\end{figure}

The intuition behind the metric is, when the sum of two opposing angles is larger than $\pi$, the empty circumcircle property is violated. When the sum is close to~$\pi$, a slight movement in the points can restore the empty circumcircle property. However as the sum grows larger, the points generally need to move further to restore the property, unless $p$ or $q$ is very close to $u$ or $v$, in which case a small movement is sufficient.
Thus, we can use the sum of two opposing angles to evaluate how far from Delaunay a quadrilateral is.

\mypar{Dual Edge Ratio}
Additionally, Mitchell~\cite{openproblemscccg17} suggested ``measuring the signed distance between circumcenters of triangles sharing  an  edge;  for  Delaunay  triangulations  this is  simply  the  dual  edge  length  and  non-negative, but  for  non-Delaunay triangulations  the  circumcenters  can  be  in  the wrong  order  and hence  have  a  negative  distance between them.  So one could look at the ratio of the dual edge signed-length to the primal edge length (for 2D triangulations) as a continuous measure of how close it is to non-Delaunay.''~\cite{openproblemscccg17} -- which are also known as Hodge-optimized triangulations \cite{mullen2011hot}.

We adapt this metric to evaluate quadrilaterals: we measure only the negative distance part of the suggestion, to satisfy criterion~\ref{c:dtopt} and not distinguish between Delaunay triangles. The ``wrong order'' referred to above matched to the quadrilateral being not locally Delaunay.
We thus define the Dual Edge Ratio as
\[
\begin{cases}
      0, & \parbox[t]{.35\textwidth}{if the quadrilateral is locally Delaunay.} \\
      \frac{d(c_p,c_q)}{d(u,v)}, & \text{otherwise.}
\end{cases} \]

Intuitively, if the quadrilateral is not locally Delaunay, ratio of the distance between $c_p$ and $c_q$ and the length of $\edge{u}{v}$ roughly corresponds to how skinny the triangles are and thus how far from Delaunay they are as well (see Fig.~\ref{fig:dual_edge_ratio}).
Clearly, the Delaunay triangulation scores zero on all its quadrilaterals. Any quadrilateral that is not locally Delaunay, and hence does not locally describe the dual of the Voronoi diagram, will score greater than zero. Any non-Delaunay triangulation must have at least one such quadrilateral.

\begin{figure}[t]
    \centering
    \includegraphics[page=2]{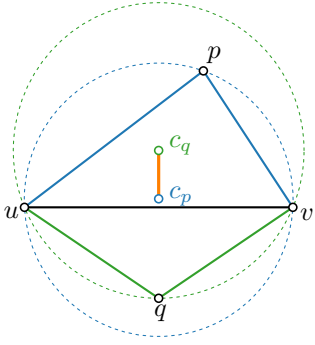}
    \caption{Dual Edge Ratio metric. The dual edge (orange) for a quadrilateral that is not locally Delaunay. We measure its length relative to the length of $\edge{u}{v}$.}
    \label{fig:dual_edge_ratio}
\end{figure}

\newpage
\mypar{Dual Area Overlap}
Similar to the previous metric, we may use the duality to Voronoi diagrams but in a different manner. Rather than looking at the distance, we may also consider an area-based metric. Intuitively, we consider the incorrect area of overlap between the ``local Voronoi cells'' that we may construct from the triangles of the quadrilateral.

The local Voronoi cell of $p$ (and $q$ analogously) is defined by the bisectors of $u$ and $v$ with $p$, which cross in $c_p$. If the quadrilateral is locally Delaunay, then the cells of $p$ and $q$ are disjoint. However, if the quadrilateral is not locally Delaunay, then they must overlap. The area of this overlap divided by the squared length of $\edge{u}{v}$ is our Dual Area Overlap metric (see Fig.~\ref{fig:dual_area_overlap}). We normalize using the squared edge length here, to ensure scale invariance, criterion~\ref{c:invariant}.

\begin{figure}[b]
    \centering
    \includegraphics[page=3]{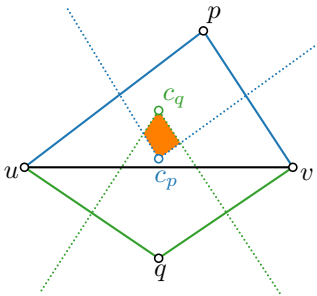}
    \caption{Dual Area Overlap metric. The local Voronoi cells of $p$ and $q$ (dotted) for a quadrilateral that is not locally Delaunay. We measure the overlap (orange).}
    \label{fig:dual_area_overlap}
\end{figure}

The intuition behind this metric is similar to the intuition from the Dual Edge Ratio metric. The area of the overlap of the two Voronoi regions implies how far the quadrilateral is from having a proper non-overlapping Voronoi dual.
A larger area means that points will have to move further to reach a non-overlapping dual, while a smaller area implies that a small movement in the points can already achieve this.

\subsection{Edge-based metrics}

We now turn to edge-based metrics, that evaluate each edge $\edge{u}{v}$ of $T$ in context of all other points in $P$. That is, it penalizes edges, even if the defined quadrilateral is locally Delaunay.
We present two new metrics below, both of which are based on the same principle: as an edge of the Delaunay triangulation must be the chord of a circle that does not strictly contain any other vertices of $P$, we consider how much we much deform a circle to find such an empty deformed circle instead. The difference between our two metrics is how they perform this deformation. Note that in both cases, higher scores mean closer to Delaunay, contrasting quadrilateral-based metrics.

\mypar{Lens}
When an empty circle exists, it can be seen as two circulars arcs, one on each side of the edge. The arc in the one halfplane with respect to the line spanned by $\edge{u}{v}$ excludes from its interior all vertices of $P$ in that same halfplane. With our Lens metric, we reverse this idea to deform our circle into a lens; the ``sharpness'' of this lens is then our metric.

Specifically, consider all points $P' \subseteq P$ that lie on one side of the line spanned by edge $\edge{u}{v}$. The circular arc from $u$ to $v$ through some point of $P'$ that is minimal in terms of segment area (or equivalently, arc length or central angle) is the largest arc possible on this side of the edge that does not contain any point of $P'$ in its segment area. Let $a$ and $a'$ denote the two circular arcs obtained this way. We consider the ``interior'' angle $\alpha$ between the tangent directions of $a$ and $a'$ at $u$ as our metric (see Fig.~\ref{fig:lens}). If $\alpha \geq \pi$, the edge is a Delaunay edge and we cap the metric to $\pi$ to satisfy criterion~\ref{c:dtopt}; for any non-Delaunay edge, $\alpha < \pi$.

\begin{figure}[h]
    \centering
    \includegraphics[page=4]{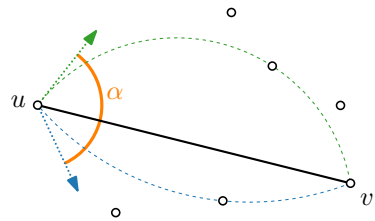}
    \caption{Lens metric. We find the largest empty arc on both sides of $\edge{u}{v}$, and measure the angle $\alpha$ between their tangent directions at $u$.}
    \label{fig:lens}
\end{figure}

We can easily compute the metric for a triangulation in quadratic time: for each edge, find the point that gives the smallest arc on both sides; then compute the angle at which the arcs touch.

\mypar{Shrunk Circle}
With our second metric, we consider a different type of deformation: scaling. That is, we aim to find a smaller empty circle that relates to the edge we wish to measure. For a non-Delaunay edge, such a circle cannot have $\edge{u}{v}$ as a chord, but it may still overlap the edge. As a Delaunay edge would be overlapped fully by an empty circle, we define our Shrunk Circle metric as the maximal fraction of the edge that can be overlapped by an empty circle $C$ (see Fig.~\ref{fig:shrunk_circle})). Note that, whereas the Lens metric considers both halfplanes independently, this is not the case here.

\begin{figure}[t]
    \centering
    \includegraphics[page=5]{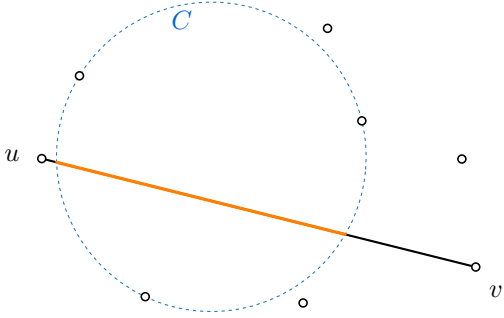}
    \caption{Shrunk Circle metric. We find the empty circle $C$ that covers edge $\edge{u}{v}$ most (orange).}
    \label{fig:shrunk_circle}
\end{figure}

To compute this measure for an edge $\edge{u}{v}$ of triangulation $T$ on point set $P$, we observe the following: if an empty circle does not touch at least two points of $P$, we can readily grow it to a circle $C'$ that does touch two points and strictly encompasses the previous circle and thus the overlap with $\edge{u}{v}$ does not decrease. In other words, we need to consider only maximal circles with centers on the Voronoi diagram $P$. The lemma below argues that the overlap along one edge of the Voronoi diagram is convex and thus we need to test only its endpoints. In fact, we need to test only its vertices since unbounded edges occur only for pairs of points on the convex hull and thus the overlap of such circles with $\edge{u}{v}$ are determined only by the circle's part inside the convex hull -- which is maximized at bounded side of the unbounded edge. We can thus first compute the Delaunay triangulation for $P$, and then for every edge $\edge{u}{v}$ of $T$, test the circumcircles of the Delaunay triangulation explicitly. This readily gives an quadratic-time algorithm for the overall metric.

\begin{lemma}\label{lem:convex}
Let $e$ be an edge of the Voronoi diagram of $P$. The overlap of maximal circles along $e$ with edge $\edge{u}{v}$ is a convex function.
\end{lemma}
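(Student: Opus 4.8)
The plan is to set up coordinates, reduce the overlap to the length of the chord that the moving circle cuts on $\edge{u}{v}$, and then show that this chord length is a convex function of the position along $e$. First I would place the line through $u$ and $v$ on the $x$-axis, writing $u=(u_x,0)$ and $v=(v_x,0)$ with $u_x<v_x$. The edge $e$ lies on the perpendicular bisector of the two sites $s_1,s_2\in P$ that it separates, so I can parametrize the circle's center affinely as $c(t)=c_0+t\mathbf{d}$ with $\mathbf{d}=(d_x,d_y)$ a unit vector, and its radius by $r(t)=d(c(t),s_1)$. Then $r(t)^2$ is a quadratic in $t$ with leading coefficient $1$, the center's height $y(t)$ is affine, and the half-length of the chord the disk cuts on the $x$-axis is $\rho(t)=\sqrt{r(t)^2-y(t)^2}$, whose radicand $\rho(t)^2$ is a quadratic in $t$ with leading coefficient $1-d_y^2=d_x^2\ge 0$.

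Next I would use emptiness to pin down the overlap exactly. Since $u,v\in P$ and the maximal circle is empty along $e$, neither $u$ nor $v$ lies strictly inside it, so the chord $[x(t)-\rho(t),\,x(t)+\rho(t)]$ can never strictly contain $u_x$ or $v_x$; hence it either lies inside $[u_x,v_x]$ -- in which case the overlap is the full chord length $2\rho(t)$ -- or is disjoint from the open edge, giving overlap $0$, with partial straddling impossible. Moreover, a chord of positive length can leave $[u_x,v_x]$ only when one of its endpoints reaches $u_x$ or $v_x$, i.e.\ when $u$ or $v$ lies on the empty circle; such a center is a Voronoi vertex and therefore an endpoint of $e$, never an interior point. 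Consequently, along $e$ the overlap is either identically $0$ or equal to $2\rho(t)$ throughout, and it suffices to prove that $\rho$ is convex.

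Finally I would establish convexity of $\rho$. Completing the square writes $\rho(t)^2=d_x^2(t-t_0)^2+C$ with $C=\min_t\rho(t)^2$, and a direct computation gives $\rho''(t)=d_x^2\,C\,\rho(t)^{-3}$, so $\rho$ is convex precisely when $C\ge 0$, that is, when every circle through $s_1$ and $s_2$ meets the line through $u$ and $v$. I expect this nonnegativity to be the crux of the argument and the main thing to justify carefully, since it is the point where the geometry of the configuration enters. I would argue it from the fact that $s_1$ and $s_2$ lie on opposite sides of the line through $u$ and $v$, so the segment $s_1s_2$ crosses that line; as $s_1$ and $s_2$ lie on every such circle, this crossing segment is a chord of the disk, forcing the line to meet the disk and hence $\rho(t)^2\ge 0$ for all $t$, so $C\ge 0$. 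With $\rho$ convex, $2\rho$ is convex, and therefore the overlap along $e$ (being either $2\rho$ or the constant $0$) is convex, completing the proof.
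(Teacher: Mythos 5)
You have correctly isolated the crux of the argument --- the nonnegativity $C\ge 0$, i.e.\ that every circle through the two sites meets the line $\ell$ --- but your justification of it is false, and the gap cannot be repaired. Nothing forces the two sites $s_1,s_2$ defining $e$ to lie on opposite sides of the line through $u$ and $v$: the edge $\edge{u}{v}$ is an arbitrary triangulation edge and need not separate the sites of an arbitrary Voronoi edge. Concretely, take $P=\{p,q,u,v\}$ with $p=(0,1)$, $q=(0,-1)$, $u=(\tfrac14,\tfrac54)$, $v=(\tfrac{23}{10},\tfrac54)$. These four points are in convex position and in general position, so $\edge{u}{v}$ is a hull edge and lies in every triangulation; the Voronoi edge $e$ of the pair $p,q$ is the ray $\{(m,0): m\le \tfrac54\}$ (append a distant fifth point such as $(-10,0)$ if you want $e$ bounded; nothing changes). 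Both sites lie strictly below $\ell: y=\tfrac54$, so in your notation $C=-\tfrac{9}{16}<0$: the maximal circle centered at $(m,0)$, of radius $\sqrt{m^2+1}$, misses $\ell$ entirely for $|m|<\tfrac34$. Along $e$ the overlap is $0$ for $m\le\tfrac34$ and equals the full chord length $2\sqrt{m^2-\tfrac{9}{16}}$ for $\tfrac34\le m\le\tfrac54$ (one checks the chord stays inside the segment), which is \emph{strictly concave} where positive: its values at $m=\tfrac34,\,1,\,\tfrac54$ are $0$, $\sqrt{7}/2\approx 1.32$, and $2$, and $1.32>(0+2)/2$. So not only does your ``opposite sides'' step fail --- the lemma's conclusion itself fails in this configuration, so no alternative justification of $C\ge0$ can exist. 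Your intermediate dichotomy (``along $e$ the overlap is identically $0$ or identically $2\rho$'') breaks in the same example for the same root cause: the circle detaches from $\ell$ on an interior portion of $e$, and when the chord reappears it may sit differently relative to the segment.

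For what it is worth, you have pinpointed exactly the spot where the paper's own proof is flawed: it asserts that since the discriminant $D$ is a convex quadratic in $m$, so is (a constant times) $\sqrt{\max\{0,D\}}$. For a convex quadratic with leading coefficient $\alpha$ and minimum $\delta$ one has $(\sqrt{D})''=\alpha\delta/D^{3/2}$ on each branch where $D>0$, so the implication holds precisely when $\delta\ge 0$ --- which is precisely your condition $C\ge 0$, i.e.\ precisely when $\ell$ meets every circle through $s_1$ and $s_2$ --- and the square root is strictly concave on its branches otherwise. What survives, and what the algorithm actually needs, is weaker than convexity. Your observation that a positive-length chord can cross $u$ or $v$ only when the empty circle passes through that point, forcing a Voronoi vertex, is correct; combined with the fact that the chord length is monotone on each branch where $D>0$ (and genuinely convex when $C\ge 0$), it shows the overlap is quasi-convex along $e$, hence maximized at an endpoint of $e$. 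That restores the paper's algorithmic claim that only Voronoi vertices need to be tested, even though the lemma as stated, your proof, and the paper's proof all fail on configurations like the one above.
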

\begin{proof}
Let $p$ and $q$ denote the points defining $e$. As the problem is invariant under translation, rotation and scaling, assume without loss of generality that $p = (0,1)$ and $q=(0,-1)$; this implies that $e$ is along the horizontal axis.
Maximal circles $C'$ are thus fully defined by their center $(m,0)$.
Let $\ell \colon y = a x + b$ denote the line spanned by $\edge{u}{v}$.
We first consider the overlap of $C'$ with $\ell$ as a function of $m$.

The two intersection points of $\delta C'$ with $\ell$ are obtained by solving $|(x,ax+b)-(m,0)|_2 = |(c,0)-p|_2$, which simplifies to $(a^2+1) x^2 + 2(ab - m)x + (b^2 - 1) = 0$. The difference in x-coordinates between the two solutions to this quadratic equation are given by $\sqrt{D}/(a^2+1)$, where $D = 4(ab - m)^2 - 4 (a^2+1) (b^2 - 1)$ is the discriminant; the amount over overlap with $\ell$ is thus $a \sqrt{D}/(a^2+1)$.
As $D$ is a convex quadratic function in $m$, so is $a \sqrt{D}/(a^2+1)$. To note, $D$ is negative if the circle does not overlap $\ell$, in which case the overlap is trivially zero: technically, the overlap is thus a function $a \sqrt{\max\{0,D\}}/(a^2+1)$.
This proves that the overlap with $\ell$ is a convex function.

Since we are only interested in the values of $m$ along $e$, that is, for which $C'$ is empty, the overlap with $\ell$ is either fully within $\edge{u}{v}$ or fully outside. Since the intersections behave continuously, the overlap with $\edge{u}{v}$ thus behaves convex as well.
\end{proof}

\subsection{Triangle-based metrics}

For our triangle-based metrics, we measure a triangle $\xtriangle{u}{v}{w}$ of $T$ in context of all other points in $P$. Analogous to our edge-based metrics, we deform the circumcircle $C$ of $\xtriangle{u}{v}{w}$ (which is empty for a Delaunay triangle) to find a suitable empty deformed circle.  In contrast to the edge-based metrics, we now have a single fixed circle $C$ which guides (and constrains) our metric. Specifically, we restrict our deformations to be contained in the circumcircle. As with edge-based metrics, higher scores mean closer to Delaunay.

\begin{figure}[b]
    \centering
    \includegraphics[page=6]{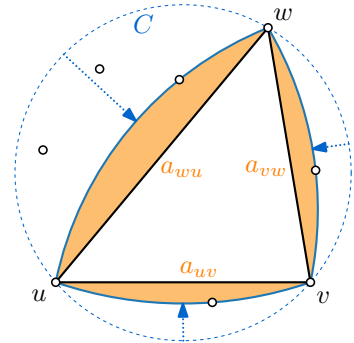}
    \caption{Triangular Lens metric. We compute the largest arc inside $C$ that does not contain any points in its segment area for each edge of the triangle. Note that the three arcs are determined independently.}
    \label{fig:triangular_lens}
\end{figure}

\mypar{Triangular Lens}
Similarly to the Lens metric, when a empty circumcircle exists, it can be considered as three circular arcs on the outside of the triangle. We again replace each arc by the largest arc that is contained in the arc of the circumcircle and that contains no other points of $P$.

For our metric, we measure the fraction of the area in $C$ but outside $\xtriangle{u}{v}{w}$ that is covered by the constructed lens. Let $a_{uv}$, $a_{vw}$ and $a_{wu}$ denote the segment areas of the three arcs constructed (see Fig.~\ref{fig:triangular_lens}). Interpreting $C$ and $\xtriangle{u}{v}{w}$ as their enclosed areas, the score is then
\[ \frac{a_{uv}+a_{vw}+a_{wu}}{C - \xtriangle{u}{v}{w}}. \]

We subtract the triangle to be able to assess and to meaningfully aggregate both skinny and fat triangles; this also means that each triangle's score lies in $(0,1]$, where a score of 0 is only achievable in the limit (a point converging on each edge).

We easily compute the metric for a triangulation in quadratic time: for each triangle, find for each edge the smallest arc by testing all other vertices; then compute the segment areas and compute the resulting fraction.

\mypar{Shrunk Circumcircle}
As with our edge-based measures, the Triangular Lens metric deforms independently in the three segment areas defined by the edges. With our Shrunk Circumcircle metric, we consider a variant that considers all points simultaneously instead.

Specifically, we aim to find an empty circle $C'$ that is contained in the circumcircle $C$ and intersects all three sides of $\xtriangle{u}{v}{w}$.
The score we associate with this circle $C'$ is $\frac{C' - I}{C - I}$, where $I$ is the area of the inscribed circle and we identify $C'$ and $C$ with the areas of these circles as well (see Fig.~\ref{fig:shrunk_circumcircle} (left)).

Note that we subtract the areas (numbers), we do not use set subtraction (difference of the shapes) and their resulting area; in our figures, we use examples where the inscribed circle is contained in the largest empty circle such that these two variants are the same, to easily visualize the score, but this is not necessarily the case.

\begin{figure}[b]
    \centering
    \includegraphics[page=7]{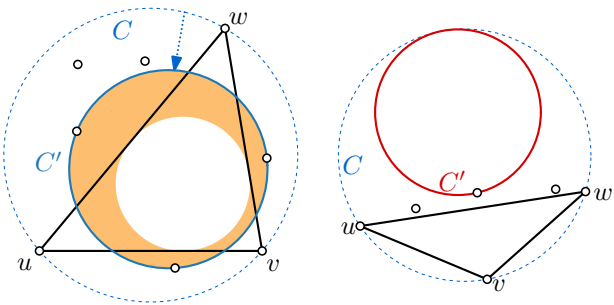}
    \caption{Shrunk Circumcircle metric. (left) We look for the empty circle $C'$ in $C$ of maximal radius. The score (orange) is its area minus the inscribed circle. (right) $C'$ must intersect all three sides, for the measure to be meaningful when points are close to the triangle.}
    \label{fig:shrunk_circumcircle}
\end{figure}

The Shrunk Circumcircle metric is the circle $C'$ with the highest associated score.
For any triangle, $I$ and $C$ are constant, and thus it is simply the largest circle satisfying the two constraints.
We subtract $I$ as a lower bound on the largest empty circle (since the triangle itself must be empty), and divide by $C - I$ to normalize the the score to the range $(0,1]$ independent of triangle shape and make the metric scale invariant.

We constrain $C'$ to lie within $C$, such that the area we count is always an actual circle: if $C'$ was to be allowed to grow outside $C$, it would either count area outside of the area that the Delaunay triangulation ``considers'', or the region we use the area of is not a circle in itself.

We constrain $C'$ to intersect all three sides to make the score meaningful, even for very skinny triangles: otherwise, the largest empty circle may simply be fully in one segment, and even yield a relatively high score, though points are very close to the triangle edges (see Fig.~\ref{fig:shrunk_circumcircle} (right)).

To compute the metric for a triangle $\xtriangle{u}{v}{w}$ in context of $P$, we first argue about the properties of largest $C'$.
Circle $C'$ must either touch two points strictly inside $C$, or touch one such point and the boundary of $C$ -- otherwise, we can grow the circle into one that strictly encloses $C'$ while still adhering to the constraints. As such, its center lies on the edges of the \emph{local Voronoi diagram} of circle $C$ and the points of $P$ inside $C$ (see Fig.~\ref{fig:curvedVoronoiCompute}). Curved segments for circle centers equidistant to the boundary of $C$ and one of the points inside $C$, and straight segments defined by two points inside $C$. These curved segments are elliptical: the distance from a point on this curve to the center of $C$ and to the contained point of $P$ sums up to the radius of $C$. By construction, segments incident to the outerface of the diagram are curved segments; interior segments are straight.

\begin{figure}[b]
    \centering
    \includegraphics{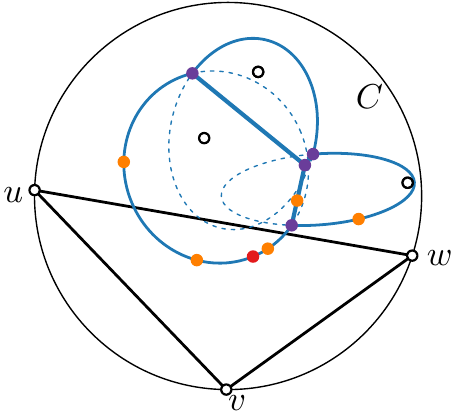}
    \caption{The local Voronoi diagram (blue) consists of elliptical and straight segments. The critical placements: endpoints of straight segments (purple), the furthest point on an ellipse (red), and circle touching a triangle edge (orange).}
    \label{fig:curvedVoronoiCompute}
\end{figure}

In case the entire ellipse is part of the local Voronoi diagram (and hence in fact the only curve), we may consider it an elliptical arc, starting and ending at the furthest point from its defining point.
Then, both along a straight segment as well as along a curved segment of the local Voronoi diagram, the radius of $C'$ behaves unimodally: the function has a single (local and global) minimum at the closest point of the segment to the defining point(s). Along a segment, the maximal circle is hence either one of a small set of \emph{critical placements} (if they exist): the endpoints of the segment, the furthest point of the ellipse, or one of the $O(1)$ placements where the defined circle touches a triangle edge. We further observe that the straight segments are a subset of the (full) Voronoi diagram of $P$.

With the insights above, we can define an algorithm to compute the Shrunk Circumcircle metric for all triangles in $T$ in quadratic time. First, we compute the (full) Voronoi diagram of $P$. Then, for each triangle, we compute the metric as follows, in linear time. If there are no points inside the circumcircle, we have a Delaunay triangle and the metric is 1. If the circumcircle contains one point, then the local Voronoi diagram is a single ellipse and we test the critical placements. If the circumcircle contains more than one point, we find all straight segments local Voronoi diagram by traversing the full Voronoi diagram, testing the critical placements. For any straight segment found as such, we first shorten it to ensure that it does not define circles extending outside $C$. Then we test its critical placements. If the segment was shortened, we know that its defining points also define a segment of the local Voronoi diagram. For these segments, we also test the critical placements. By ``testing'' in the above, we mean testing whether the defined circle intersects all three sides (it is contained in $C$ by construction), and if so, see if its radius is larger than any circle found so far. As we test $O(1)$ critical cases per segment, computing the metric takes linear time per triangle (after computing the full Voronoi diagram in $O(P \log P)$ time once) and thus quadratic time for the entire triangulation.

\section{Comparing metrics}

In the previous section, we defined seven near-Delaunay metrics. These capture in different ways how close to Delaunay a triangulation is. Though future work may endeavor to establish a standard here, it is not a-priori clear which measure is ``the best'': this likely depends on context, that is, the purpose of evaluating a triangulation. The question we ask here, is whether these metrics actually capture different facets of being ``near-Delaunay''. Specifically, given two metrics, do they always evaluate the same triangulation to be closer to Delaunay, for any given pair of triangulations?

Considering a single decomposition element to answer this question, there is clear distinction between the forms of decomposition: quadrilateral-based metrics do not take other points into account, contrasting the other two forms; edge-based metrics use angles and lengths, whereas triangle-based metrics use area ratios instead. We thus focus here on comparisons between metrics using the same form of decomposition. We study how the metrics differ from each other, and what properties they value. Specifically, for each comparison of metrics $\mu$ and $\mu'$, we show that there are triangulations $T$ and $T'$, such that $\mu(T) = \mu(T')$ and $\mu'(T) > \mu'(T')$. Such an example answers the above question of making the same judgments negatively.

\mypar{Quadrilateral-based metrics}
Our three quadrilateral-based metrics use only four points in measuring one element. Yet, we show that each metric evaluates a different facet of being near-Delaunay. Note that a (convex) quadrilateral is immediately a triangulation on this point set.

The Opposing Angles metric does not consider how the two summed angles are distributed over the two triangles. Even a very flat triangle (angle approaching $\pi$) can be offset a very tall triangle (angle approaching 0). Yet, these angles behave very differently from the distances between the circumcenters and thus the Dual Edge Ratio and Dual Area Overlap.
Thus, we can construct two quadrilaterals (see Fig.~\ref{fig:opposing_angles_diff}): one has very similar triangles, while the other has very different triangles, but both have the same edge $\edge{u}{v}$. Whereas the Opposing Angles metric scores the same on both, we see readily that the Dual Edge Ratio and Dual Area Overlap score differently.

\begin{figure}[b]
    \centering
    \includegraphics[page=1]{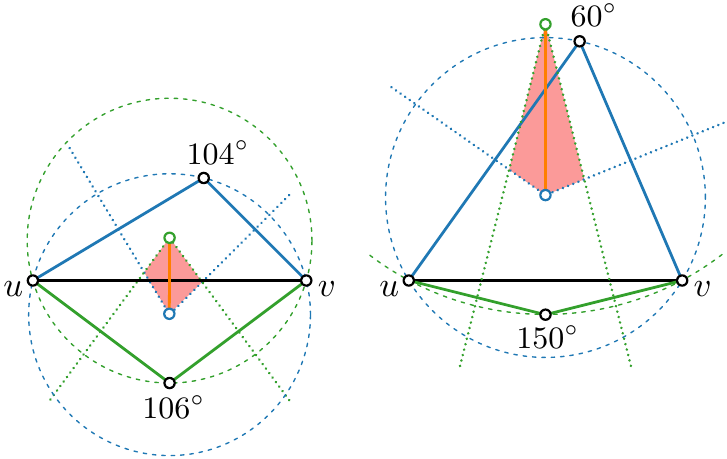}
    \caption{Two quadrilaterals with equal (sum of) Opposing Angles, but different Dual Edge Ratio (orange) and Dual Area Overlap (red).}
    \label{fig:opposing_angles_diff}
\end{figure}
\begin{figure}[b]
    \centering
    \includegraphics[page=2]{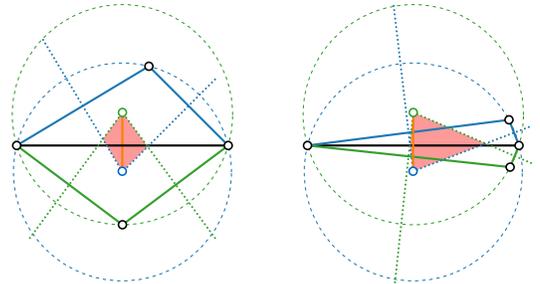}
    \caption{Two quadrilaterals with equal Dual Edge Ratio (orange) and different Dual Area Overlap (Red).}
    \label{fig:dual_metrics_diff}
\end{figure}

Comparing the Dual Edge Ratio to the Dual Area Overlap, we see that the former is based purely on the distance between the circumcenters, whereas the latter depends also on the shape of the triangles. This allows us to construct another two quadrilaterals (see Fig.~\ref{fig:dual_metrics_diff}): we move the opposing points along their defined circumcircles so as to not move the circumcenters, while changing the area of overlap.

\mypar{Edge-based metrics}
We have two metrics here: Lens and Shrunk Circle. As already mentioned in the previous sections, these differ by how they treat points on the different sides of the edge. Whereas the Lens measure treats these independently, the Shrunk Circle measure requires an integrated consideration of all points. This allows us to construct two triangulations again, using only four points in convex position (see Fig.~\ref{fig:edge_metrics_diff}): we keep the Lens measure constant, by moving one of the vertices over the defining arc, thus keeping the tangent at $u$ constant as well; in contrast, the Shrunk Circle metric in the first example uses a circle that encompasses a large part of the defining arc of the Lens -- by placing the point there, we can force the circle to shrink further and cover less of the edge $\edge{u}{v}$.

\begin{figure}[h]
    \centering
    \includegraphics[page=3]{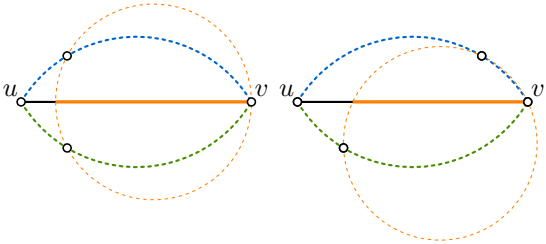}
    \caption{Two edges with equal Lens (blue, green) different Shrunk Circle (orange).}
    \label{fig:edge_metrics_diff}
\end{figure}

\begin{figure}[b]
    \centering
    \includegraphics[page=4]{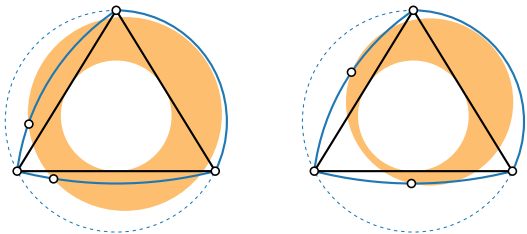}
    \caption{Two triangles with equal Triangular Lens (blue) and different Shrunk Circumcircle (orange).}
    \label{fig:triangle_metrics_diff}
\end{figure}

\mypar{Triangle-based metrics}
Finally, we consider the two triangle-based metrics. They behave somewhat similarly with respect to each other as the edge-based metrics do: whereas the Triangular Lens works with independent arcs per edge, the Shrunk Circumcircle uses a single circle that must overlap each of the three edges. We can thus follow the same principle to show two triangles in a point set that score equally on the Triangular Lens, but differently on the Shrunk Circumcircle, by moving the points along the arcs of the former to force the defining circle of the latter to shrink (see Fig.~\ref{fig:triangle_metrics_diff}).

\section{Experiments}

Here we explore what the most Delaunay-like triangulation of a point set looks like, for each of our metrics, given different constraints that can force a triangulation to be non-Delaunay. For a point set, we try all triangulations that adhere to the given constraints to find the \emph{optimized} triangulation that scores best according to each metric. We use small sets of only 10 points, to ensure that this is feasible.

\mypar{Constraints}
We consider four types of constraints: constrained edges, a lower bound or upper bound on the total edge length, and a maximum degree.

For constrained edges, we are given a set of edges that must be included in the triangulation. The constrained edges are handpicked edges, which we consider to be ``interesting''. Most importantly the constrained edges are not in the Delaunay triangulation, are not chord of the convex hull, and lie somewhat close to another point.
This is the same constraint as for the CDT and hence we may also compare how this structure compares to our result.

The length of the triangulation is the sum over the lengths of its edges. As the Delaunay triangulation neither minimizes nor maximizes the length, we can use an upper bound (maximum length) or a lower bound (minimum length), to constrain the triangulation to be non-Delaunay. Specifically, we use either 1.2 times the length of the Delaunay triangulation as a lower bound, or 0.8 times the length of the Delaunay triangulation as an upper bound. However, since the Delaunay triangulation is inherently short for many point sets, such triangulations often do not exist. We hence also create a special point set where the Delaunay triangulation is longer than most other triangulations, to see the differences between metrics.

Bounding the maximum degree means we consider only triangulations for which all vertices have degree at most a given constant. We use maximum degree 5 in our experiment. This number is sometimes exceeded by the Delaunay triangulation, but still allows for different triangulations of the same point set.

\mypar{Aggregation}
For our optimized triangulations, we have to evaluate each metric on the entire triangulation. So far, we have left the method of aggregation out of our considerations. For the purpose of our experiment, we consider two methods: using the sum and using bottleneck values.

The sum is a natural way to aggregate the values of a triangulation $T$, such that all decomposition elements (quadrilaterals, edges or triangles) have an impact on the triangulation. For quadrilateral-based metrics, we minimize the sum; otherwise, we maximize it.

Using bottleneck values means we focus on the worst-case element (maximum for quadrilateral-based metrics, minimum for our other metrics). The goal of this approach is to let the worst value be the deciding factor. A problem with this method is that there can be two different triangulations that score the same, as they both include the same bottleneck. Hence, we compare triangulations lexicographically: this means that the elements are sorted (increasingly for quadrilateral-based, decreasingly for other metrics) and the first element in which the triangulations differ, determines which triangulation is closer to Delaunay.

\mypar{Results}
In the appendix, we show various tables with our experimental results; the most relevant excerpts are shown in Table~\ref{tab:overview}. For the constrained edges, we show the CDT for comparison, using red edges to indicate the constraints. For the other constraints, we show the Delaunay triangulation for comparison. In both cases, the optimized triangulations use green markings to show edges that are different from the comparison.

Table~\ref{tab:cdt_sum} shows optimized triangulations with constrained edges using sum aggregation. We observe that most metrics are always similar to the CDT for such random point sets. The only differences are for Dual Edge Ratio and Dual Area Overlap: each time Dual Edge Ratio is different from the CDT, Dual Area Overlap is also different, though not necessarily vice versa. One exception is for Shrunk Circle. This suggests to us that, at least in small random cases, our metrics capture near-Delaunay quite well, as the CDT is an established way of getting a Delaunay-like triangulation, for these constraints.
Using bottleneck aggregation (Table~\ref{tab:cdt_bottleneck}), we observe that the quadrilateral metrics often behave similarly and are regularly different from the CDT. With one exception, the other optimized triangulations are again all identical to the CDT.

Tables~\ref{tab:minlength_sum} and~\ref{tab:minlength_bottleneck} show the optimized triangulations with a minimum-length constraint. We can compare the different optimizations to empirically evaluate the behavior of the metrics. We observe that in the sum aggregation quadrilateral-based metrics generally behave differently than the other metrics. However, this distinction is less clear using the bottleneck aggregation. Furthermore, with bottleneck aggregation the Shrunk Circle metric often produces a unique triangulation. Note that this is the same case that was excepting from the general trend for constrained edges.

Table~\ref{tab:maxlength_sum} and~\ref{tab:maxlength_bottleneck} show the optimized triangulation with a maximum-length constraint. If no such triangulation exists, we simply show the Delaunay triangulation. In every point set except the one specifically created for this constraint, there was no such triangulation. For this case, we observe that the Dual Edge Ratio and Dual Area overlap produce different triangulations from the other metrics. In the bottleneck variant, we also see that the Lens measure procedures another distinct triangulation.

Table~\ref{tab:maxdegree_sum} and~\ref{tab:maxdegree_bottleneck} shows the optimization with a maximum-degree constraint. In only few of the random point sets, the maximum degree exceeds the constraint. Most metrics flip the same edge, the notable exception being the Shrunk Circumcircle metric which flips two different edges for one case (shown). The two other point sets shown were specifically created for this case.
We see considerable differences here between metrics, as well as between sum and bottleneck aggregation.

With the wheel example here, we may perhaps see the one case where there is a somewhat clear case of a ``visually nice'' optimized triangulation. Specifically, a somewhat regular pattern emerges for Lens and Shrunk Circle using sum aggregation as well as all metrics except for the Dual Edge Ratio and Dual Area Overlap using bottleneck aggregation -- in the other cases we see very skinny triangles occurring. Whether this form of being near-Delaunay, however, is the most useful remains possibly context dependent. This hints at bottleneck aggregation perhaps being more useful -- indeed, it matches the traditional lexicographic optimization of the minimum angle that the Delaunay triangulation achieves.

\section{Discussion}

With a suite of metrics, we now have a common framework to think about situations where we need a good (Delaunay-like) triangulation to compute with, but constraints such as bounded degree prevent us from actually using the Delaunay triangulation itself. We have shown how these metrics differ among themselves as well as how they result in different triangulations when considering optimizing under the constraint of including given edges (like the Constrained Delaunay Triangulation).

We leave to future work to establish efficient algorithms to compute the best triangulation (given one of the metrics) given a set of constraints. It may further be interesting to investigate how humans (or computational geometers) assess the quality of a triangulation, how close it is to being Delaunay, and how this relates to the metrics provided here. This may uncover that at least to match intuition, we may need combinations of metrics, or possibly a different metric altogether.

A possible avenue for further metrics is to consider the empty ellipse as the generalization of the empty circle, trying to optimize for the ellipse's aspect ratio. This seems mostly relevant for edge-based or quadrilateral-based metrics, since triangles do not necessarily allow for empty ellipses passing through their corners. But for a single edge or quadrilateral, a general ellipse seems to provide too much freedom, allowing for somewhat arbitrary-seeming results. Restricting one of the axes of the ellipse to be parallel to the defining edge may offer a solution; yet, this seems to cause a counterintuitive relation between the two sides of the edge.

\begin{table*}[b]
    \centering
    \caption{Excerpt of the most prominent results.}
    \label{tab:overview}
    \small
    \begin{tabular}{c}
        \textbf{Optimizing with constrained edges} (Sum aggregation)
    \end{tabular}
    \begin{tabu} to \textwidth {X[1,c,m]|X[1,c,m]|X[1,c,m]|X[1,c,m]|X[1,c,m]|X[1,c,m]|X[1,c,m]|X[1,c,m]}
            \toprule
            CDT & Opposing Angles & Dual Edge Ratio & Dual Area overlap & Lens & Shrunk Circle & Triangular Lens & Shrunk Circumcircle\\
            \midrule
            \includegraphics[page=1, width=1\linewidth]{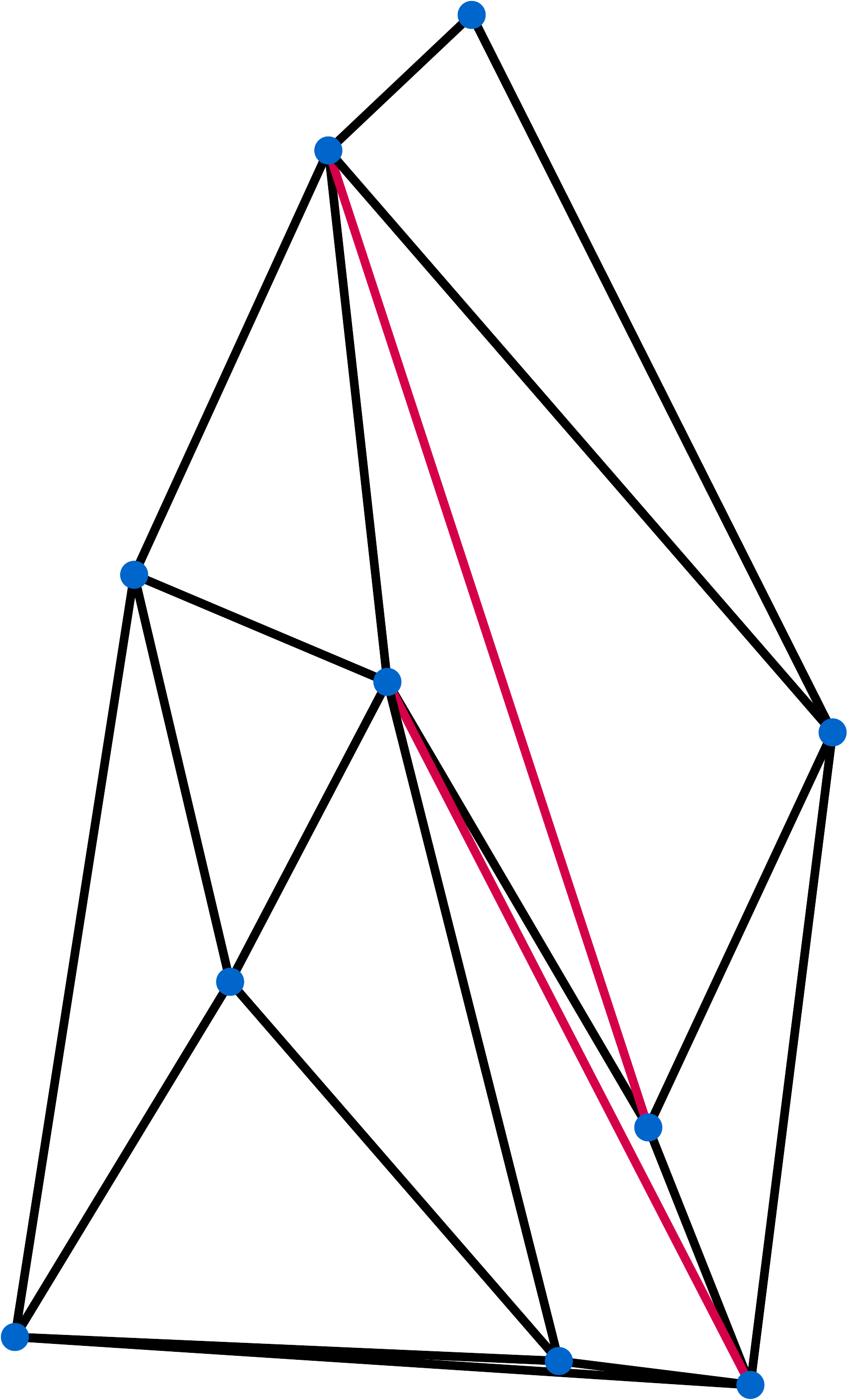} &
            \includegraphics[page=2, width=1\linewidth]{figures/cdt6sum.pdf} &
            \includegraphics[page=3, width=1\linewidth]{figures/cdt6sum.pdf} &
            \includegraphics[page=4, width=1\linewidth]{figures/cdt6sum.pdf} &
            \includegraphics[page=5, width=1\linewidth]{figures/cdt6sum.pdf} &
            \includegraphics[page=6, width=1\linewidth]{figures/cdt6sum.pdf} &
            \includegraphics[page=7, width=1\linewidth]{figures/cdt6sum.pdf} &
            \includegraphics[page=8, width=1\linewidth]{figures/cdt6sum.pdf}\\
            \bottomrule
    \end{tabu}
    \begin{tabular}{c}
         \textbf{Optimizing with constrained edges} (Bottleneck aggregation)
    \end{tabular}
    \begin{tabu} to \textwidth {X[1,c,m]|X[1,c,m]|X[1,c,m]|X[1,c,m]|X[1,c,m]|X[1,c,m]|X[1,c,m]|X[1,c,m]}
            \toprule
            CDT & Opposing Angles & Dual Edge Ratio & Dual Area overlap & Lens & Shrunk Circle & Triangular Lens & Shrunk Circumcircle\\
            \midrule
            \includegraphics[page=1, width=1\linewidth]{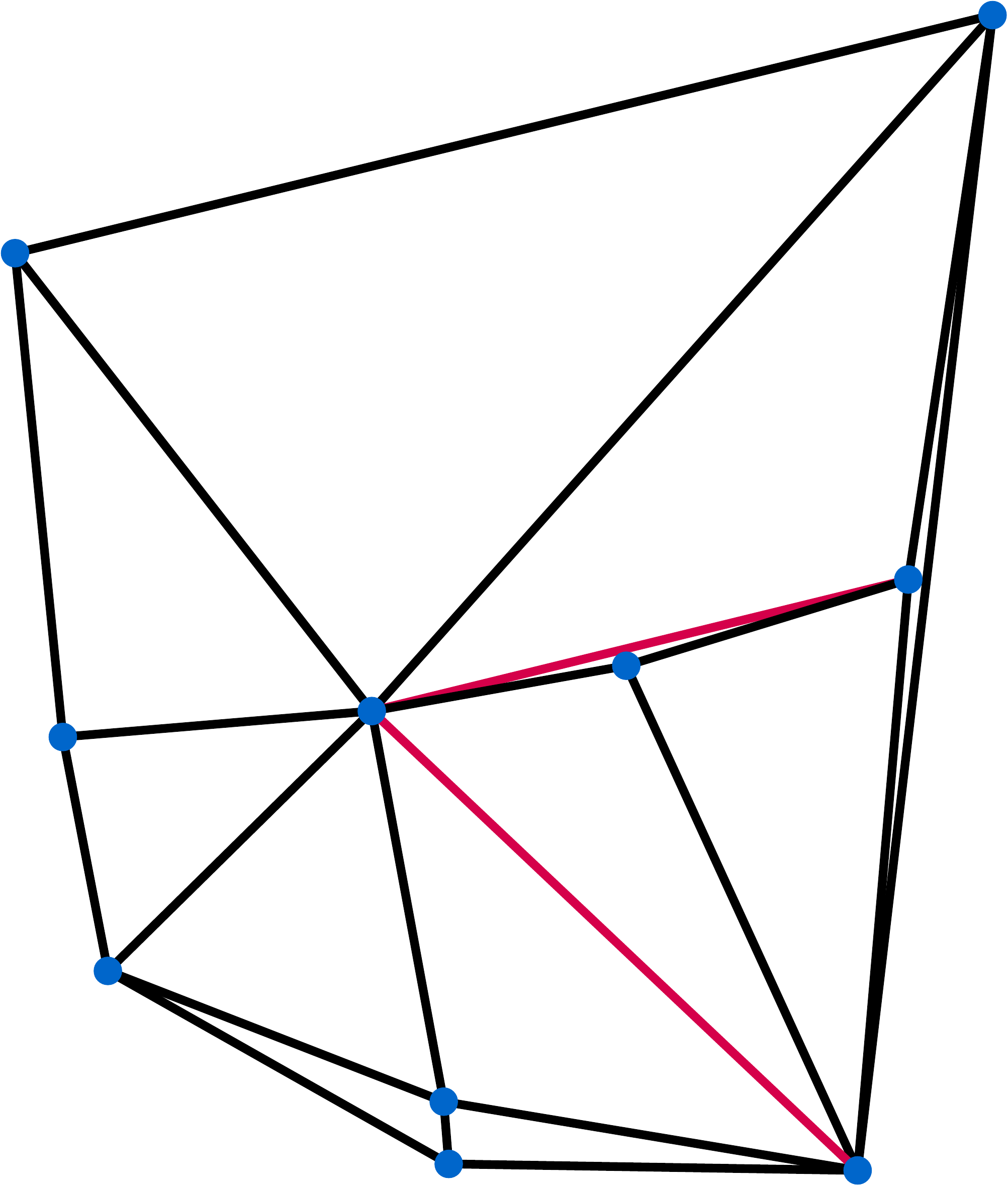} &
            \includegraphics[page=2, width=1\linewidth]{figures/cdt4max.pdf} &
            \includegraphics[page=3, width=1\linewidth]{figures/cdt4max.pdf} &
            \includegraphics[page=4, width=1\linewidth]{figures/cdt4max.pdf} &
            \includegraphics[page=5, width=1\linewidth]{figures/cdt4max.pdf} &
            \includegraphics[page=6, width=1\linewidth]{figures/cdt4max.pdf} &
            \includegraphics[page=7, width=1\linewidth]{figures/cdt4max.pdf} &
            \includegraphics[page=8, width=1\linewidth]{figures/cdt4max.pdf}\\
            \bottomrule
    \end{tabu}
    \begin{tabular}{c}
         \textbf{Optimizing with minimum length 1.2 Delaunay length} (Sum aggregation)
    \end{tabular}
    \begin{tabu} to \textwidth {X[1,c,m]|X[1,c,m]|X[1,c,m]|X[1,c,m]|X[1,c,m]|X[1,c,m]|X[1,c,m]|X[1,c,m]}
            \toprule
            DT & Opposing Angles & Dual Edge Ratio & Dual Area overlap & Lens & Shrunk Circle & Triangular Lens & Shrunk Circumcircle\\
            \midrule
            \includegraphics[page=1, width=1\linewidth]{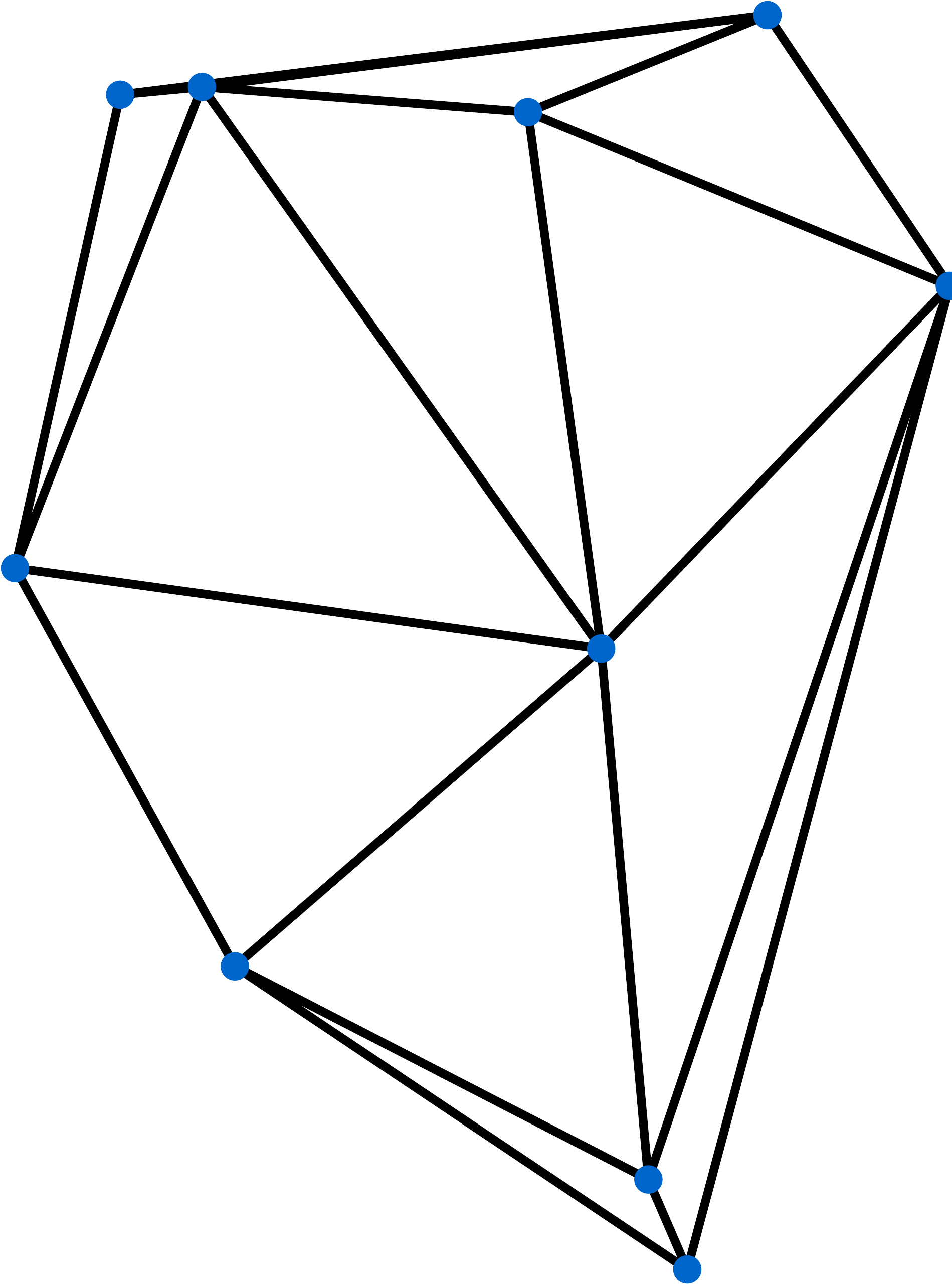} &
            \includegraphics[page=2, width=1\linewidth]{figures/minLength0sum.pdf} &
            \includegraphics[page=3, width=1\linewidth]{figures/minLength0sum.pdf} &
            \includegraphics[page=4, width=1\linewidth]{figures/minLength0sum.pdf} &
            \includegraphics[page=5, width=1\linewidth]{figures/minLength0sum.pdf} &
            \includegraphics[page=6, width=1\linewidth]{figures/minLength0sum.pdf} &
            \includegraphics[page=7, width=1\linewidth]{figures/minLength0sum.pdf} &
            \includegraphics[page=8, width=1\linewidth]{figures/minLength0sum.pdf}\\
            \bottomrule
    \end{tabu}
    \begin{tabular}{c}
         \textbf{Optimizing with maximum length 0.8 Delaunay length} (Bottleneck aggregation)
    \end{tabular}
    \begin{tabu} to \textwidth {X[1,c,m]|X[1,c,m]|X[1,c,m]|X[1,c,m]|X[1,c,m]|X[1,c,m]|X[1,c,m]|X[1,c,m]}
            \toprule
            DT & Opposing Angles & Dual Edge Ratio & Dual Area overlap & Lens & Shrunk Circle & Triangular Lens & Shrunk Circumcircle\\
            \midrule
            \includegraphics[page=1, width=1\linewidth]{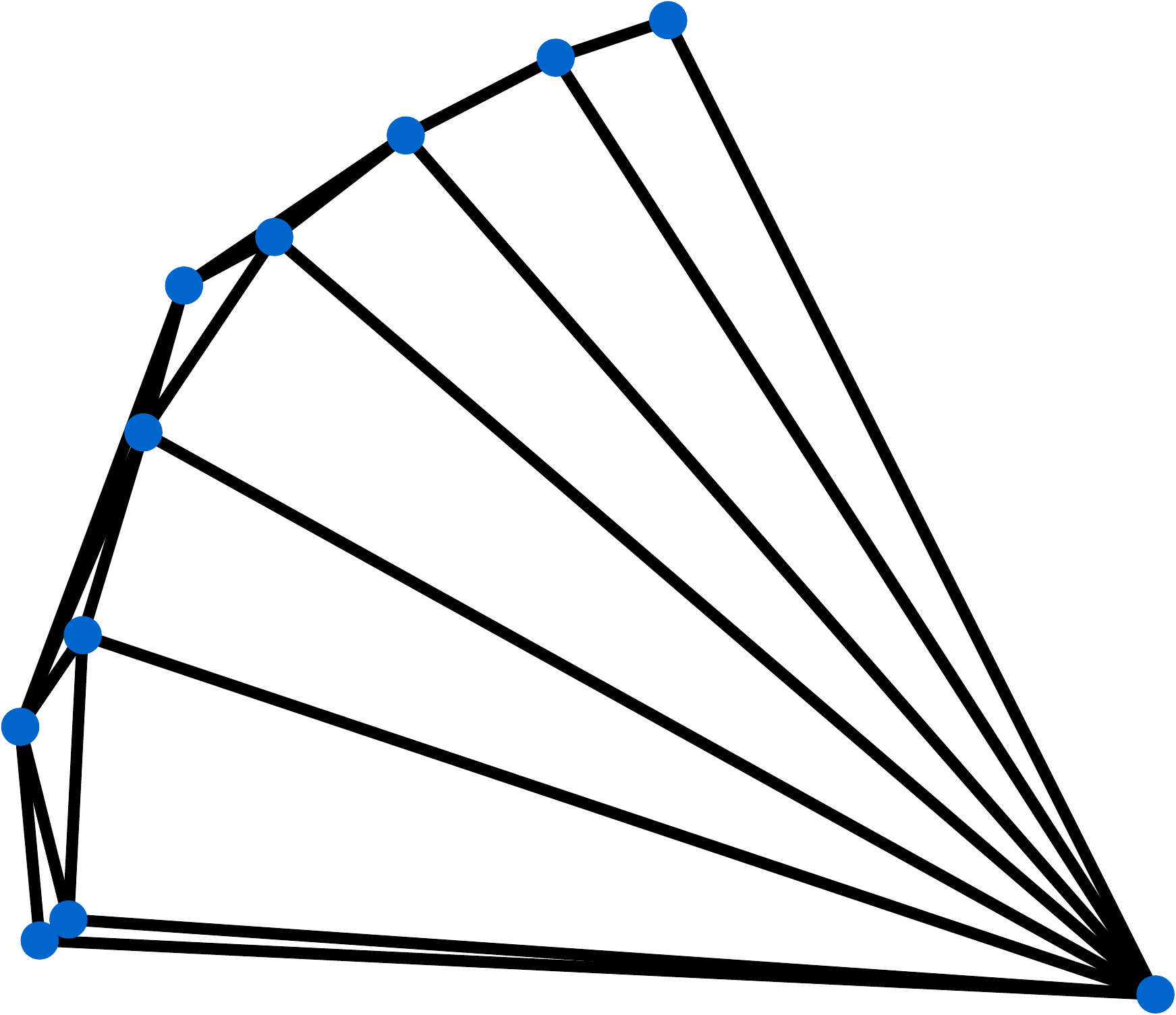} &
            \includegraphics[page=2, width=1\linewidth]{figures/maxLength8max.pdf} &
            \includegraphics[page=3, width=1\linewidth]{figures/maxLength8max.pdf} &
            \includegraphics[page=4, width=1\linewidth]{figures/maxLength8max.pdf} &
            \includegraphics[page=5, width=1\linewidth]{figures/maxLength8max.pdf} &
            \includegraphics[page=6, width=1\linewidth]{figures/maxLength8max.pdf} &
            \includegraphics[page=7, width=1\linewidth]{figures/maxLength8max.pdf} &
            \includegraphics[page=8, width=1\linewidth]{figures/maxLength8max.pdf}\\
            \bottomrule
    \end{tabu}
    \begin{tabular}{c}
        \textbf{Optimizing with maximum degree 5} (Sum aggregation)
    \end{tabular}
    \begin{tabu} to \textwidth {X[1,c,m]|X[1,c,m]|X[1,c,m]|X[1,c,m]|X[1,c,m]|X[1,c,m]|X[1,c,m]|X[1,c,m]}
            \toprule
            DT & Opposing Angles & Dual Edge Ratio & Dual Area overlap & Lens & Shrunk Circle & Triangular Lens & Shrunk Circumcircle\\
            \midrule
            \includegraphics[page=1, width=1\linewidth]{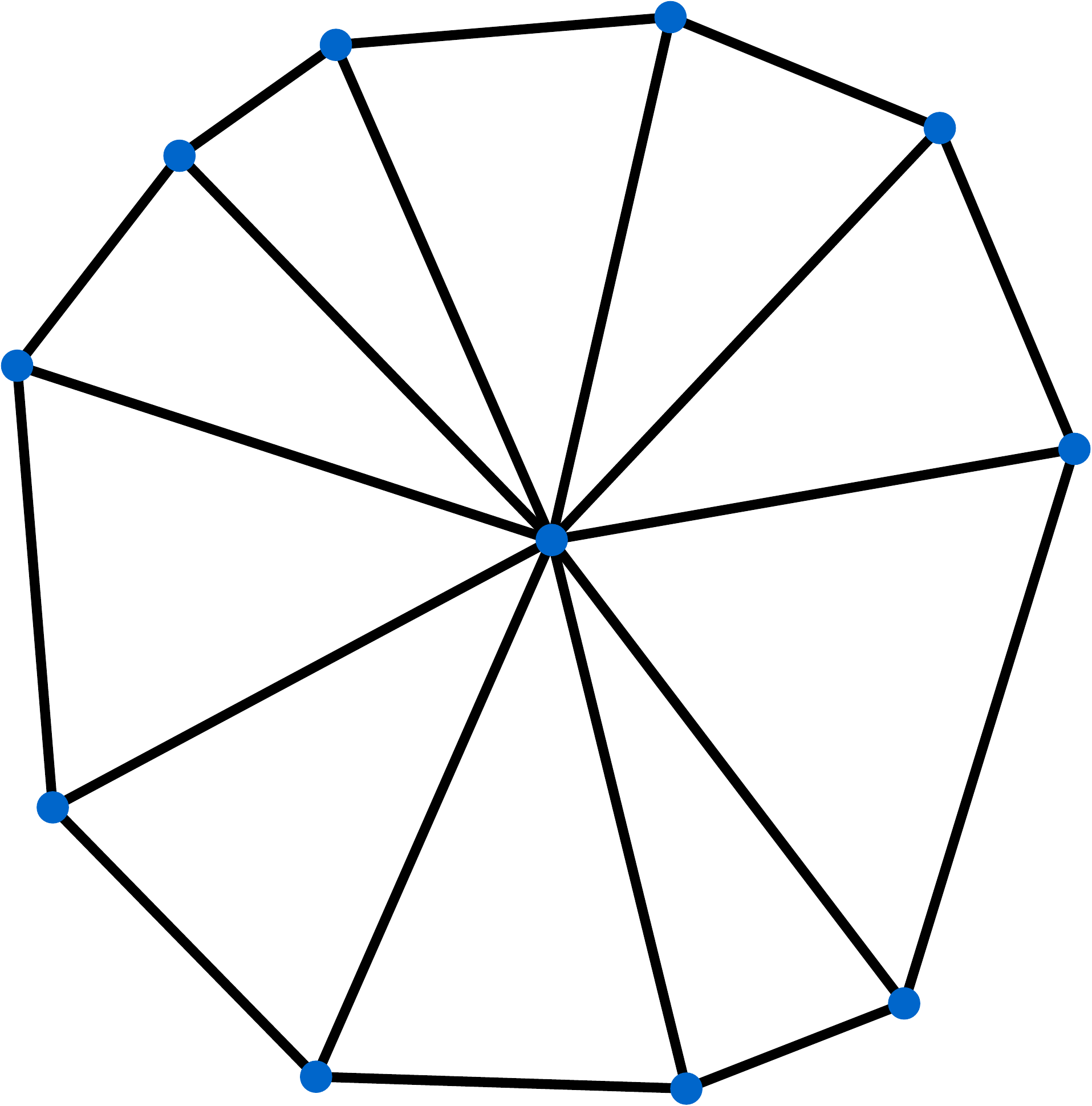} &
            \includegraphics[page=2, width=1\linewidth]{figures/maxDegree9sum.pdf} &
            \includegraphics[page=3, width=1\linewidth]{figures/maxDegree9sum.pdf} &
            \includegraphics[page=4, width=1\linewidth]{figures/maxDegree9sum.pdf} &
            \includegraphics[page=5, width=1\linewidth]{figures/maxDegree9sum.pdf} &
            \includegraphics[page=6, width=1\linewidth]{figures/maxDegree9sum.pdf} &
            \includegraphics[page=7, width=1\linewidth]{figures/maxDegree9sum.pdf} &
            \includegraphics[page=8, width=1\linewidth]{figures/maxDegree9sum.pdf}\\
            \bottomrule
    \end{tabu}
    \begin{tabular}{c}
         \textbf{Optimizing with maximum degree 5} (Bottleneck aggregation)
    \end{tabular}
    \begin{tabu} to \textwidth {X[1,c,m]|X[1,c,m]|X[1,c,m]|X[1,c,m]|X[1,c,m]|X[1,c,m]|X[1,c,m]|X[1,c,m]}
            \toprule
            DT & Opposing Angles & Dual Edge Ratio & Dual Area overlap & Lens & Shrunk Circle & Triangular Lens & Shrunk Circumcircle\\
            \midrule
            \includegraphics[page=1, width=1\linewidth]{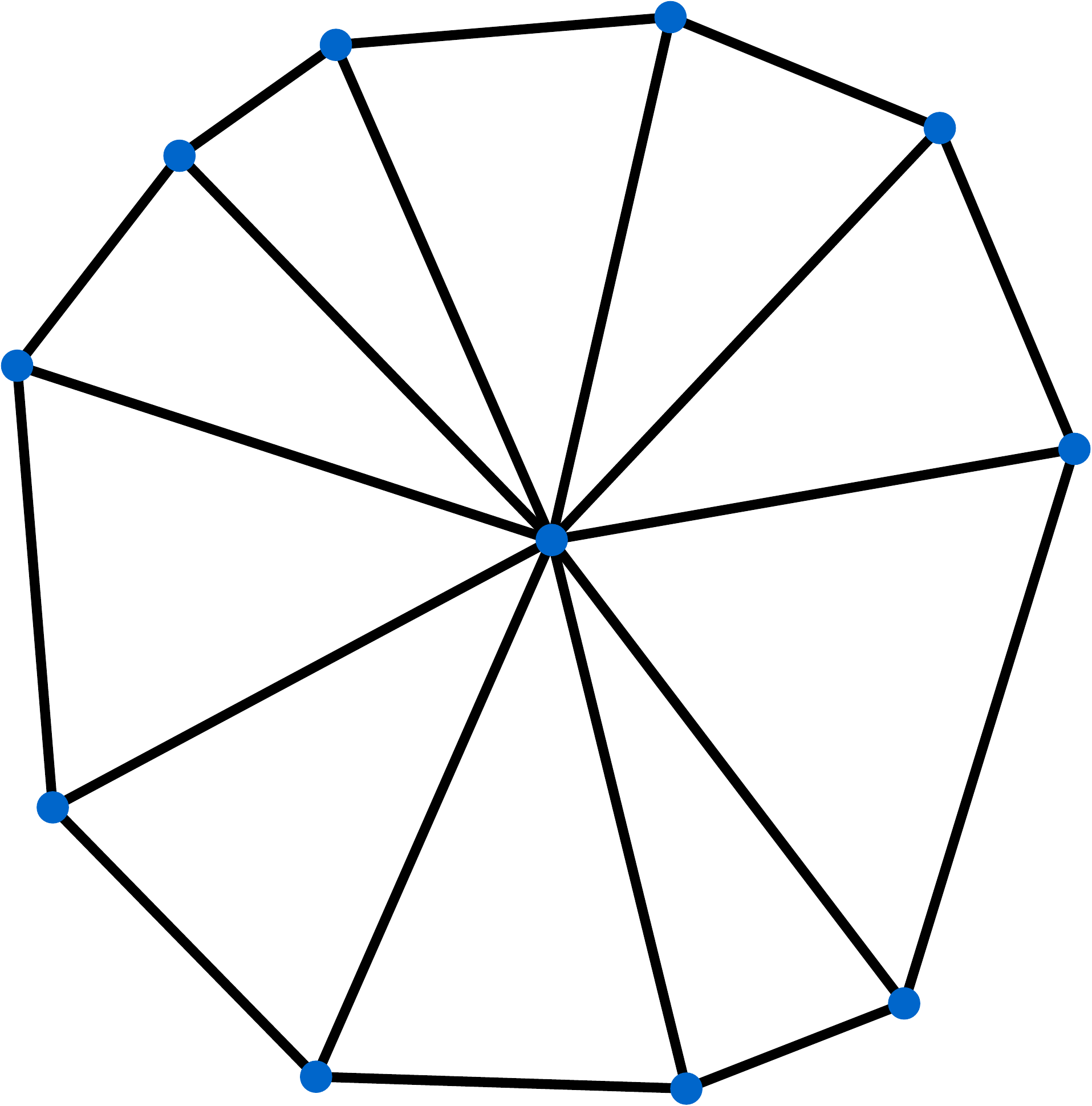} &
            \includegraphics[page=2, width=1\linewidth]{figures/maxDegree9max.pdf} &
            \includegraphics[page=3, width=1\linewidth]{figures/maxDegree9max.pdf} &
            \includegraphics[page=4, width=1\linewidth]{figures/maxDegree9max.pdf} &
            \includegraphics[page=5, width=1\linewidth]{figures/maxDegree9max.pdf} &
            \includegraphics[page=6, width=1\linewidth]{figures/maxDegree9max.pdf} &
            \includegraphics[page=7, width=1\linewidth]{figures/maxDegree9max.pdf} &
            \includegraphics[page=8, width=1\linewidth]{figures/maxDegree9max.pdf}\\
            \bottomrule
    \end{tabu}
\end{table*}

\clearpage

\small

\begin{table*}[t]
\caption{\hspace{-14em} \textbf{\textsf{Appendix}} \hspace{9em} Optimizing with constrained edges (Sum aggregation)}
\label{tab:cdt_sum}
\small
\begin{tabu} to \textwidth {X[1,c,m]|X[1,c,m]|X[1,c,m]|X[1,c,m]|X[1,c,m]|X[1,c,m]|X[1,c,m]|X[1,c,m]}
        \toprule
        CDT & Opposing Angles & Dual Edge Ratio & Dual Area overlap & Lens & Shrunk Circle & Triangular Lens & Shrunk Circumcircle\\
        \midrule
        \includegraphics[page=1, width=1\linewidth]{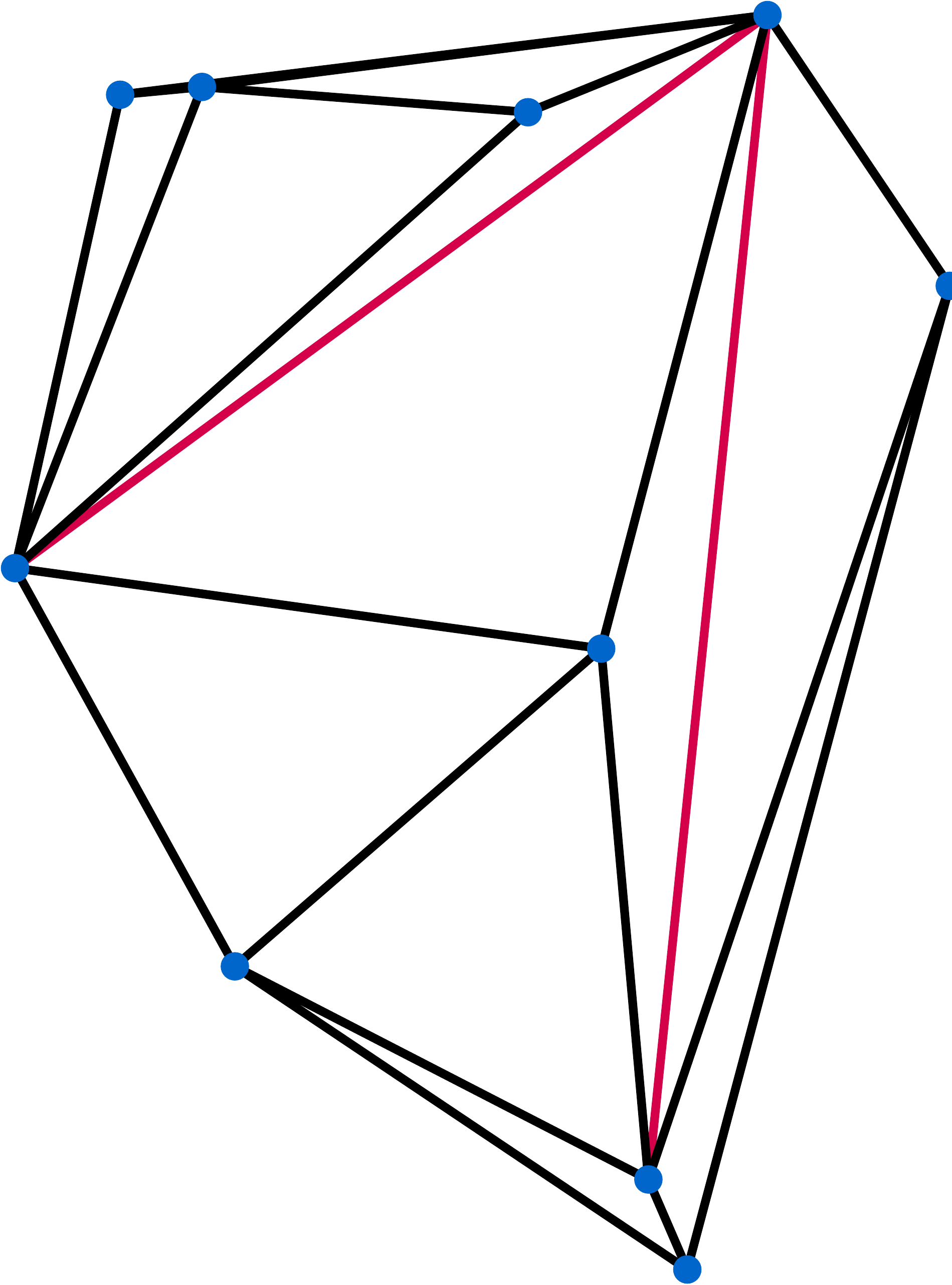} &
        \includegraphics[page=2, width=1\linewidth]{figures/cdt0sum.pdf} &
        \includegraphics[page=3, width=1\linewidth]{figures/cdt0sum.pdf} &
        \includegraphics[page=4, width=1\linewidth]{figures/cdt0sum.pdf} &
        \includegraphics[page=5, width=1\linewidth]{figures/cdt0sum.pdf} &
        \includegraphics[page=6, width=1\linewidth]{figures/cdt0sum.pdf} &
        \includegraphics[page=7, width=1\linewidth]{figures/cdt0sum.pdf} &
        \includegraphics[page=8, width=1\linewidth]{figures/cdt0sum.pdf}\\\midrule
        \includegraphics[page=1, width=1\linewidth]{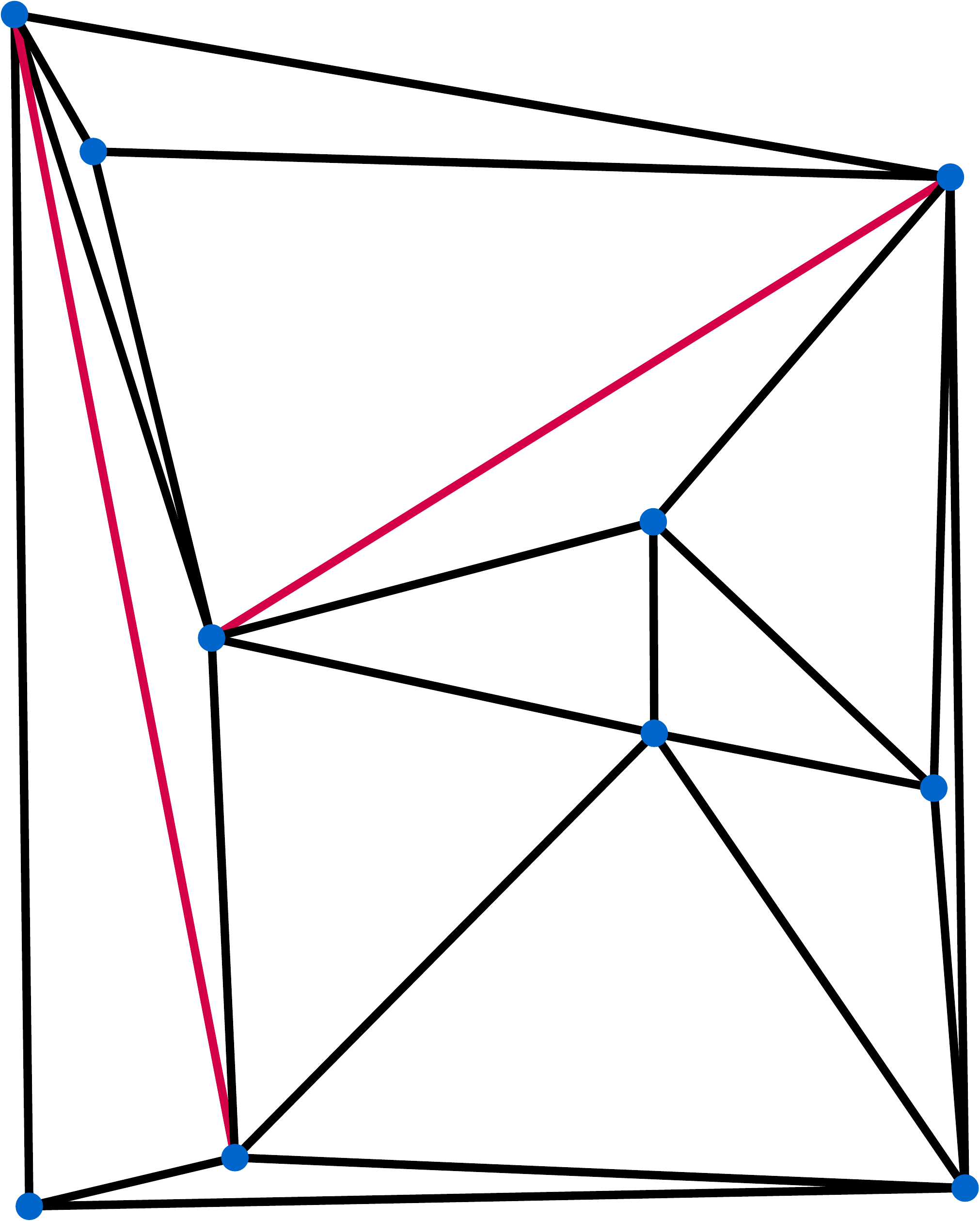} &
        \includegraphics[page=2, width=1\linewidth]{figures/cdt1sum.pdf} &
        \includegraphics[page=3, width=1\linewidth]{figures/cdt1sum.pdf} &
        \includegraphics[page=4, width=1\linewidth]{figures/cdt1sum.pdf} &
        \includegraphics[page=5, width=1\linewidth]{figures/cdt1sum.pdf} &
        \includegraphics[page=6, width=1\linewidth]{figures/cdt1sum.pdf} &
        \includegraphics[page=7, width=1\linewidth]{figures/cdt1sum.pdf} &
        \includegraphics[page=8, width=1\linewidth]{figures/cdt1sum.pdf}\\\midrule
        \includegraphics[page=1, width=1\linewidth]{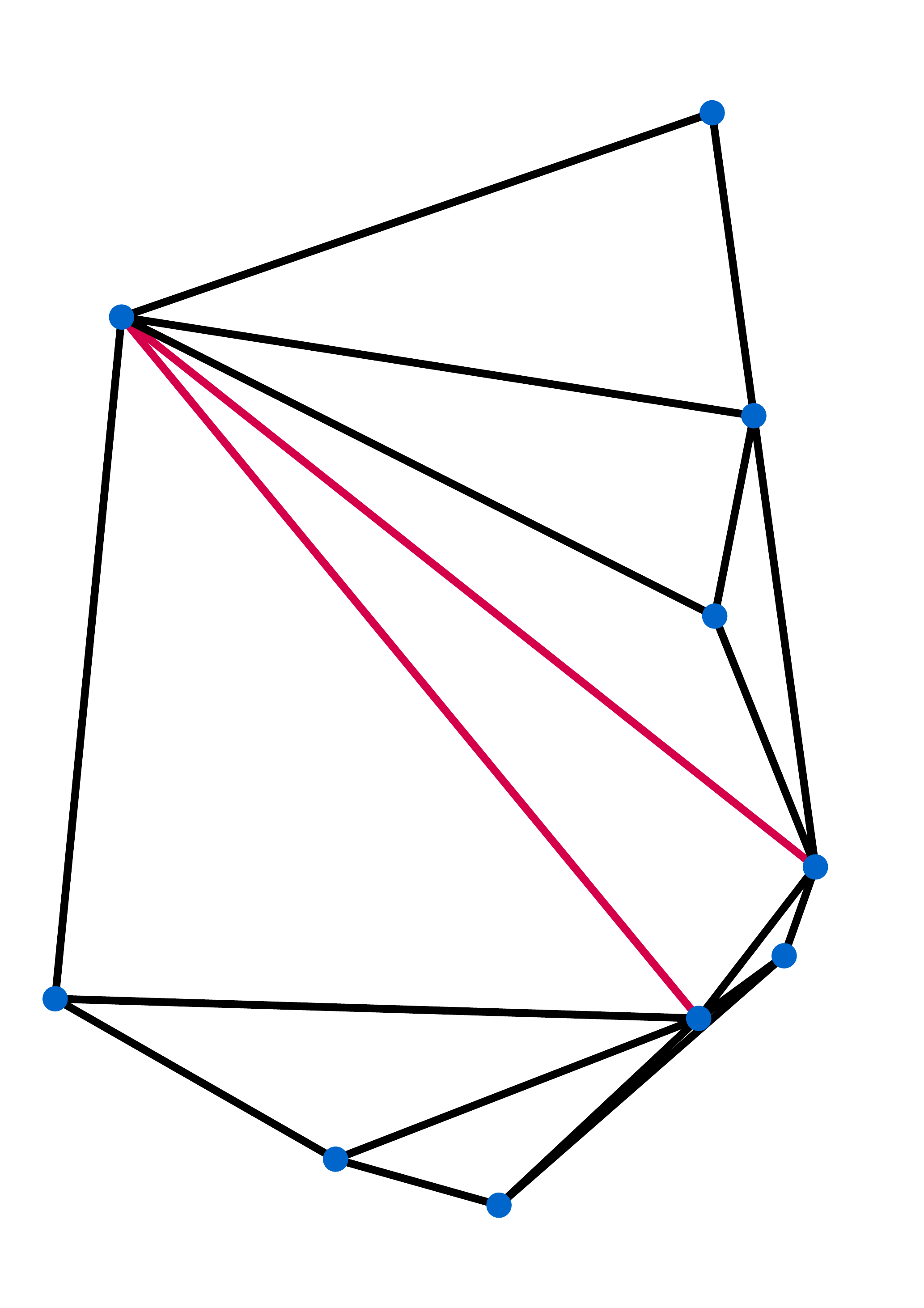} &
        \includegraphics[page=2, width=1\linewidth]{figures/cdt2sum.pdf} &
        \includegraphics[page=3, width=1\linewidth]{figures/cdt2sum.pdf} &
        \includegraphics[page=4, width=1\linewidth]{figures/cdt2sum.pdf} &
        \includegraphics[page=5, width=1\linewidth]{figures/cdt2sum.pdf} &
        \includegraphics[page=6, width=1\linewidth]{figures/cdt2sum.pdf} &
        \includegraphics[page=7, width=1\linewidth]{figures/cdt2sum.pdf} &
        \includegraphics[page=8, width=1\linewidth]{figures/cdt2sum.pdf}\\\midrule
        \includegraphics[page=1, width=1\linewidth]{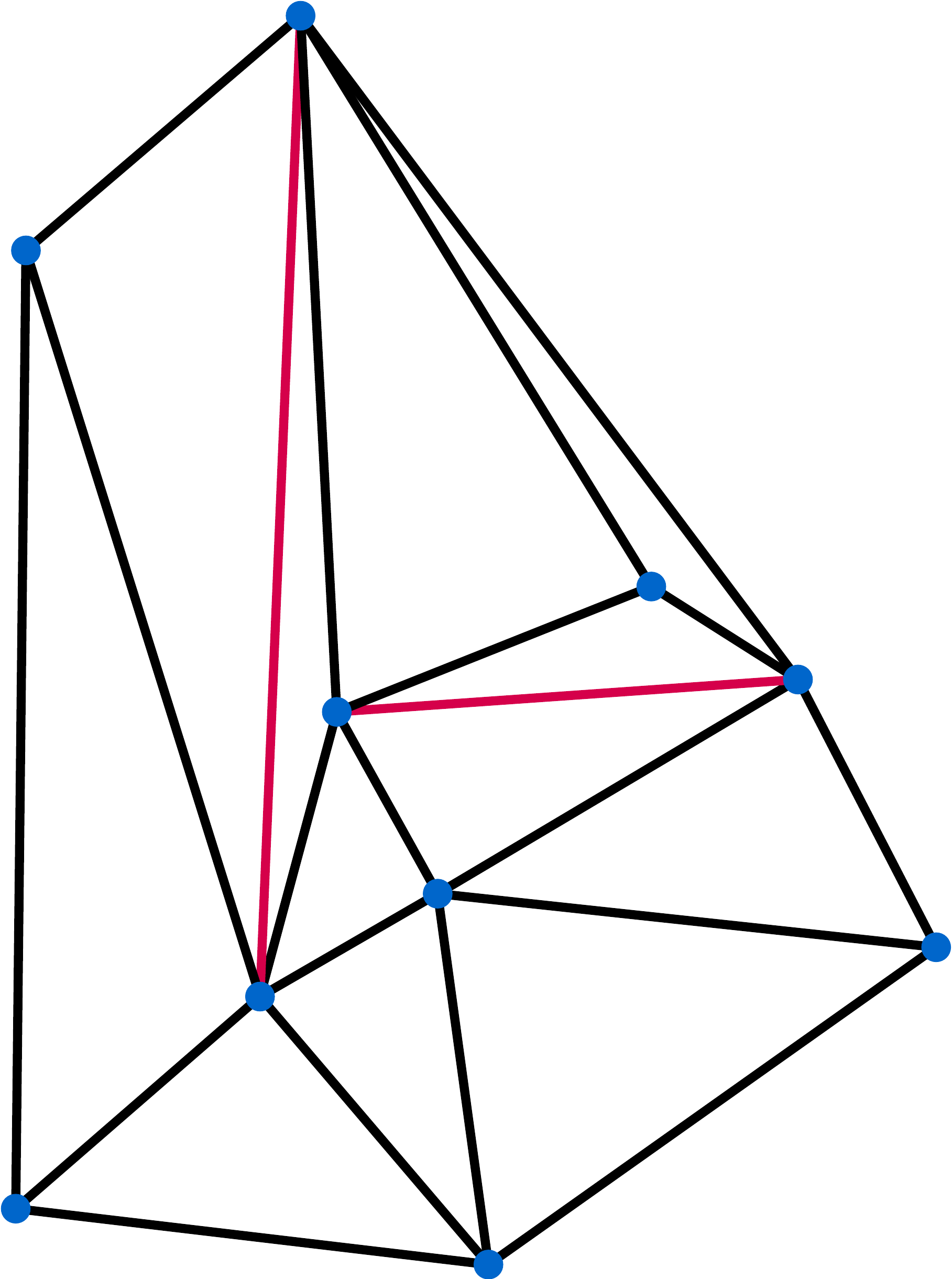} &
        \includegraphics[page=2, width=1\linewidth]{figures/cdt3sum.pdf} &
        \includegraphics[page=3, width=1\linewidth]{figures/cdt3sum.pdf} &
        \includegraphics[page=4, width=1\linewidth]{figures/cdt3sum.pdf} &
        \includegraphics[page=5, width=1\linewidth]{figures/cdt3sum.pdf} &
        \includegraphics[page=6, width=1\linewidth]{figures/cdt3sum.pdf} &
        \includegraphics[page=7, width=1\linewidth]{figures/cdt3sum.pdf} &
        \includegraphics[page=8, width=1\linewidth]{figures/cdt3sum.pdf}\\\midrule
        \includegraphics[page=1, width=1\linewidth]{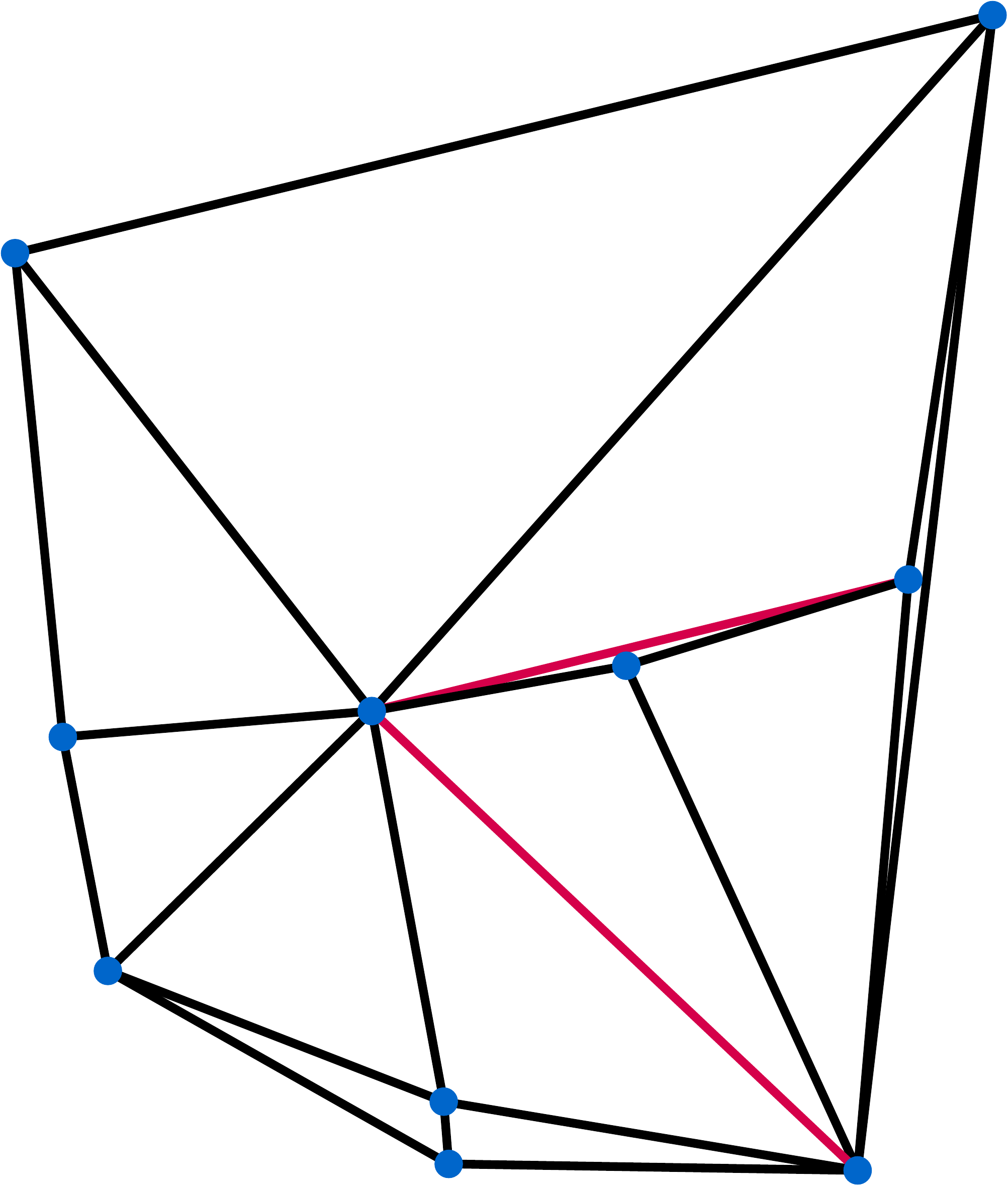} &
        \includegraphics[page=2, width=1\linewidth]{figures/cdt4sum.pdf} &
        \includegraphics[page=3, width=1\linewidth]{figures/cdt4sum.pdf} &
        \includegraphics[page=4, width=1\linewidth]{figures/cdt4sum.pdf} &
        \includegraphics[page=5, width=1\linewidth]{figures/cdt4sum.pdf} &
        \includegraphics[page=6, width=1\linewidth]{figures/cdt4sum.pdf} &
        \includegraphics[page=7, width=1\linewidth]{figures/cdt4sum.pdf} &
        \includegraphics[page=8, width=1\linewidth]{figures/cdt4sum.pdf}\\\midrule
        \includegraphics[page=1, width=1\linewidth]{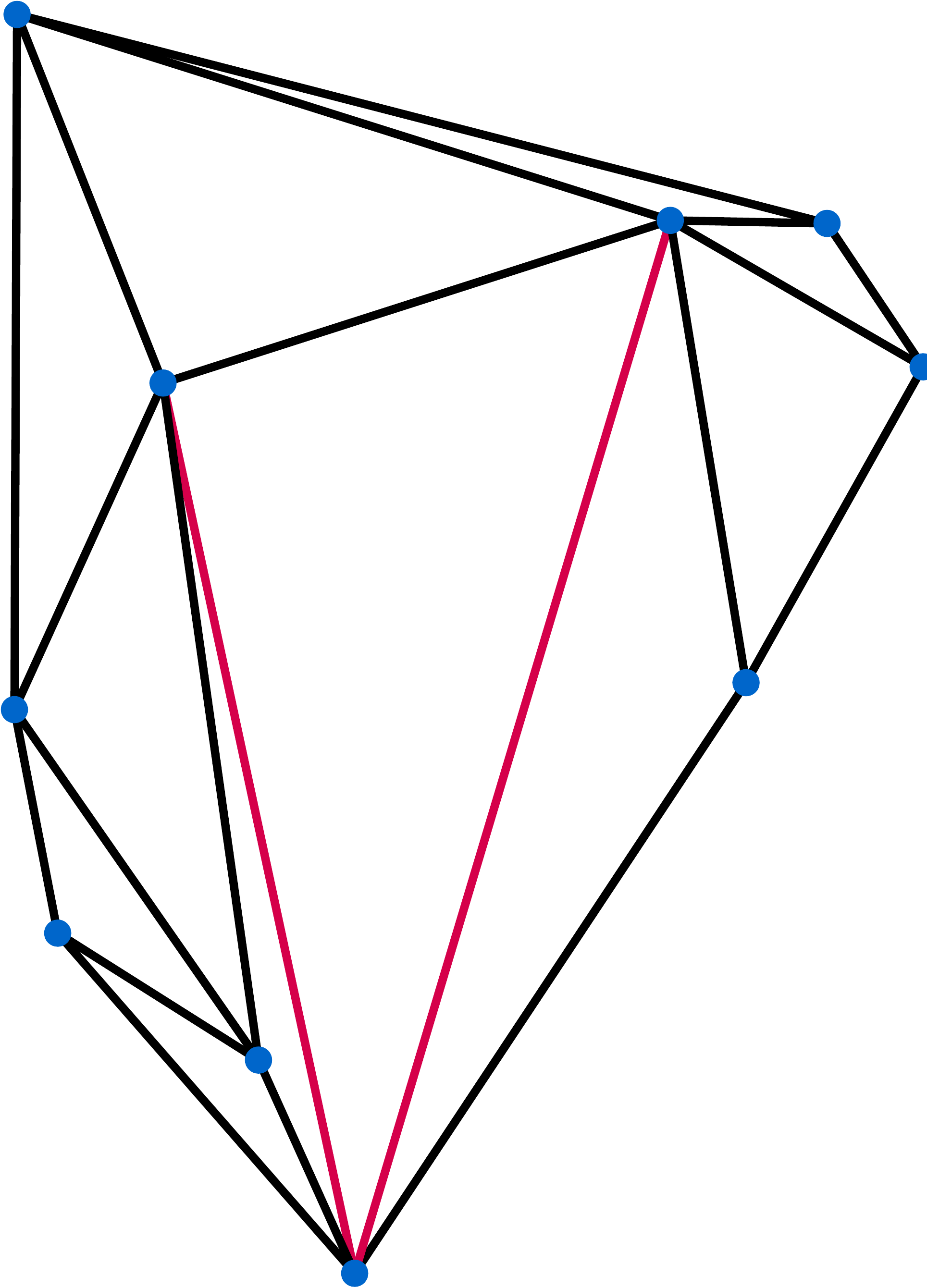} &
        \includegraphics[page=2, width=1\linewidth]{figures/cdt5sum.pdf} &
        \includegraphics[page=3, width=1\linewidth]{figures/cdt5sum.pdf} &
        \includegraphics[page=4, width=1\linewidth]{figures/cdt5sum.pdf} &
        \includegraphics[page=5, width=1\linewidth]{figures/cdt5sum.pdf} &
        \includegraphics[page=6, width=1\linewidth]{figures/cdt5sum.pdf} &
        \includegraphics[page=7, width=1\linewidth]{figures/cdt5sum.pdf} &
        \includegraphics[page=8, width=1\linewidth]{figures/cdt5sum.pdf}\\\midrule
        \includegraphics[page=1, width=1\linewidth]{figures/cdt6sum.pdf} &
        \includegraphics[page=2, width=1\linewidth]{figures/cdt6sum.pdf} &
        \includegraphics[page=3, width=1\linewidth]{figures/cdt6sum.pdf} &
        \includegraphics[page=4, width=1\linewidth]{figures/cdt6sum.pdf} &
        \includegraphics[page=5, width=1\linewidth]{figures/cdt6sum.pdf} &
        \includegraphics[page=6, width=1\linewidth]{figures/cdt6sum.pdf} &
        \includegraphics[page=7, width=1\linewidth]{figures/cdt6sum.pdf} &
        \includegraphics[page=8, width=1\linewidth]{figures/cdt6sum.pdf}\\\midrule
        \includegraphics[page=1, width=1\linewidth]{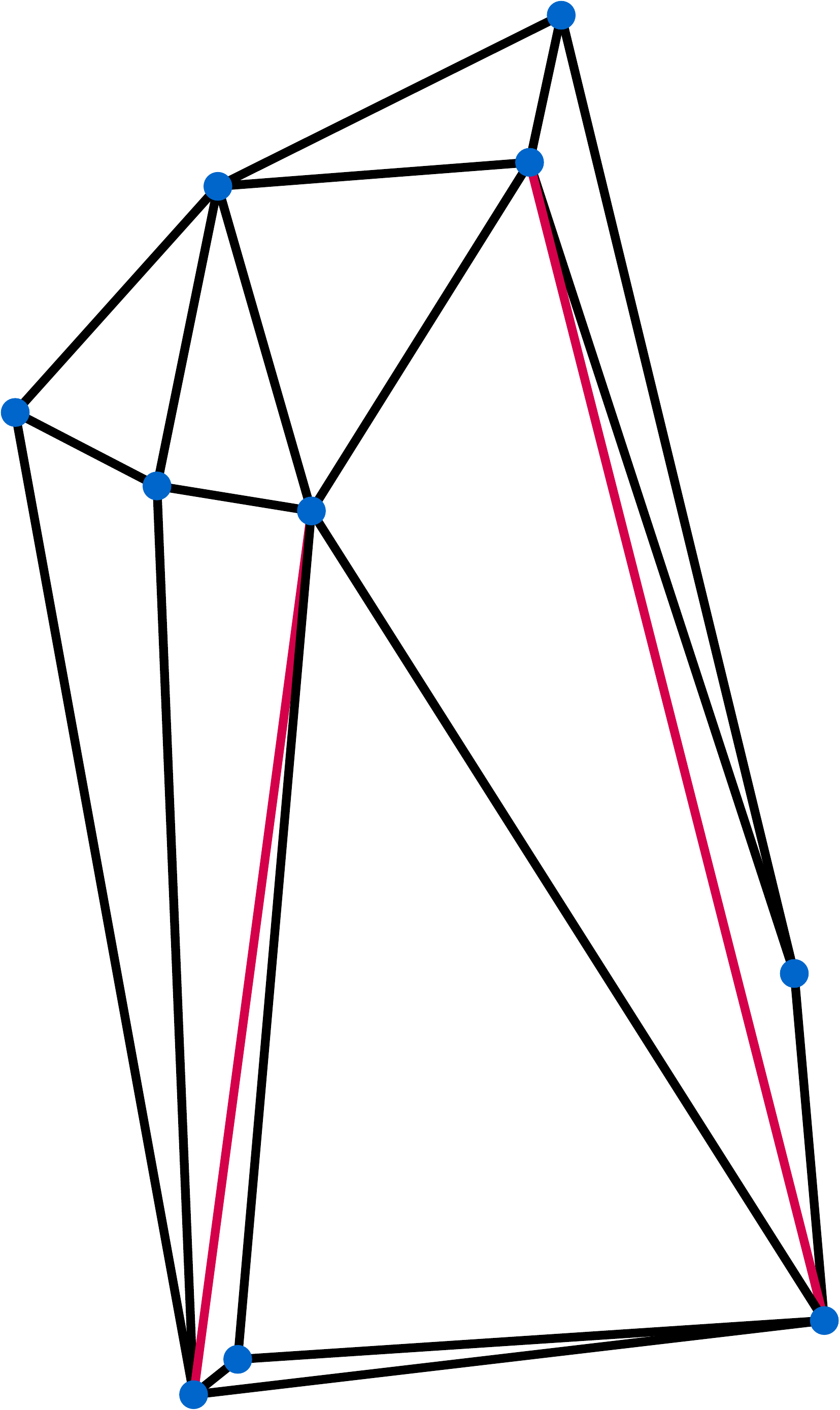} &
        \includegraphics[page=2, width=1\linewidth]{figures/cdt7sum.pdf} &
        \includegraphics[page=3, width=1\linewidth]{figures/cdt7sum.pdf} &
        \includegraphics[page=4, width=1\linewidth]{figures/cdt7sum.pdf} &
        \includegraphics[page=5, width=1\linewidth]{figures/cdt7sum.pdf} &
        \includegraphics[page=6, width=1\linewidth]{figures/cdt7sum.pdf} &
        \includegraphics[page=7, width=1\linewidth]{figures/cdt7sum.pdf} &
        \includegraphics[page=8, width=1\linewidth]{figures/cdt7sum.pdf}\\
        \bottomrule
\end{tabu}
\end{table*}

\begin{table*}[t]
\caption{Optimizing with constrained edges (Bottleneck aggregation)}
\label{tab:cdt_bottleneck}
\small
\begin{tabu} to \textwidth {X[1,c,m]|X[1,c,m]|X[1,c,m]|X[1,c,m]|X[1,c,m]|X[1,c,m]|X[1,c,m]|X[1,c,m]}
        \toprule
        CDT & Opposing Angles & Dual Edge Ratio & Dual Area overlap & Lens & Shrunk Circle & Triangular Lens & Shrunk Circumcircle\\
        \midrule
        \includegraphics[page=1, width=1\linewidth]{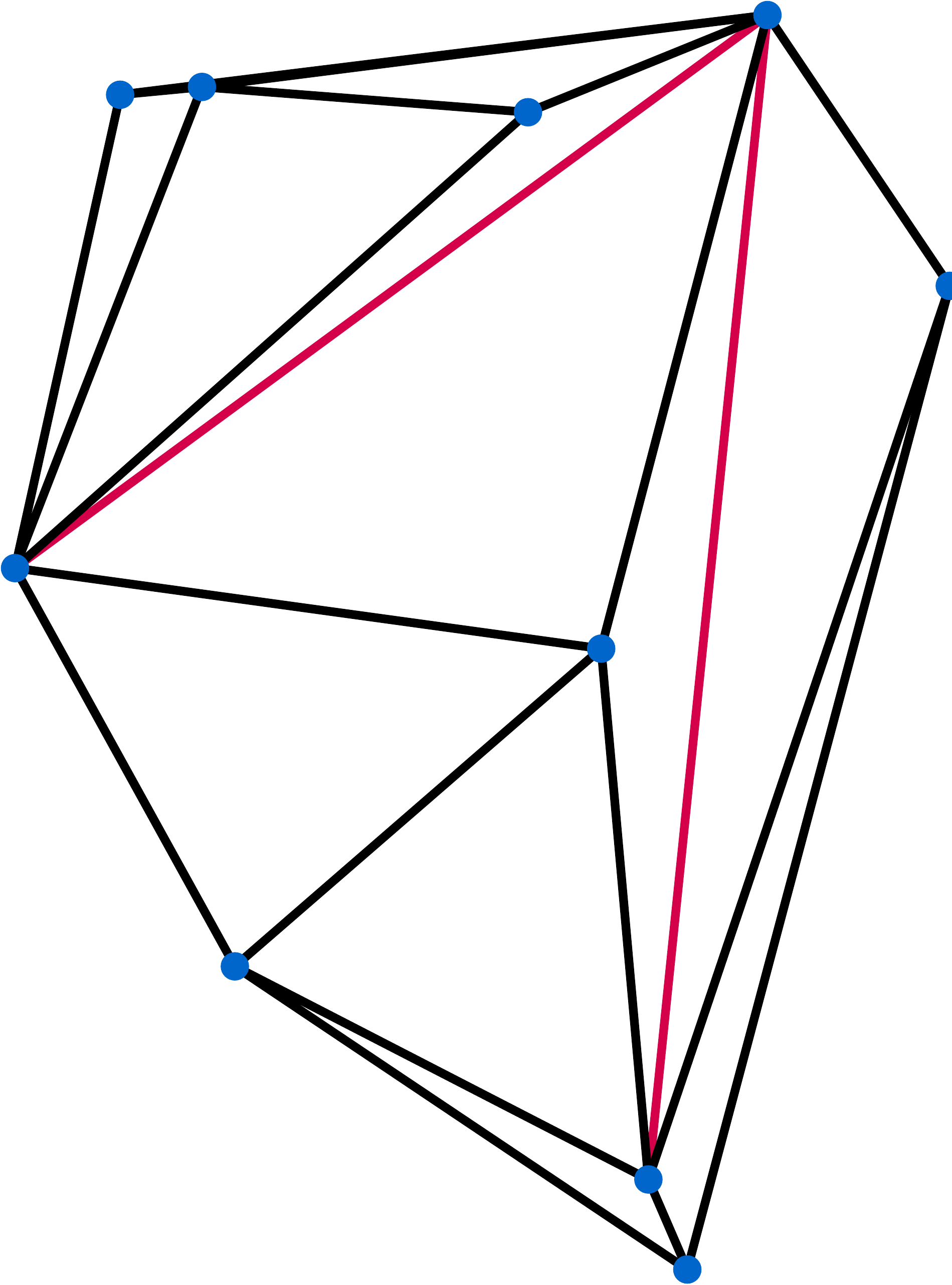} &
        \includegraphics[page=2, width=1\linewidth]{figures/cdt0max.pdf} &
        \includegraphics[page=3, width=1\linewidth]{figures/cdt0max.pdf} &
        \includegraphics[page=4, width=1\linewidth]{figures/cdt0max.pdf} &
        \includegraphics[page=5, width=1\linewidth]{figures/cdt0max.pdf} &
        \includegraphics[page=6, width=1\linewidth]{figures/cdt0max.pdf} &
        \includegraphics[page=7, width=1\linewidth]{figures/cdt0max.pdf} &
        \includegraphics[page=8, width=1\linewidth]{figures/cdt0max.pdf}\\\midrule
        \includegraphics[page=1, width=1\linewidth]{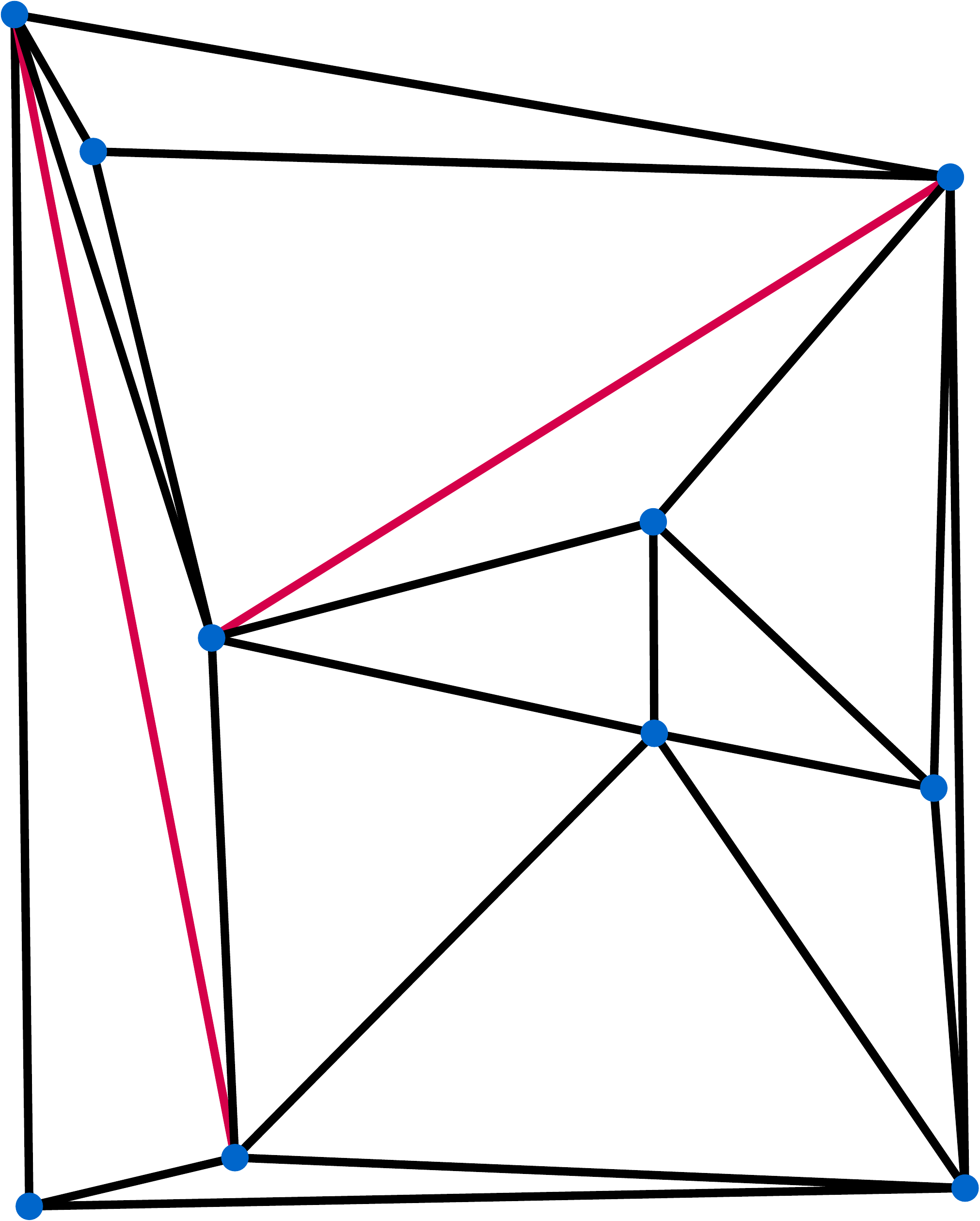} &
        \includegraphics[page=2, width=1\linewidth]{figures/cdt1max.pdf} &
        \includegraphics[page=3, width=1\linewidth]{figures/cdt1max.pdf} &
        \includegraphics[page=4, width=1\linewidth]{figures/cdt1max.pdf} &
        \includegraphics[page=5, width=1\linewidth]{figures/cdt1max.pdf} &
        \includegraphics[page=6, width=1\linewidth]{figures/cdt1max.pdf} &
        \includegraphics[page=7, width=1\linewidth]{figures/cdt1max.pdf} &
        \includegraphics[page=8, width=1\linewidth]{figures/cdt1max.pdf}\\\midrule
        \includegraphics[page=1, width=1\linewidth]{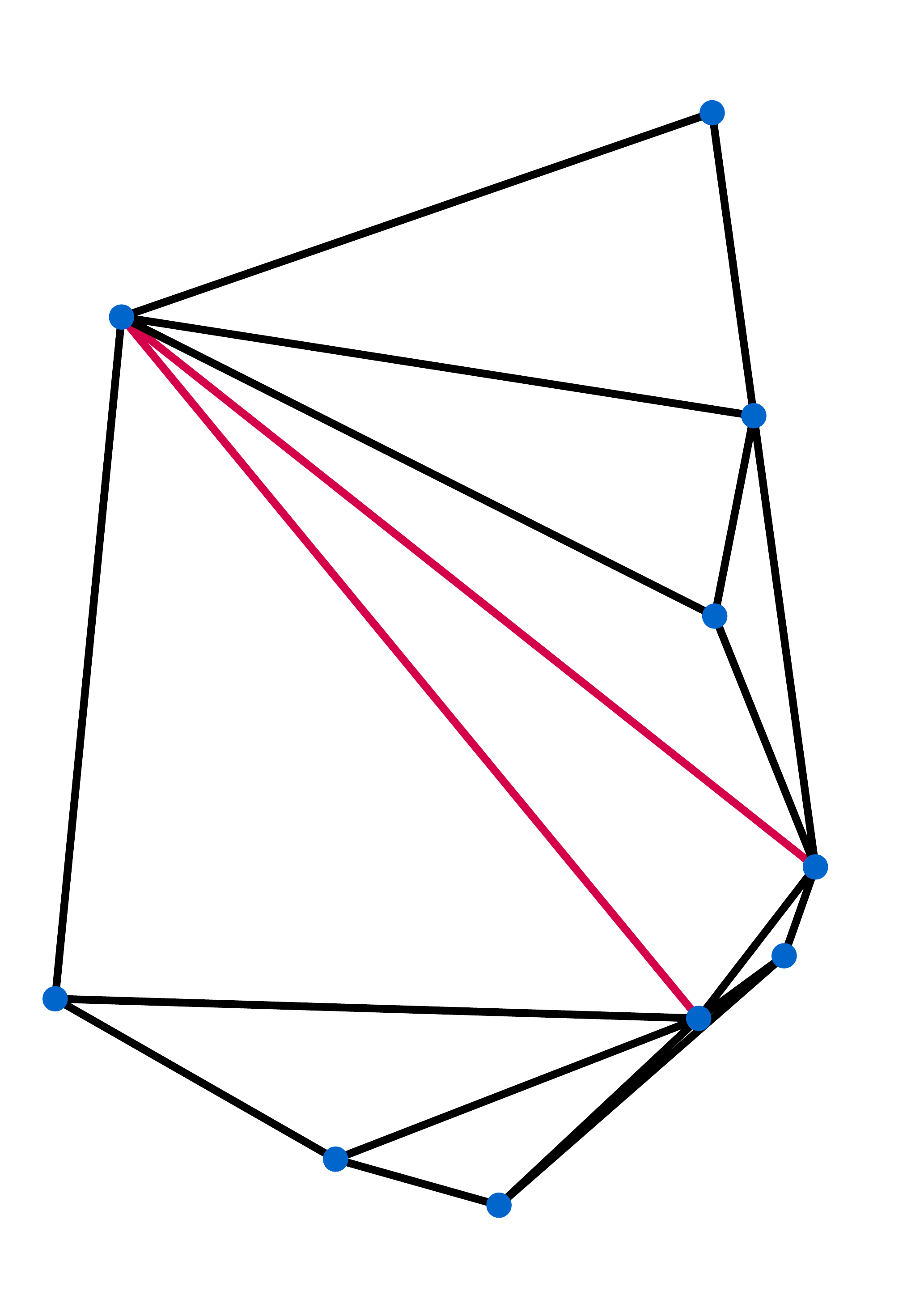} &
        \includegraphics[page=2, width=1\linewidth]{figures/cdt2max.pdf} &
        \includegraphics[page=3, width=1\linewidth]{figures/cdt2max.pdf} &
        \includegraphics[page=4, width=1\linewidth]{figures/cdt2max.pdf} &
        \includegraphics[page=5, width=1\linewidth]{figures/cdt2max.pdf} &
        \includegraphics[page=6, width=1\linewidth]{figures/cdt2max.pdf} &
        \includegraphics[page=7, width=1\linewidth]{figures/cdt2max.pdf} &
        \includegraphics[page=8, width=1\linewidth]{figures/cdt2max.pdf}\\\midrule
        \includegraphics[page=1, width=1\linewidth]{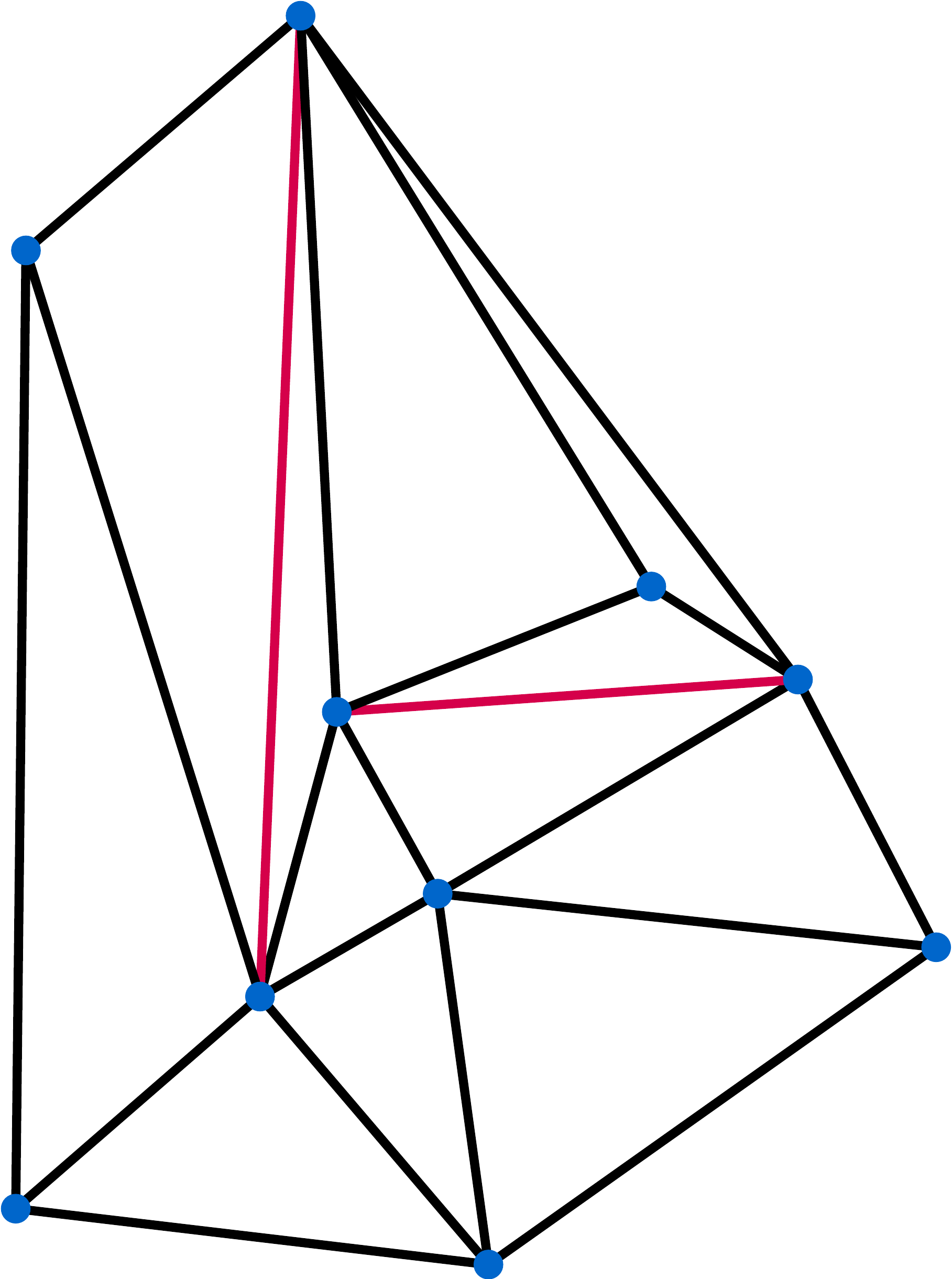} &
        \includegraphics[page=2, width=1\linewidth]{figures/cdt3max.pdf} &
        \includegraphics[page=3, width=1\linewidth]{figures/cdt3max.pdf} &
        \includegraphics[page=4, width=1\linewidth]{figures/cdt3max.pdf} &
        \includegraphics[page=5, width=1\linewidth]{figures/cdt3max.pdf} &
        \includegraphics[page=6, width=1\linewidth]{figures/cdt3max.pdf} &
        \includegraphics[page=7, width=1\linewidth]{figures/cdt3max.pdf} &
        \includegraphics[page=8, width=1\linewidth]{figures/cdt3max.pdf}\\\midrule
        \includegraphics[page=1, width=1\linewidth]{figures/cdt4max.pdf} &
        \includegraphics[page=2, width=1\linewidth]{figures/cdt4max.pdf} &
        \includegraphics[page=3, width=1\linewidth]{figures/cdt4max.pdf} &
        \includegraphics[page=4, width=1\linewidth]{figures/cdt4max.pdf} &
        \includegraphics[page=5, width=1\linewidth]{figures/cdt4max.pdf} &
        \includegraphics[page=6, width=1\linewidth]{figures/cdt4max.pdf} &
        \includegraphics[page=7, width=1\linewidth]{figures/cdt4max.pdf} &
        \includegraphics[page=8, width=1\linewidth]{figures/cdt4max.pdf}\\\midrule
        \includegraphics[page=1, width=1\linewidth]{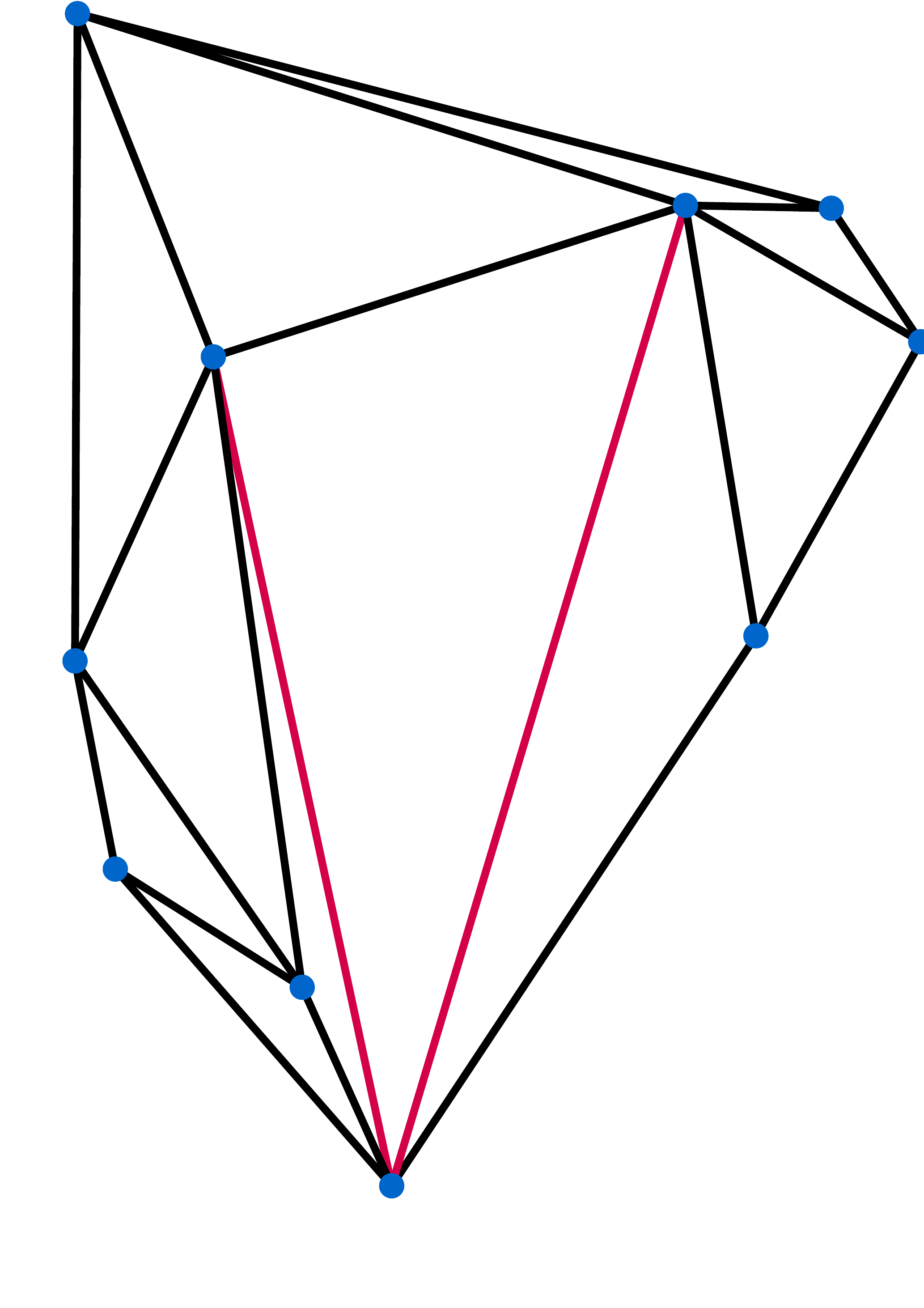} &
        \includegraphics[page=2, width=1\linewidth]{figures/cdt5max.pdf} &
        \includegraphics[page=3, width=1\linewidth]{figures/cdt5max.pdf} &
        \includegraphics[page=4, width=1\linewidth]{figures/cdt5max.pdf} &
        \includegraphics[page=5, width=1\linewidth]{figures/cdt5max.pdf} &
        \includegraphics[page=6, width=1\linewidth]{figures/cdt5max.pdf} &
        \includegraphics[page=7, width=1\linewidth]{figures/cdt5max.pdf} &
        \includegraphics[page=8, width=1\linewidth]{figures/cdt5max.pdf}\\\midrule
        \includegraphics[page=1, width=1\linewidth]{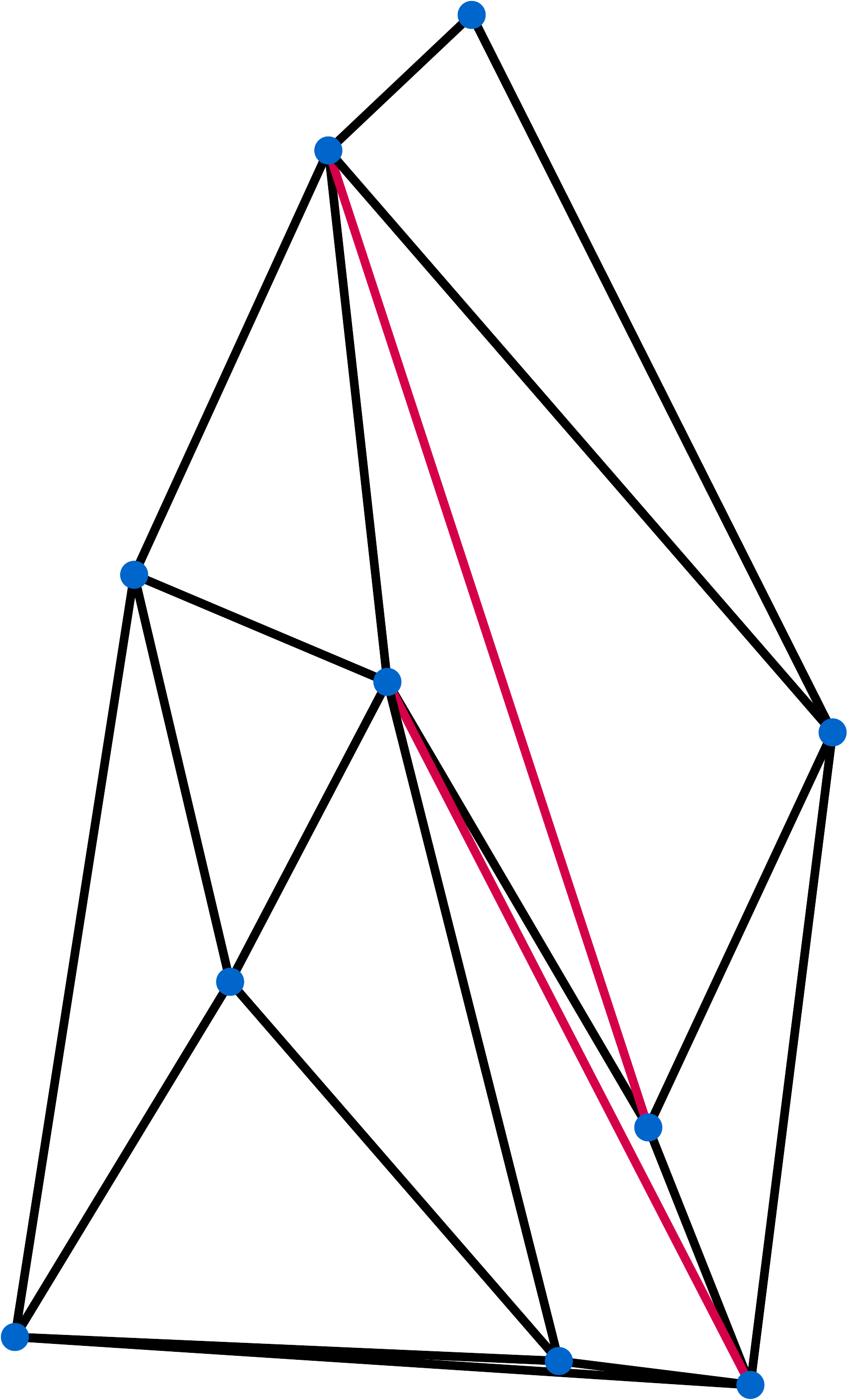} &
        \includegraphics[page=2, width=1\linewidth]{figures/cdt6max.pdf} &
        \includegraphics[page=3, width=1\linewidth]{figures/cdt6max.pdf} &
        \includegraphics[page=4, width=1\linewidth]{figures/cdt6max.pdf} &
        \includegraphics[page=5, width=1\linewidth]{figures/cdt6max.pdf} &
        \includegraphics[page=6, width=1\linewidth]{figures/cdt6max.pdf} &
        \includegraphics[page=7, width=1\linewidth]{figures/cdt6max.pdf} &
        \includegraphics[page=8, width=1\linewidth]{figures/cdt6max.pdf}\\\midrule
        \includegraphics[page=1, width=1\linewidth]{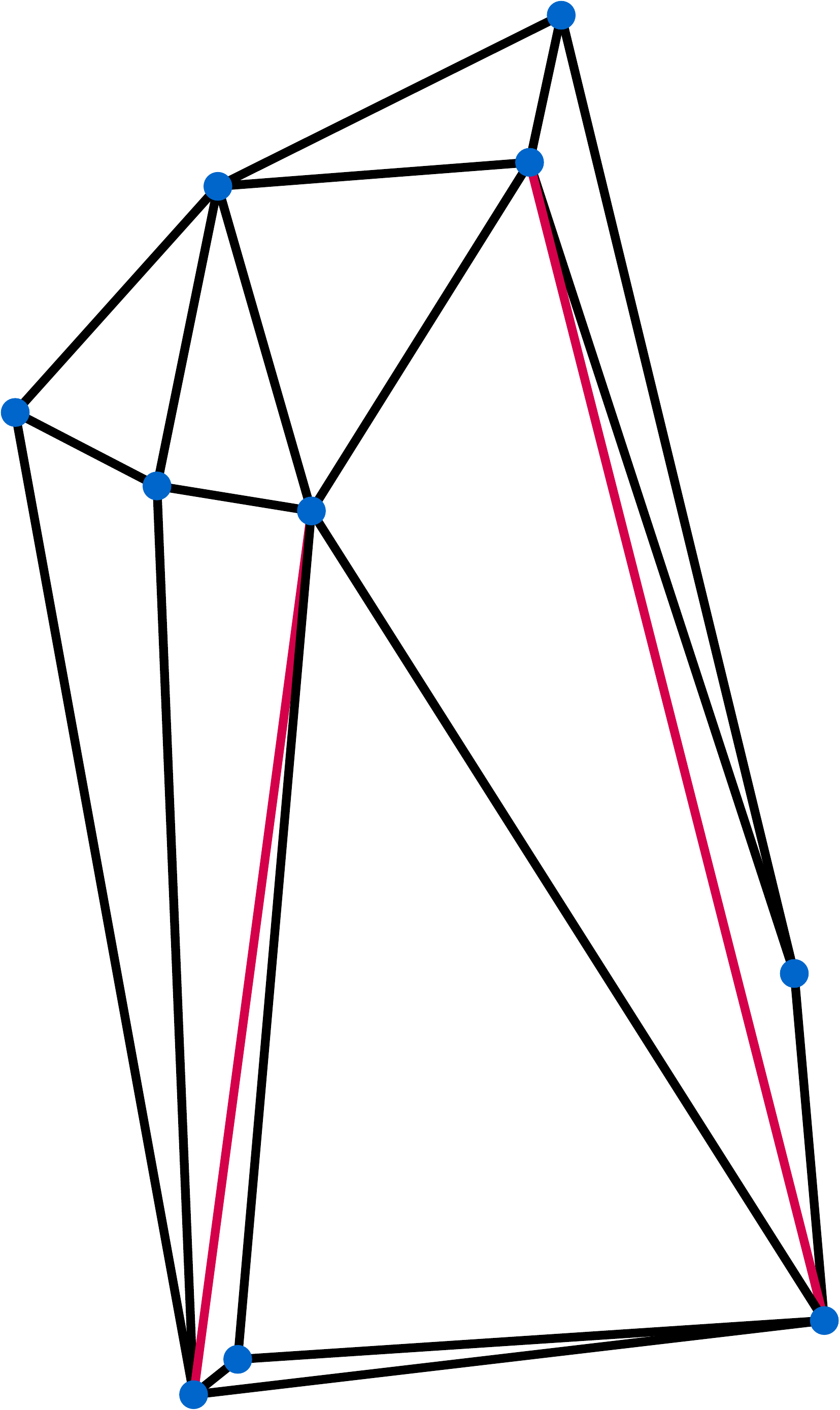} &
        \includegraphics[page=2, width=1\linewidth]{figures/cdt7max.pdf} &
        \includegraphics[page=3, width=1\linewidth]{figures/cdt7max.pdf} &
        \includegraphics[page=4, width=1\linewidth]{figures/cdt7max.pdf} &
        \includegraphics[page=5, width=1\linewidth]{figures/cdt7max.pdf} &
        \includegraphics[page=6, width=1\linewidth]{figures/cdt7max.pdf} &
        \includegraphics[page=7, width=1\linewidth]{figures/cdt7max.pdf} &
        \includegraphics[page=8, width=1\linewidth]{figures/cdt7max.pdf}\\
        \bottomrule
\end{tabu}
\end{table*}

\begin{table*}[t]
\caption{Optimizing with minimum length 1.2 Delaunay length (Sum aggregation)}
\label{tab:minlength_sum}
\small
\begin{tabu} to \textwidth {X[1,c,m]|X[1,c,m]|X[1,c,m]|X[1,c,m]|X[1,c,m]|X[1,c,m]|X[1,c,m]|X[1,c,m]}
        \toprule
        DT & Opposing Angles & Dual Edge Ratio & Dual Area overlap & Lens & Shrunk Circle & Triangular Lens & Shrunk Circumcircle\\
        \midrule
        \includegraphics[page=1, width=1\linewidth]{figures/minLength0sum.pdf} &
        \includegraphics[page=2, width=1\linewidth]{figures/minLength0sum.pdf} &
        \includegraphics[page=3, width=1\linewidth]{figures/minLength0sum.pdf} &
        \includegraphics[page=4, width=1\linewidth]{figures/minLength0sum.pdf} &
        \includegraphics[page=5, width=1\linewidth]{figures/minLength0sum.pdf} &
        \includegraphics[page=6, width=1\linewidth]{figures/minLength0sum.pdf} &
        \includegraphics[page=7, width=1\linewidth]{figures/minLength0sum.pdf} &
        \includegraphics[page=8, width=1\linewidth]{figures/minLength0sum.pdf}\\\midrule
        \includegraphics[page=1, width=1\linewidth]{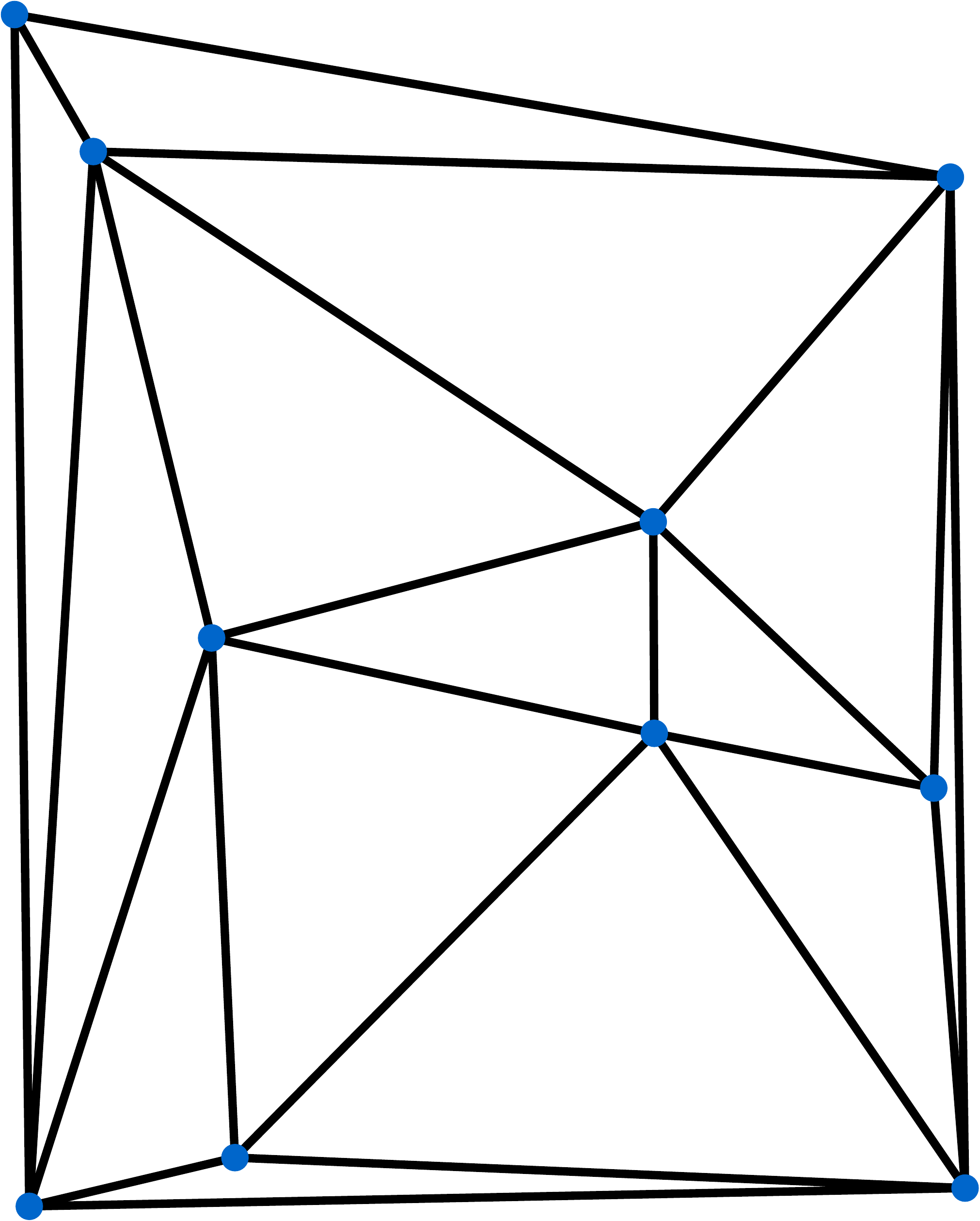} &
        \includegraphics[page=2, width=1\linewidth]{figures/minLength1sum.pdf} &
        \includegraphics[page=3, width=1\linewidth]{figures/minLength1sum.pdf} &
        \includegraphics[page=4, width=1\linewidth]{figures/minLength1sum.pdf} &
        \includegraphics[page=5, width=1\linewidth]{figures/minLength1sum.pdf} &
        \includegraphics[page=6, width=1\linewidth]{figures/minLength1sum.pdf} &
        \includegraphics[page=7, width=1\linewidth]{figures/minLength1sum.pdf} &
        \includegraphics[page=8, width=1\linewidth]{figures/minLength1sum.pdf}\\\midrule
        \includegraphics[page=1, width=1\linewidth]{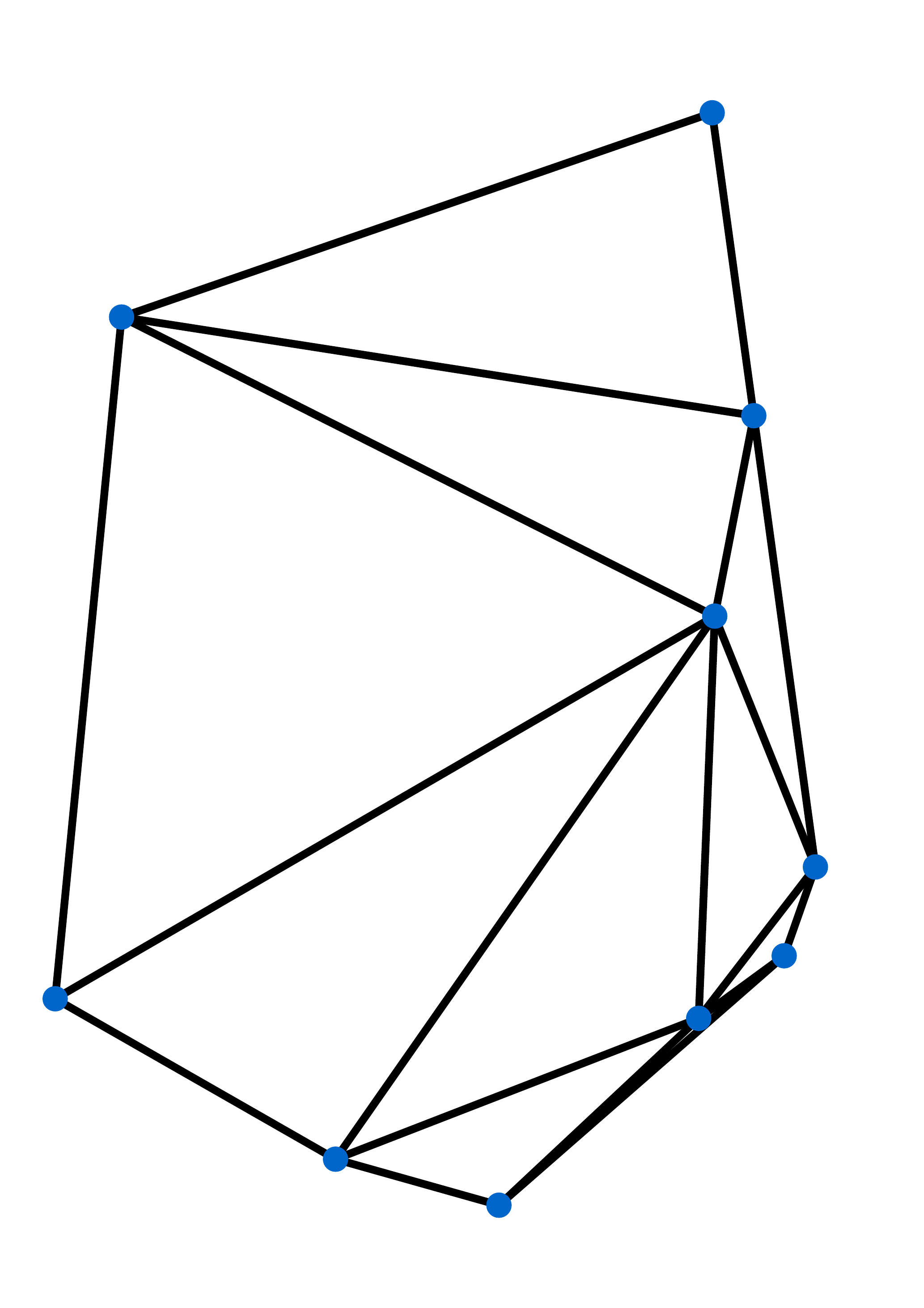} &
        \includegraphics[page=2, width=1\linewidth]{figures/minLength2sum.pdf} &
        \includegraphics[page=3, width=1\linewidth]{figures/minLength2sum.pdf} &
        \includegraphics[page=4, width=1\linewidth]{figures/minLength2sum.pdf} &
        \includegraphics[page=5, width=1\linewidth]{figures/minLength2sum.pdf} &
        \includegraphics[page=6, width=1\linewidth]{figures/minLength2sum.pdf} &
        \includegraphics[page=7, width=1\linewidth]{figures/minLength2sum.pdf} &
        \includegraphics[page=8, width=1\linewidth]{figures/minLength2sum.pdf}\\\midrule
        \includegraphics[page=1, width=1\linewidth]{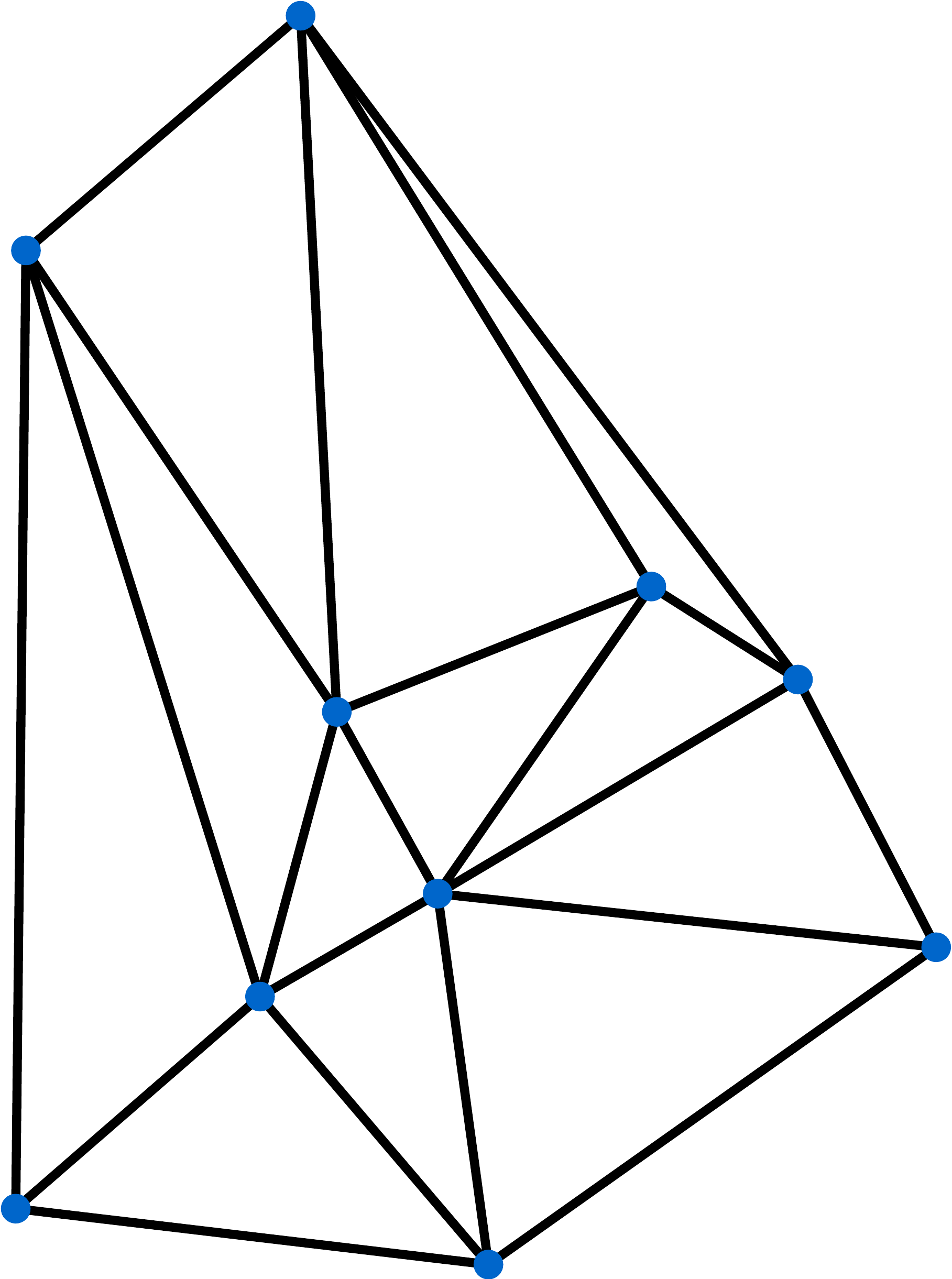} &
        \includegraphics[page=2, width=1\linewidth]{figures/minLength3sum.pdf} &
        \includegraphics[page=3, width=1\linewidth]{figures/minLength3sum.pdf} &
        \includegraphics[page=4, width=1\linewidth]{figures/minLength3sum.pdf} &
        \includegraphics[page=5, width=1\linewidth]{figures/minLength3sum.pdf} &
        \includegraphics[page=6, width=1\linewidth]{figures/minLength3sum.pdf} &
        \includegraphics[page=7, width=1\linewidth]{figures/minLength3sum.pdf} &
        \includegraphics[page=8, width=1\linewidth]{figures/minLength3sum.pdf}\\\midrule
        \includegraphics[page=1, width=1\linewidth]{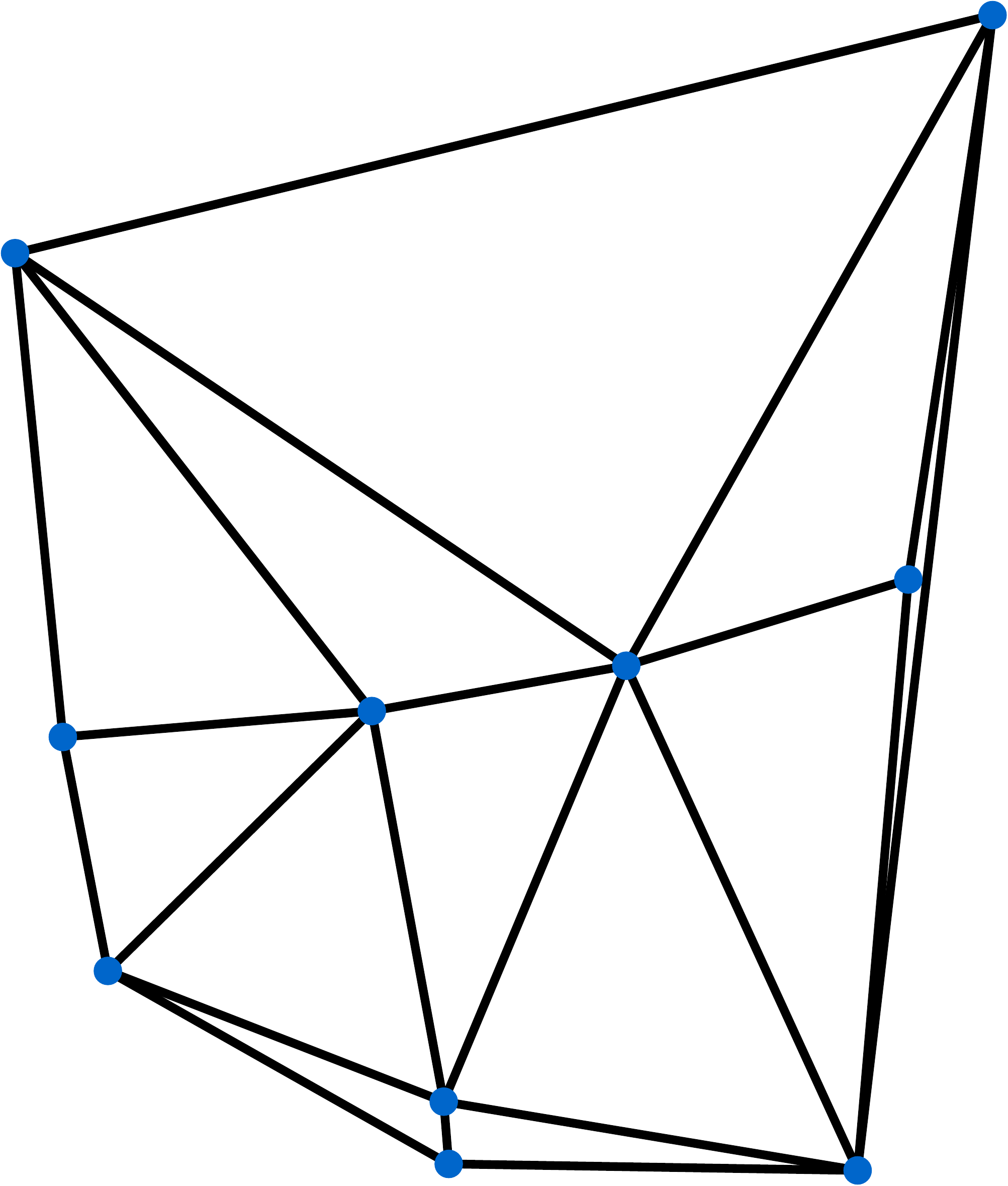} &
        \includegraphics[page=2, width=1\linewidth]{figures/minLength4sum.pdf} &
        \includegraphics[page=3, width=1\linewidth]{figures/minLength4sum.pdf} &
        \includegraphics[page=4, width=1\linewidth]{figures/minLength4sum.pdf} &
        \includegraphics[page=5, width=1\linewidth]{figures/minLength4sum.pdf} &
        \includegraphics[page=6, width=1\linewidth]{figures/minLength4sum.pdf} &
        \includegraphics[page=7, width=1\linewidth]{figures/minLength4sum.pdf} &
        \includegraphics[page=8, width=1\linewidth]{figures/minLength4sum.pdf}\\\midrule
        \includegraphics[page=1, width=1\linewidth]{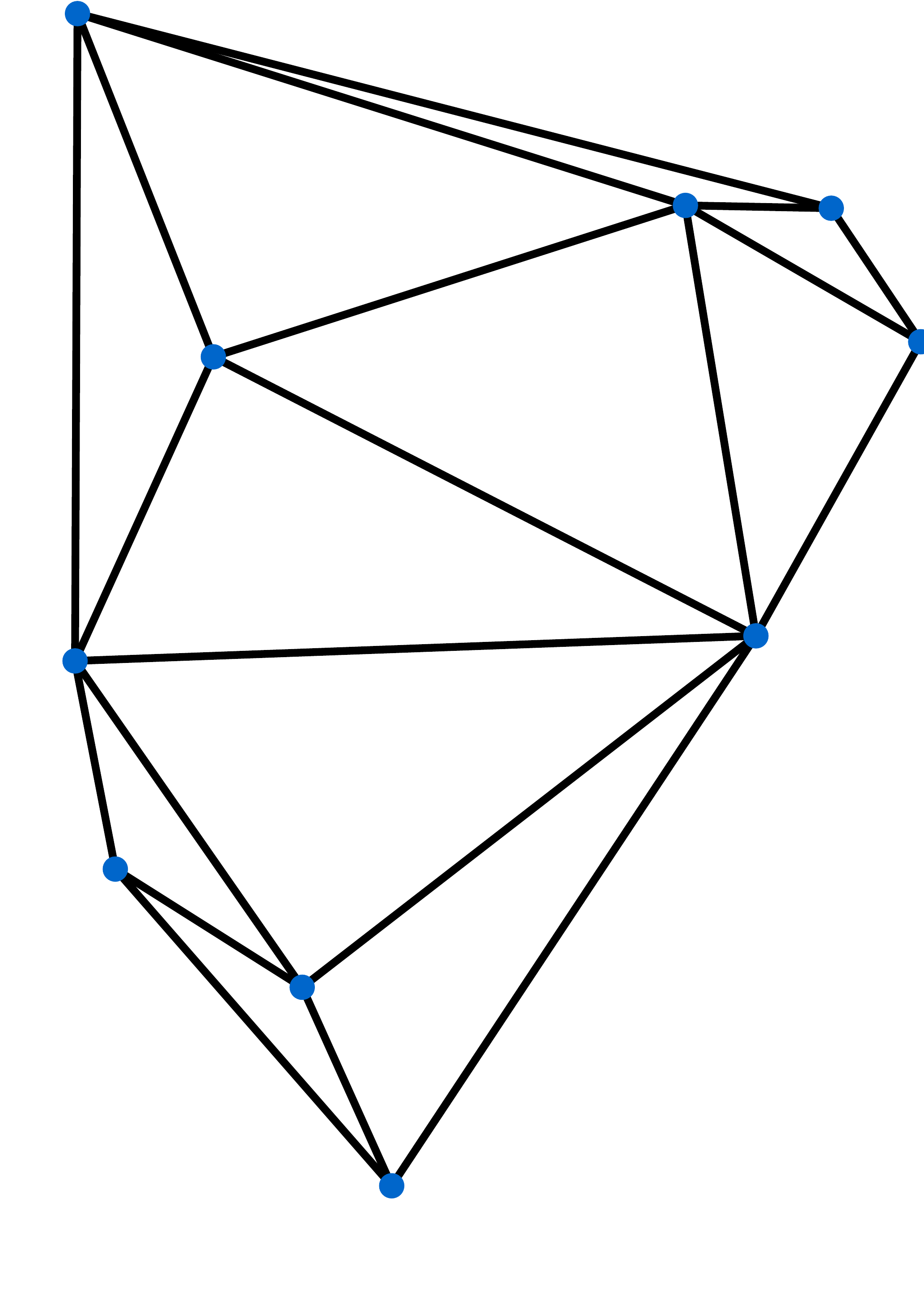} &
        \includegraphics[page=2, width=1\linewidth]{figures/minLength5sum.pdf} &
        \includegraphics[page=3, width=1\linewidth]{figures/minLength5sum.pdf} &
        \includegraphics[page=4, width=1\linewidth]{figures/minLength5sum.pdf} &
        \includegraphics[page=5, width=1\linewidth]{figures/minLength5sum.pdf} &
        \includegraphics[page=6, width=1\linewidth]{figures/minLength5sum.pdf} &
        \includegraphics[page=7, width=1\linewidth]{figures/minLength5sum.pdf} &
        \includegraphics[page=8, width=1\linewidth]{figures/minLength5sum.pdf}\\\midrule
        \includegraphics[page=1, width=1\linewidth]{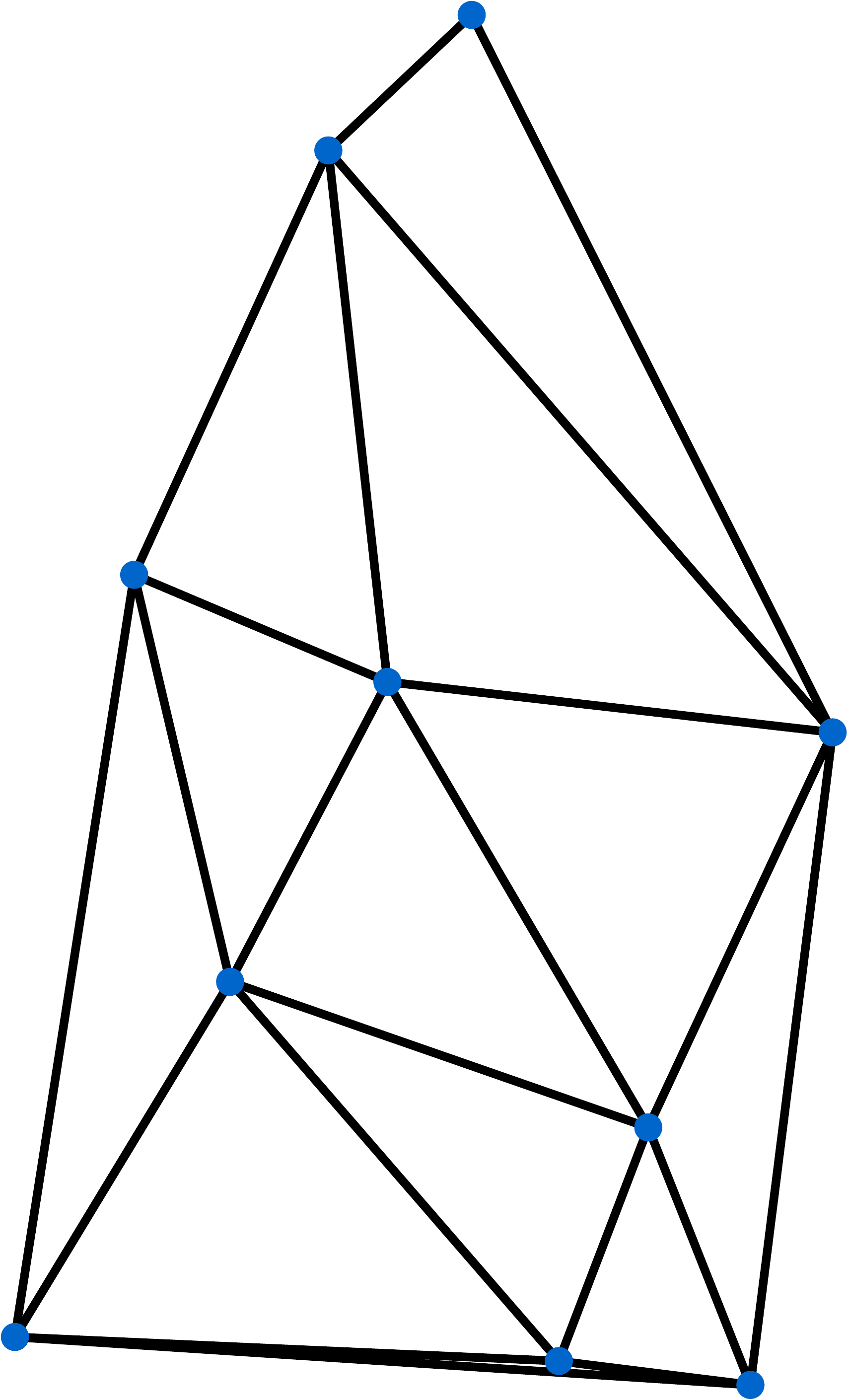} &
        \includegraphics[page=2, width=1\linewidth]{figures/minLength6sum.pdf} &
        \includegraphics[page=3, width=1\linewidth]{figures/minLength6sum.pdf} &
        \includegraphics[page=4, width=1\linewidth]{figures/minLength6sum.pdf} &
        \includegraphics[page=5, width=1\linewidth]{figures/minLength6sum.pdf} &
        \includegraphics[page=6, width=1\linewidth]{figures/minLength6sum.pdf} &
        \includegraphics[page=7, width=1\linewidth]{figures/minLength6sum.pdf} &
        \includegraphics[page=8, width=1\linewidth]{figures/minLength6sum.pdf}\\\midrule
        \includegraphics[page=1, width=1\linewidth]{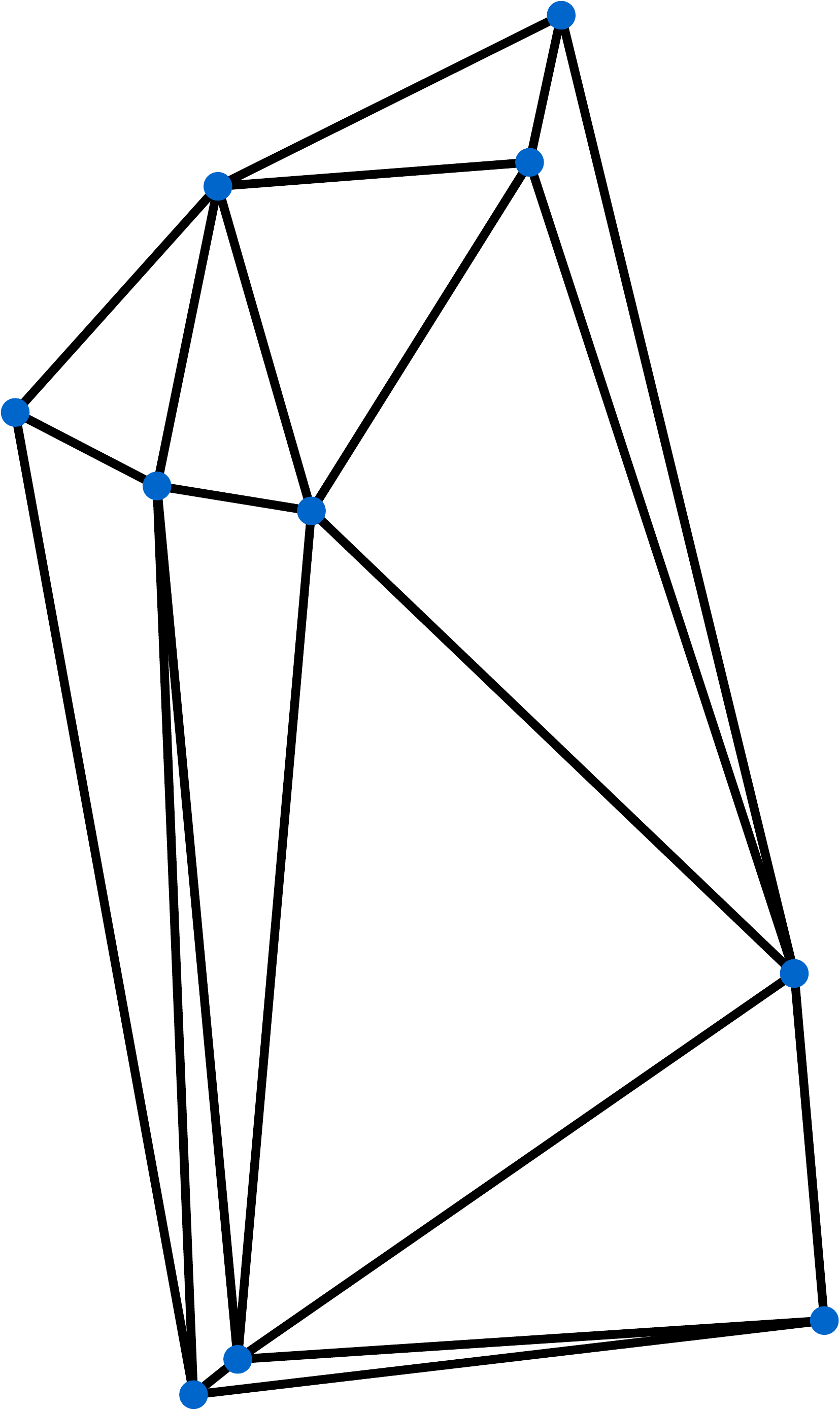} &
        \includegraphics[page=2, width=1\linewidth]{figures/minLength7sum.pdf} &
        \includegraphics[page=3, width=1\linewidth]{figures/minLength7sum.pdf} &
        \includegraphics[page=4, width=1\linewidth]{figures/minLength7sum.pdf} &
        \includegraphics[page=5, width=1\linewidth]{figures/minLength7sum.pdf} &
        \includegraphics[page=6, width=1\linewidth]{figures/minLength7sum.pdf} &
        \includegraphics[page=7, width=1\linewidth]{figures/minLength7sum.pdf} &
        \includegraphics[page=8, width=1\linewidth]{figures/minLength7sum.pdf}\\
        \bottomrule
\end{tabu}
\end{table*}

\begin{table*}[t]
\caption{Optimizing with minimum length 1.2 Delaunay length (Bottleneck aggregation)}
\label{tab:minlength_bottleneck}
\small
\begin{tabu} to \textwidth {X[1,c,m]|X[1,c,m]|X[1,c,m]|X[1,c,m]|X[1,c,m]|X[1,c,m]|X[1,c,m]|X[1,c,m]}
        \toprule
        DT & Opposing Angles & Dual Edge Ratio & Dual Area overlap & Lens & Shrunk Circle & Triangular Lens & Shrunk Circumcircle\\
        \midrule
        \includegraphics[page=1, width=1\linewidth]{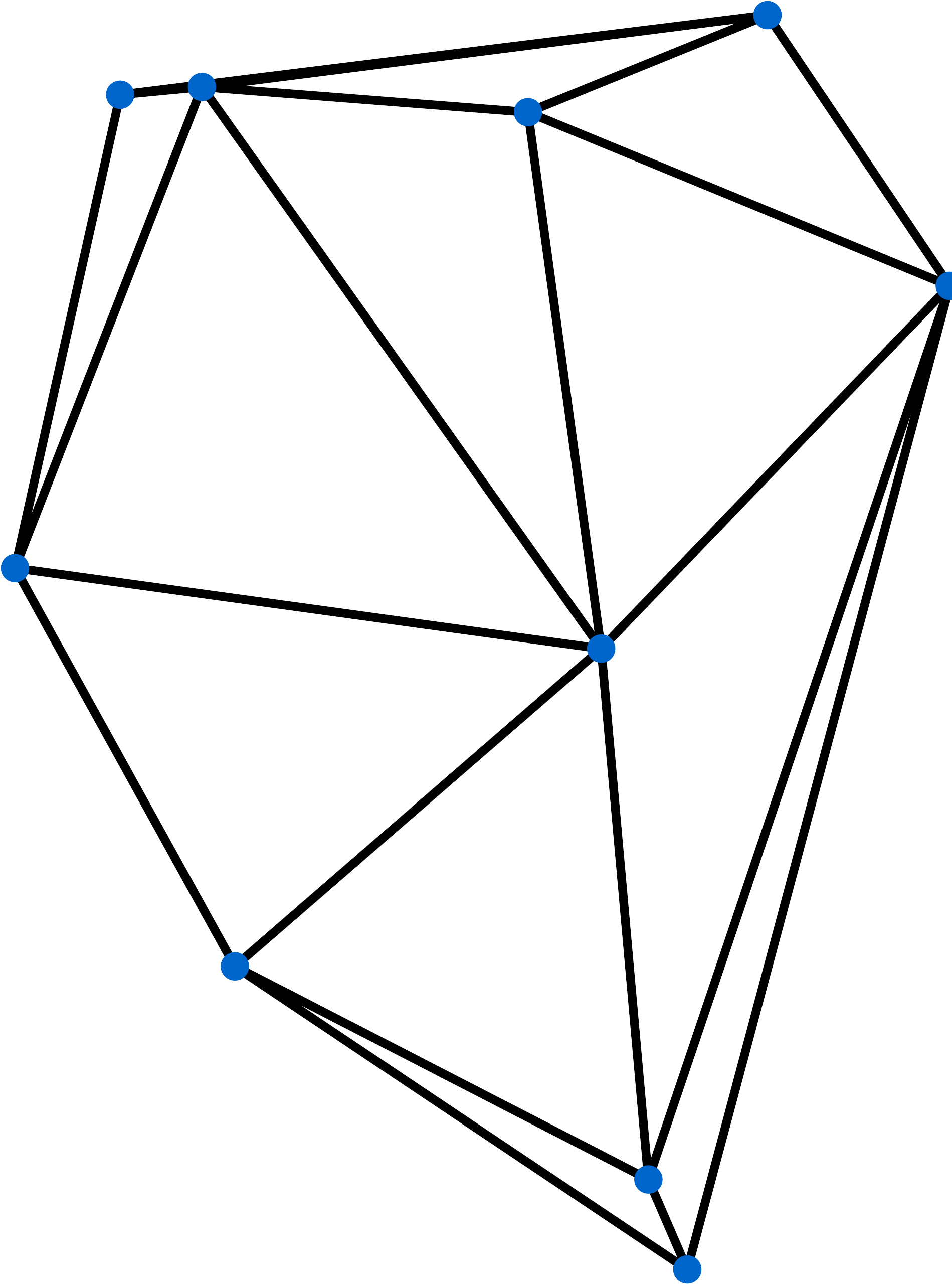} &
        \includegraphics[page=2, width=1\linewidth]{figures/minLength0max.pdf} &
        \includegraphics[page=3, width=1\linewidth]{figures/minLength0max.pdf} &
        \includegraphics[page=4, width=1\linewidth]{figures/minLength0max.pdf} &
        \includegraphics[page=5, width=1\linewidth]{figures/minLength0max.pdf} &
        \includegraphics[page=6, width=1\linewidth]{figures/minLength0max.pdf} &
        \includegraphics[page=7, width=1\linewidth]{figures/minLength0max.pdf} &
        \includegraphics[page=8, width=1\linewidth]{figures/minLength0max.pdf}\\\midrule
        \includegraphics[page=1, width=1\linewidth]{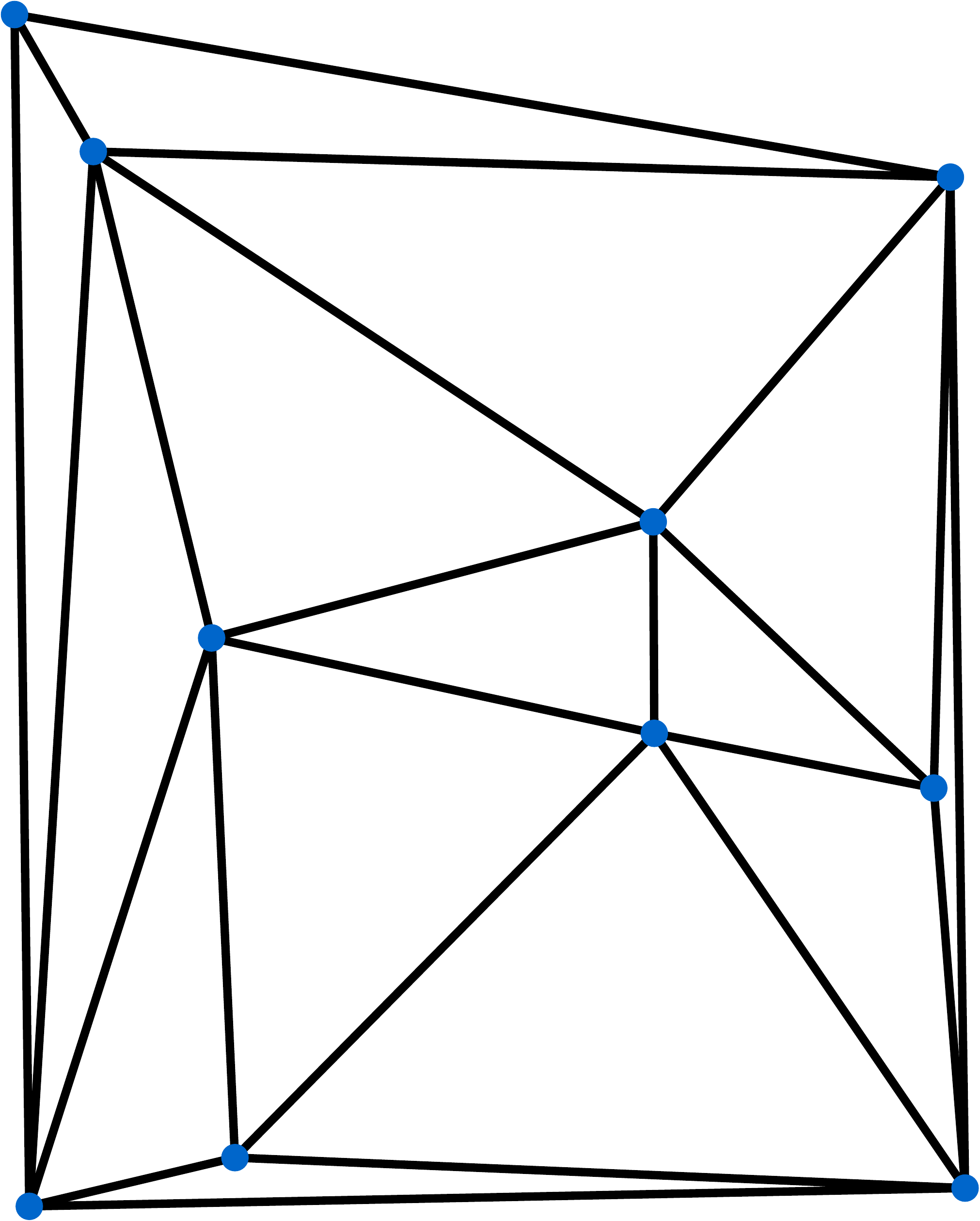} &
        \includegraphics[page=2, width=1\linewidth]{figures/minLength1max.pdf} &
        \includegraphics[page=3, width=1\linewidth]{figures/minLength1max.pdf} &
        \includegraphics[page=4, width=1\linewidth]{figures/minLength1max.pdf} &
        \includegraphics[page=5, width=1\linewidth]{figures/minLength1max.pdf} &
        \includegraphics[page=6, width=1\linewidth]{figures/minLength1max.pdf} &
        \includegraphics[page=7, width=1\linewidth]{figures/minLength1max.pdf} &
        \includegraphics[page=8, width=1\linewidth]{figures/minLength1max.pdf}\\\midrule
        \includegraphics[page=1, width=1\linewidth]{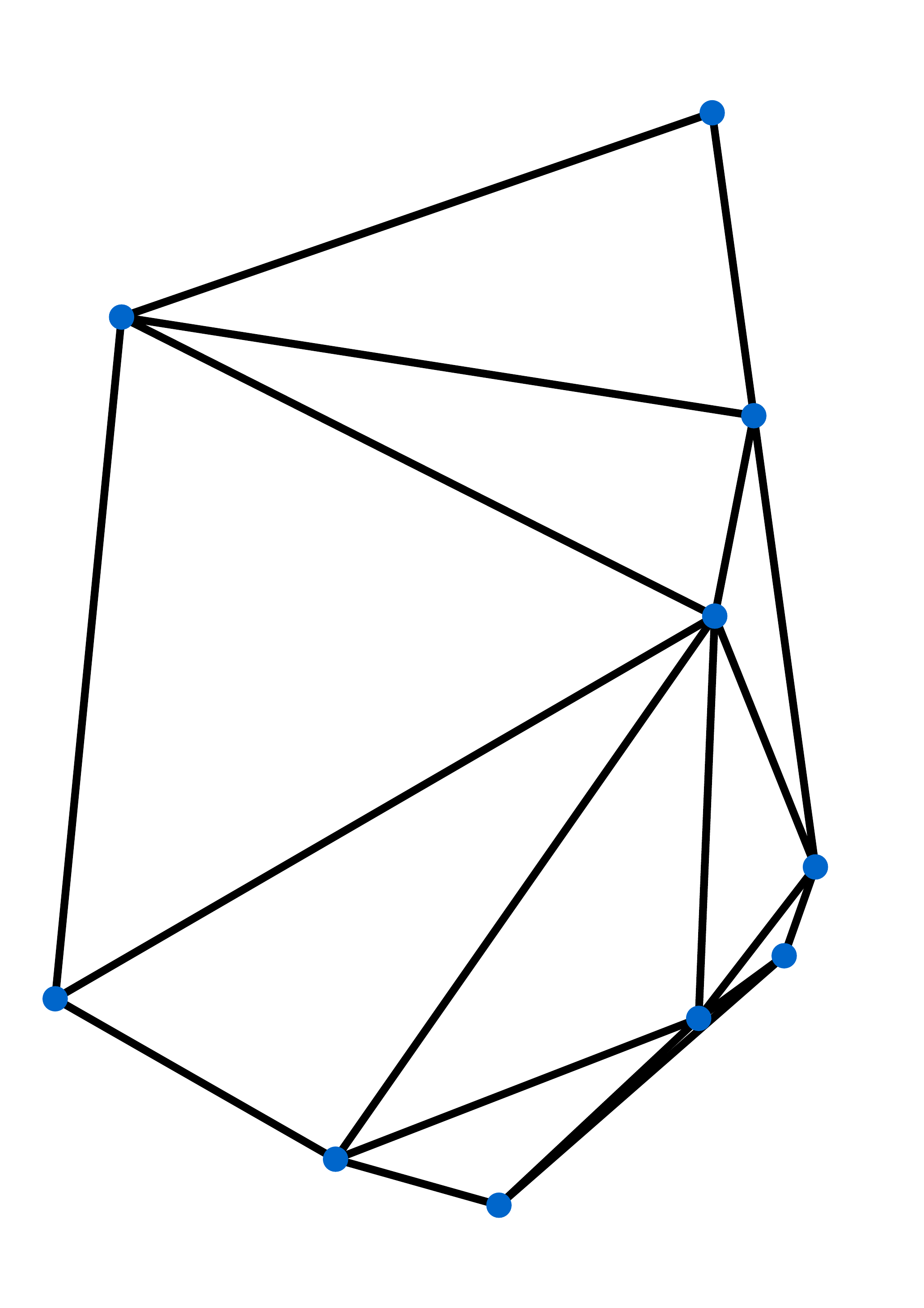} &
        \includegraphics[page=2, width=1\linewidth]{figures/minLength2max.pdf} &
        \includegraphics[page=3, width=1\linewidth]{figures/minLength2max.pdf} &
        \includegraphics[page=4, width=1\linewidth]{figures/minLength2max.pdf} &
        \includegraphics[page=5, width=1\linewidth]{figures/minLength2max.pdf} &
        \includegraphics[page=6, width=1\linewidth]{figures/minLength2max.pdf} &
        \includegraphics[page=7, width=1\linewidth]{figures/minLength2max.pdf} &
        \includegraphics[page=8, width=1\linewidth]{figures/minLength2max.pdf}\\\midrule
        \includegraphics[page=1, width=1\linewidth]{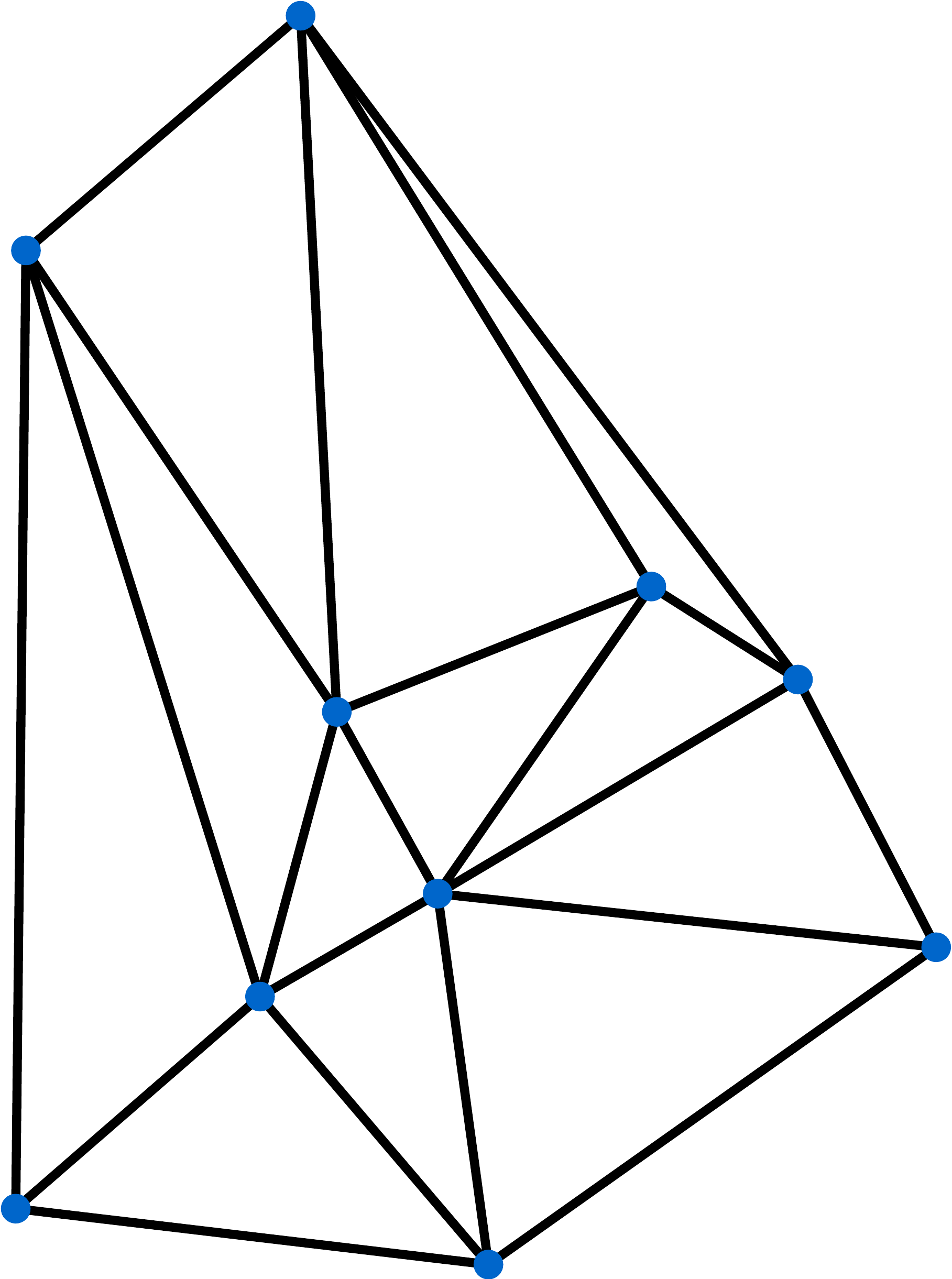} &
        \includegraphics[page=2, width=1\linewidth]{figures/minLength3max.pdf} &
        \includegraphics[page=3, width=1\linewidth]{figures/minLength3max.pdf} &
        \includegraphics[page=4, width=1\linewidth]{figures/minLength3max.pdf} &
        \includegraphics[page=5, width=1\linewidth]{figures/minLength3max.pdf} &
        \includegraphics[page=6, width=1\linewidth]{figures/minLength3max.pdf} &
        \includegraphics[page=7, width=1\linewidth]{figures/minLength3max.pdf} &
        \includegraphics[page=8, width=1\linewidth]{figures/minLength3max.pdf}\\\midrule
        \includegraphics[page=1, width=1\linewidth]{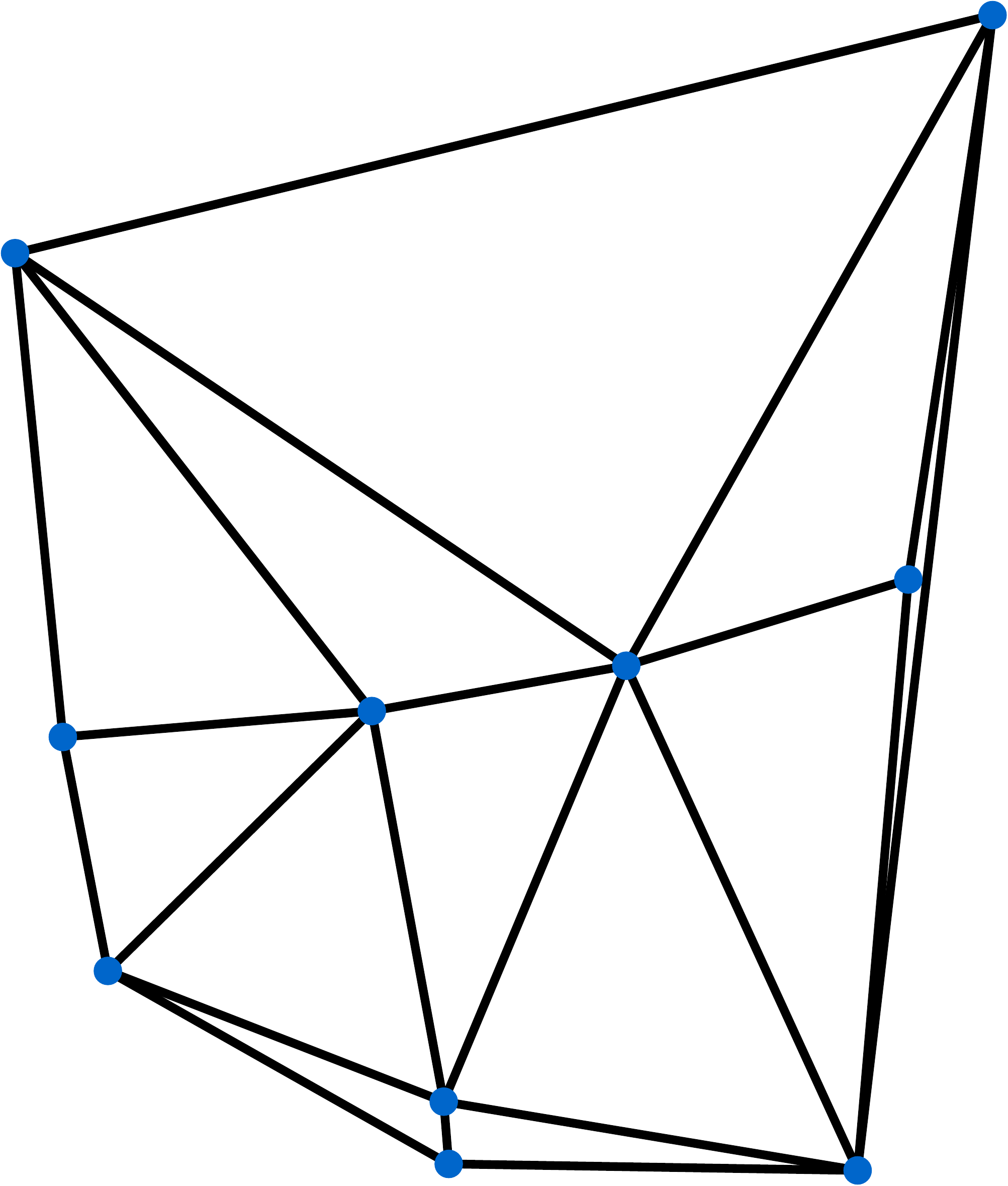} &
        \includegraphics[page=2, width=1\linewidth]{figures/minLength4max.pdf} &
        \includegraphics[page=3, width=1\linewidth]{figures/minLength4max.pdf} &
        \includegraphics[page=4, width=1\linewidth]{figures/minLength4max.pdf} &
        \includegraphics[page=5, width=1\linewidth]{figures/minLength4max.pdf} &
        \includegraphics[page=6, width=1\linewidth]{figures/minLength4max.pdf} &
        \includegraphics[page=7, width=1\linewidth]{figures/minLength4max.pdf} &
        \includegraphics[page=8, width=1\linewidth]{figures/minLength4max.pdf}\\\midrule
        \includegraphics[page=1, width=1\linewidth]{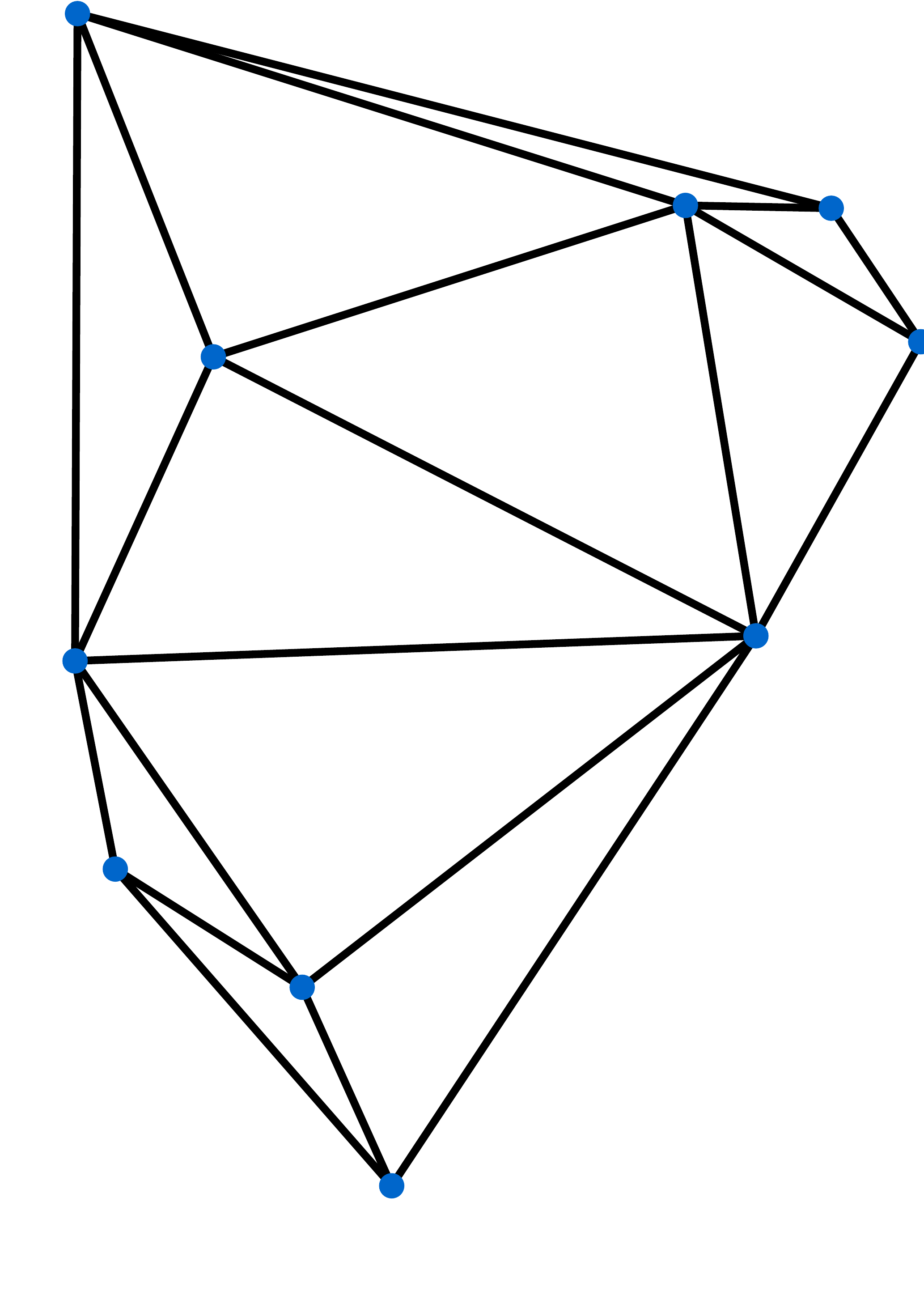} &
        \includegraphics[page=2, width=1\linewidth]{figures/minLength5max.pdf} &
        \includegraphics[page=3, width=1\linewidth]{figures/minLength5max.pdf} &
        \includegraphics[page=4, width=1\linewidth]{figures/minLength5max.pdf} &
        \includegraphics[page=5, width=1\linewidth]{figures/minLength5max.pdf} &
        \includegraphics[page=6, width=1\linewidth]{figures/minLength5max.pdf} &
        \includegraphics[page=7, width=1\linewidth]{figures/minLength5max.pdf} &
        \includegraphics[page=8, width=1\linewidth]{figures/minLength5max.pdf}\\\midrule
        \includegraphics[page=1, width=1\linewidth]{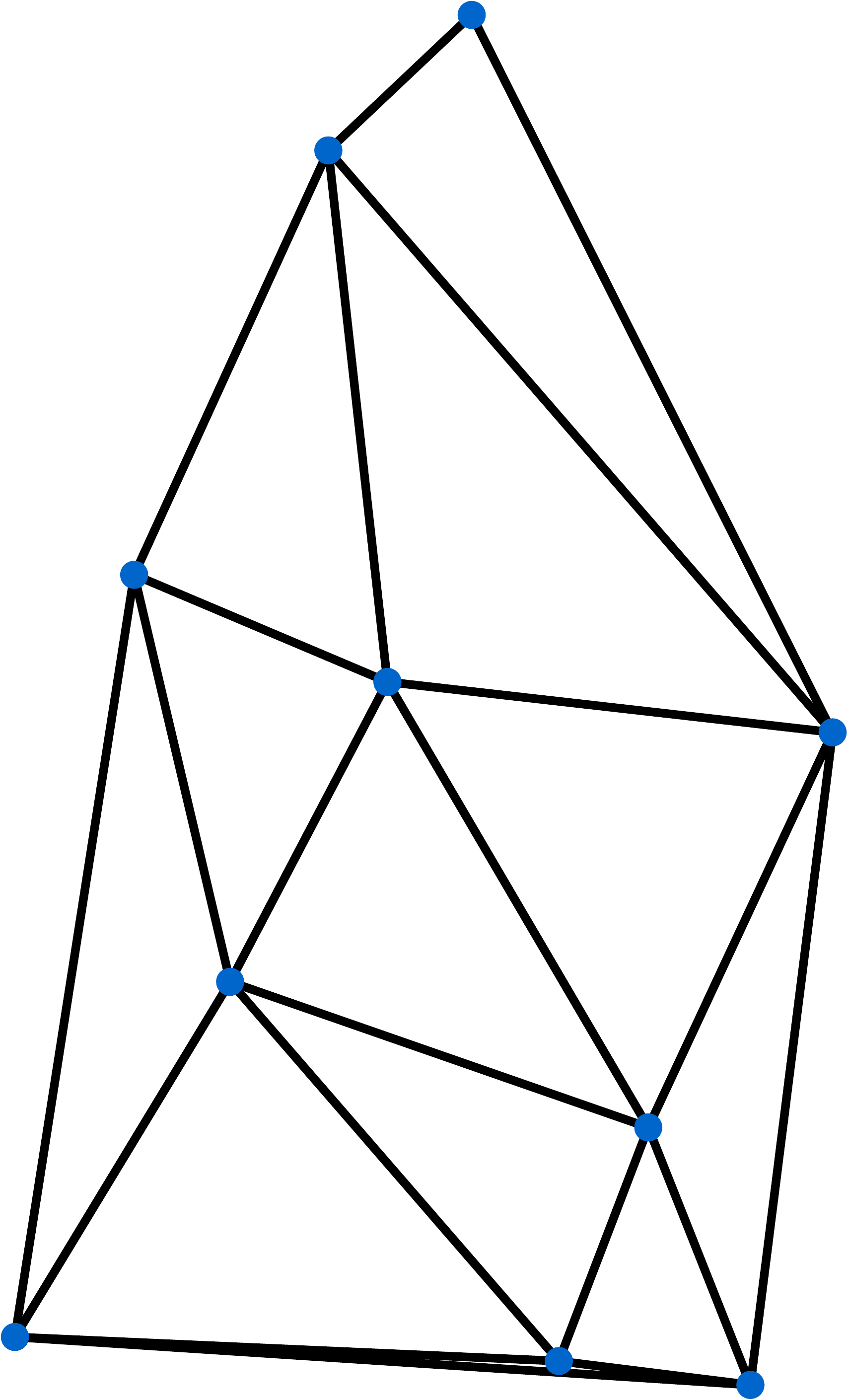} &
        \includegraphics[page=2, width=1\linewidth]{figures/minLength6max.pdf} &
        \includegraphics[page=3, width=1\linewidth]{figures/minLength6max.pdf} &
        \includegraphics[page=4, width=1\linewidth]{figures/minLength6max.pdf} &
        \includegraphics[page=5, width=1\linewidth]{figures/minLength6max.pdf} &
        \includegraphics[page=6, width=1\linewidth]{figures/minLength6max.pdf} &
        \includegraphics[page=7, width=1\linewidth]{figures/minLength6max.pdf} &
        \includegraphics[page=8, width=1\linewidth]{figures/minLength6max.pdf}\\\midrule
        \includegraphics[page=1, width=1\linewidth]{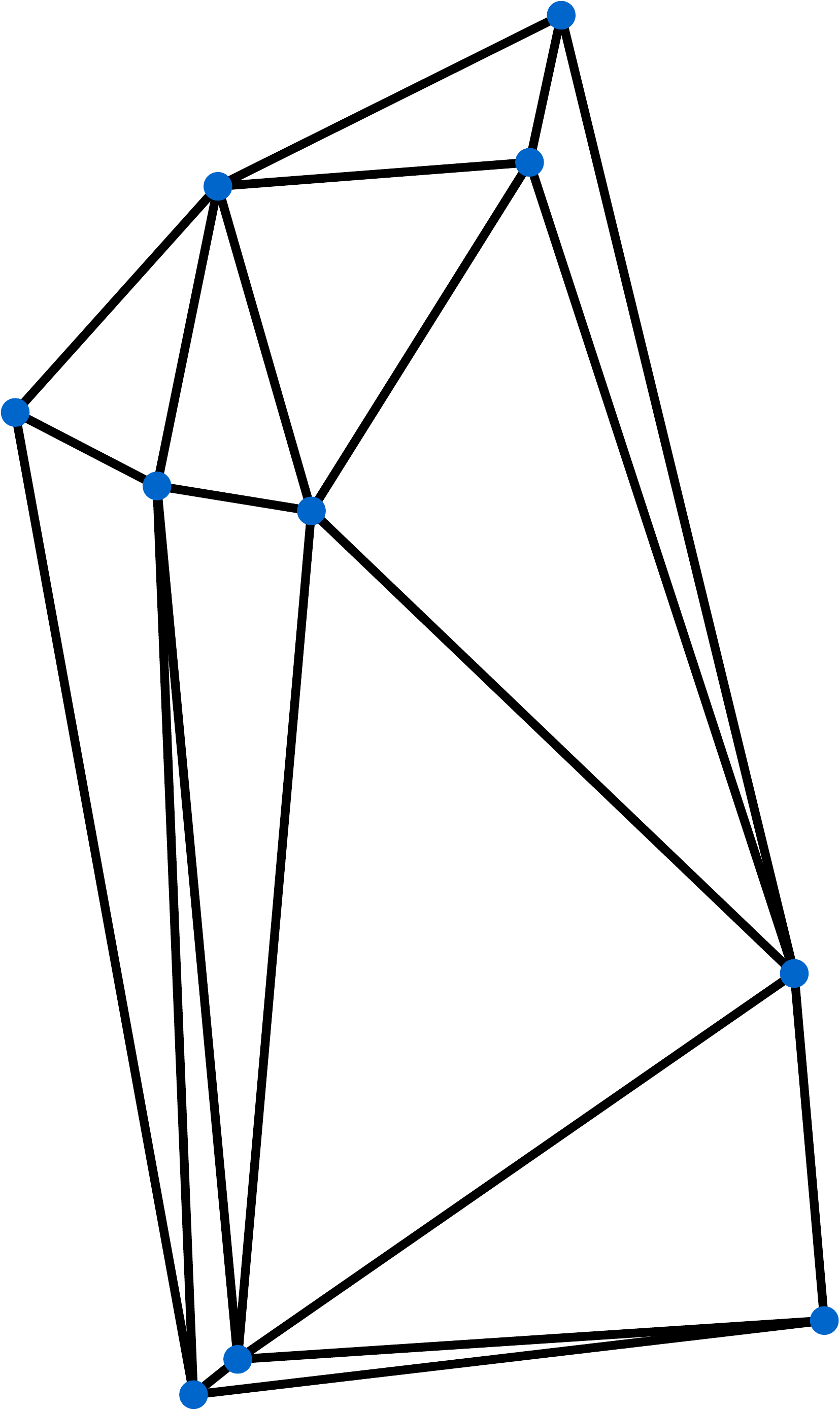} &
        \includegraphics[page=2, width=1\linewidth]{figures/minLength7max.pdf} &
        \includegraphics[page=3, width=1\linewidth]{figures/minLength7max.pdf} &
        \includegraphics[page=4, width=1\linewidth]{figures/minLength7max.pdf} &
        \includegraphics[page=5, width=1\linewidth]{figures/minLength7max.pdf} &
        \includegraphics[page=6, width=1\linewidth]{figures/minLength7max.pdf} &
        \includegraphics[page=7, width=1\linewidth]{figures/minLength7max.pdf} &
        \includegraphics[page=8, width=1\linewidth]{figures/minLength7max.pdf}\\
        \bottomrule
\end{tabu}
\end{table*}

\begin{table*}[t]
\caption{Optimizing with maximum length 0.8 Delaunay length (Sum aggregation)}
\label{tab:maxlength_sum}
\small
\begin{tabu} to \textwidth {X[1,c,m]|X[1,c,m]|X[1,c,m]|X[1,c,m]|X[1,c,m]|X[1,c,m]|X[1,c,m]|X[1,c,m]}
        \toprule
        DT & Opposing Angles & Dual Edge Ratio & Dual Area overlap & Lens & Shrunk Circle & Triangular Lens & Shrunk Circumcircle\\
        \midrule
        \includegraphics[page=1, width=1\linewidth]{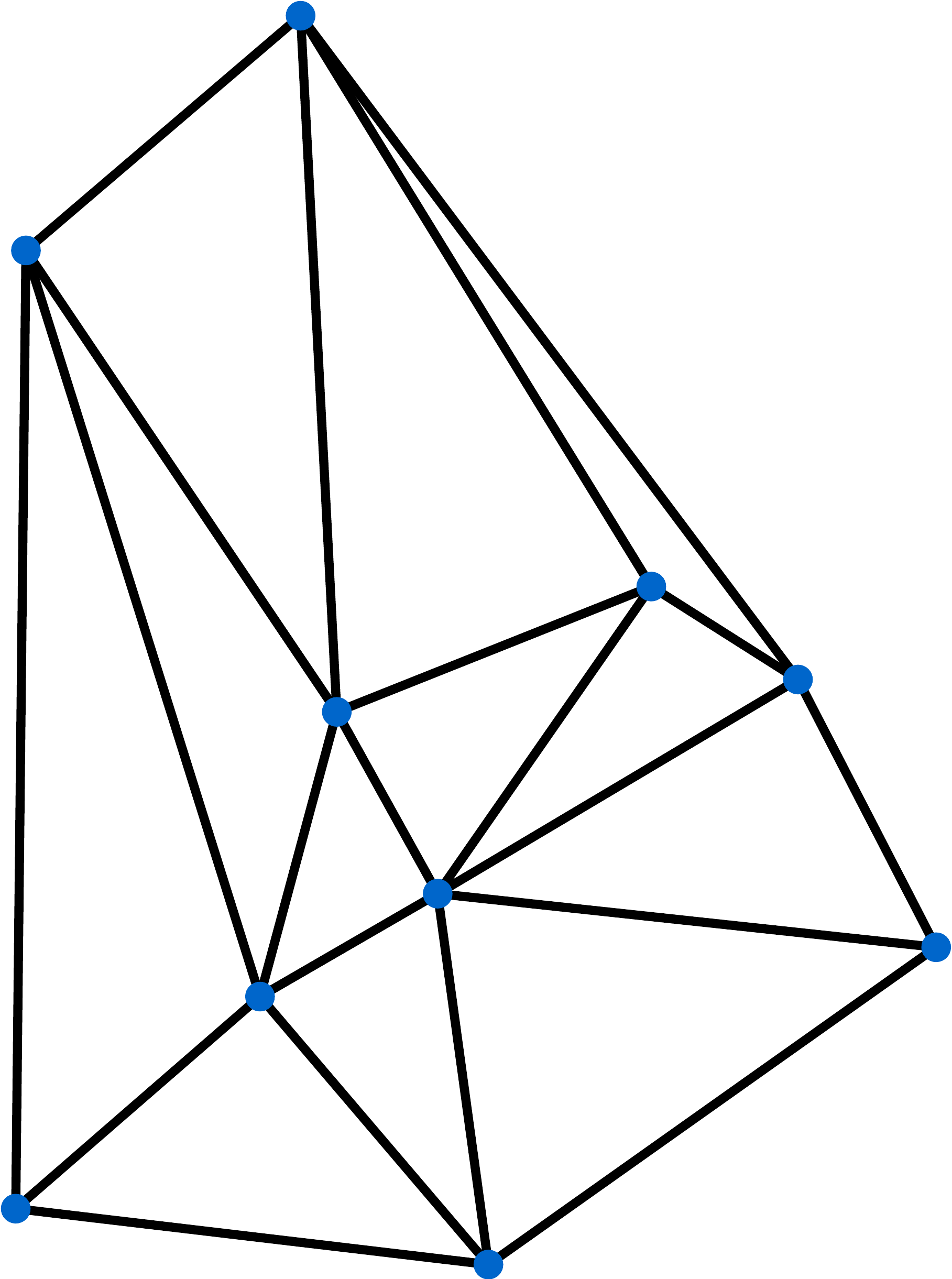} &
        \includegraphics[page=2, width=1\linewidth]{figures/maxLength3sum.pdf} &
        \includegraphics[page=3, width=1\linewidth]{figures/maxLength3sum.pdf} &
        \includegraphics[page=4, width=1\linewidth]{figures/maxLength3sum.pdf} &
        \includegraphics[page=5, width=1\linewidth]{figures/maxLength3sum.pdf} &
        \includegraphics[page=6, width=1\linewidth]{figures/maxLength3sum.pdf} &
        \includegraphics[page=7, width=1\linewidth]{figures/maxLength3sum.pdf} &
        \includegraphics[page=8, width=1\linewidth]{figures/maxLength3sum.pdf}\\\midrule
        \includegraphics[page=1, width=1\linewidth]{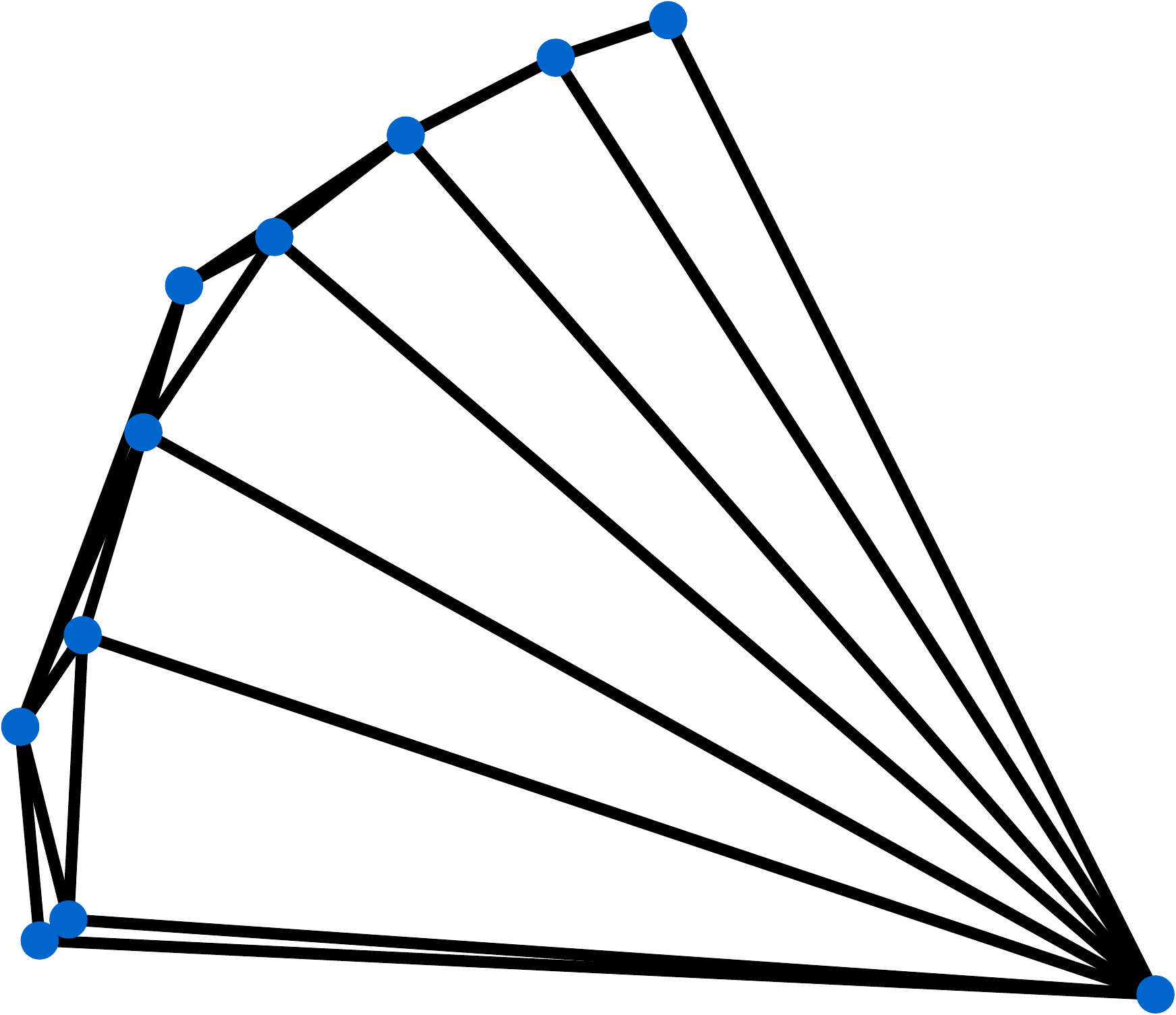} &
        \includegraphics[page=2, width=1\linewidth]{figures/maxLength8sum.pdf} &
        \includegraphics[page=3, width=1\linewidth]{figures/maxLength8sum.pdf} &
        \includegraphics[page=4, width=1\linewidth]{figures/maxLength8sum.pdf} &
        \includegraphics[page=5, width=1\linewidth]{figures/maxLength8sum.pdf} &
        \includegraphics[page=6, width=1\linewidth]{figures/maxLength8sum.pdf} &
        \includegraphics[page=7, width=1\linewidth]{figures/maxLength8sum.pdf} &
        \includegraphics[page=8, width=1\linewidth]{figures/maxLength8sum.pdf}\\\midrule
        \includegraphics[page=1, width=1\linewidth]{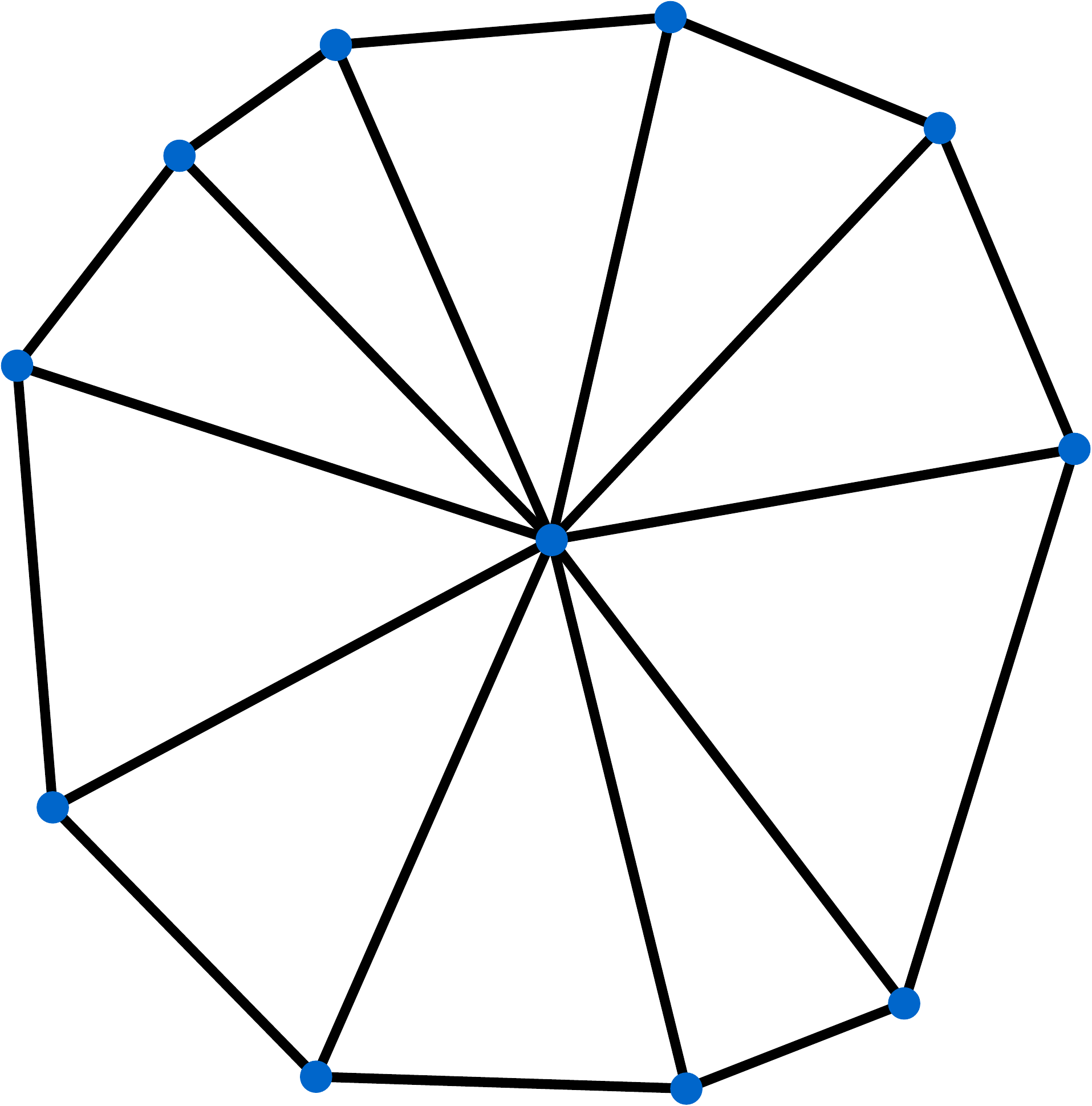} &
        \includegraphics[page=2, width=1\linewidth]{figures/maxLength9sum.pdf} &
        \includegraphics[page=3, width=1\linewidth]{figures/maxLength9sum.pdf} &
        \includegraphics[page=4, width=1\linewidth]{figures/maxLength9sum.pdf} &
        \includegraphics[page=5, width=1\linewidth]{figures/maxLength9sum.pdf} &
        \includegraphics[page=6, width=1\linewidth]{figures/maxLength9sum.pdf} &
        \includegraphics[page=7, width=1\linewidth]{figures/maxLength9sum.pdf} &
        \includegraphics[page=8, width=1\linewidth]{figures/maxLength9sum.pdf}\\
        \bottomrule
\end{tabu}
\vspace{\baselineskip}
\end{table*}

\begin{table*}[t]
\caption{Optimizing with maximum length 0.8 Delaunay length (Bottleneck aggregation)}
\label{tab:maxlength_bottleneck}
\small
\begin{tabu} to \textwidth {X[1,c,m]|X[1,c,m]|X[1,c,m]|X[1,c,m]|X[1,c,m]|X[1,c,m]|X[1,c,m]|X[1,c,m]}
        \toprule
        DT & Opposing Angles & Dual Edge Ratio & Dual Area overlap & Lens & Shrunk Circle & Triangular Lens & Shrunk Circumcircle\\
        \midrule
        \includegraphics[page=1, width=1\linewidth]{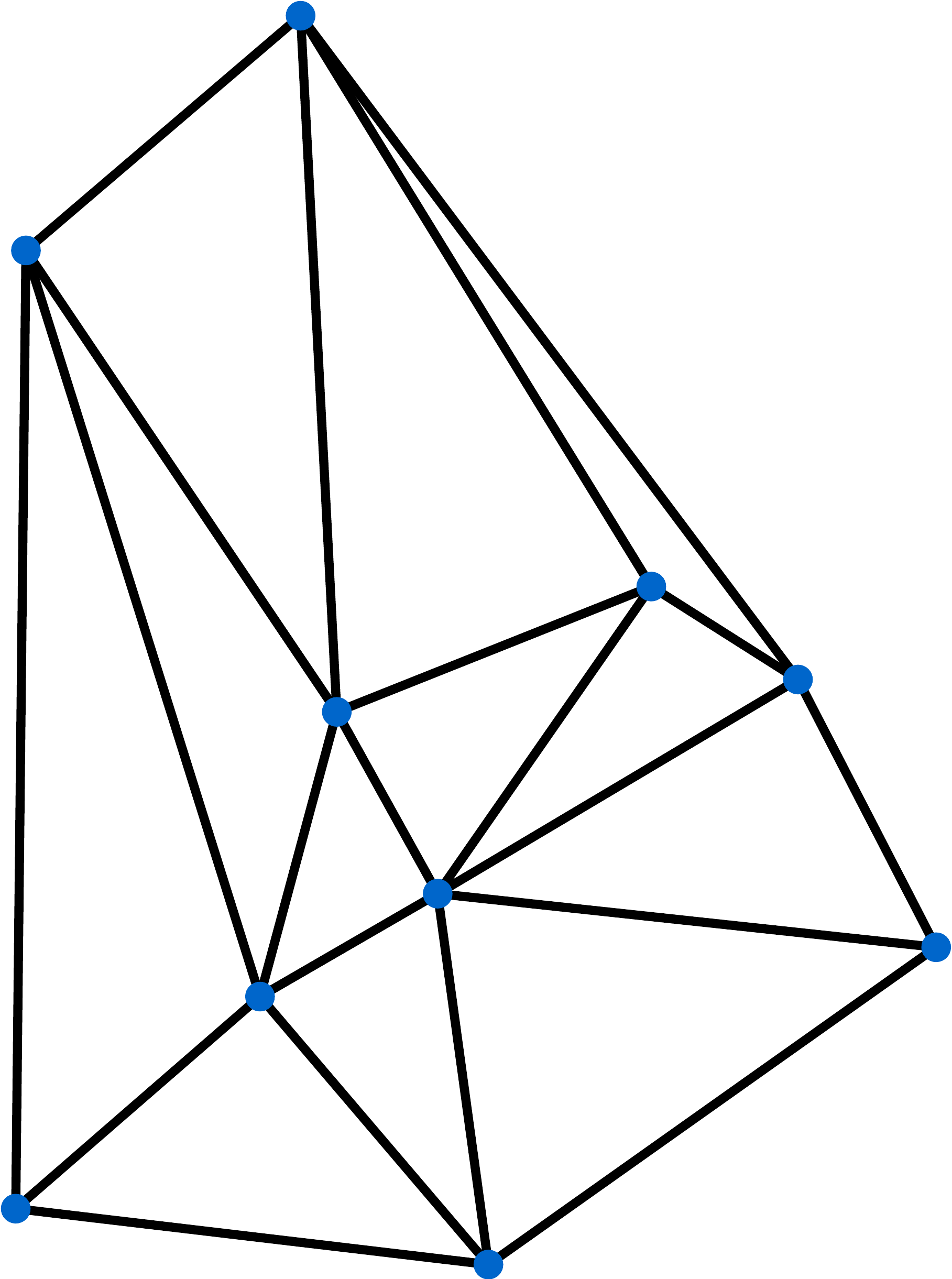} &
        \includegraphics[page=2, width=1\linewidth]{figures/maxLength3max.pdf} &
        \includegraphics[page=3, width=1\linewidth]{figures/maxLength3max.pdf} &
        \includegraphics[page=4, width=1\linewidth]{figures/maxLength3max.pdf} &
        \includegraphics[page=5, width=1\linewidth]{figures/maxLength3max.pdf} &
        \includegraphics[page=6, width=1\linewidth]{figures/maxLength3max.pdf} &
        \includegraphics[page=7, width=1\linewidth]{figures/maxLength3max.pdf} &
        \includegraphics[page=8, width=1\linewidth]{figures/maxLength3max.pdf}\\\midrule
        \includegraphics[page=1, width=1\linewidth]{figures/maxLength8max.pdf} &
        \includegraphics[page=2, width=1\linewidth]{figures/maxLength8max.pdf} &
        \includegraphics[page=3, width=1\linewidth]{figures/maxLength8max.pdf} &
        \includegraphics[page=4, width=1\linewidth]{figures/maxLength8max.pdf} &
        \includegraphics[page=5, width=1\linewidth]{figures/maxLength8max.pdf} &
        \includegraphics[page=6, width=1\linewidth]{figures/maxLength8max.pdf} &
        \includegraphics[page=7, width=1\linewidth]{figures/maxLength8max.pdf} &
        \includegraphics[page=8, width=1\linewidth]{figures/maxLength8max.pdf}\\\midrule
        \includegraphics[page=1, width=1\linewidth]{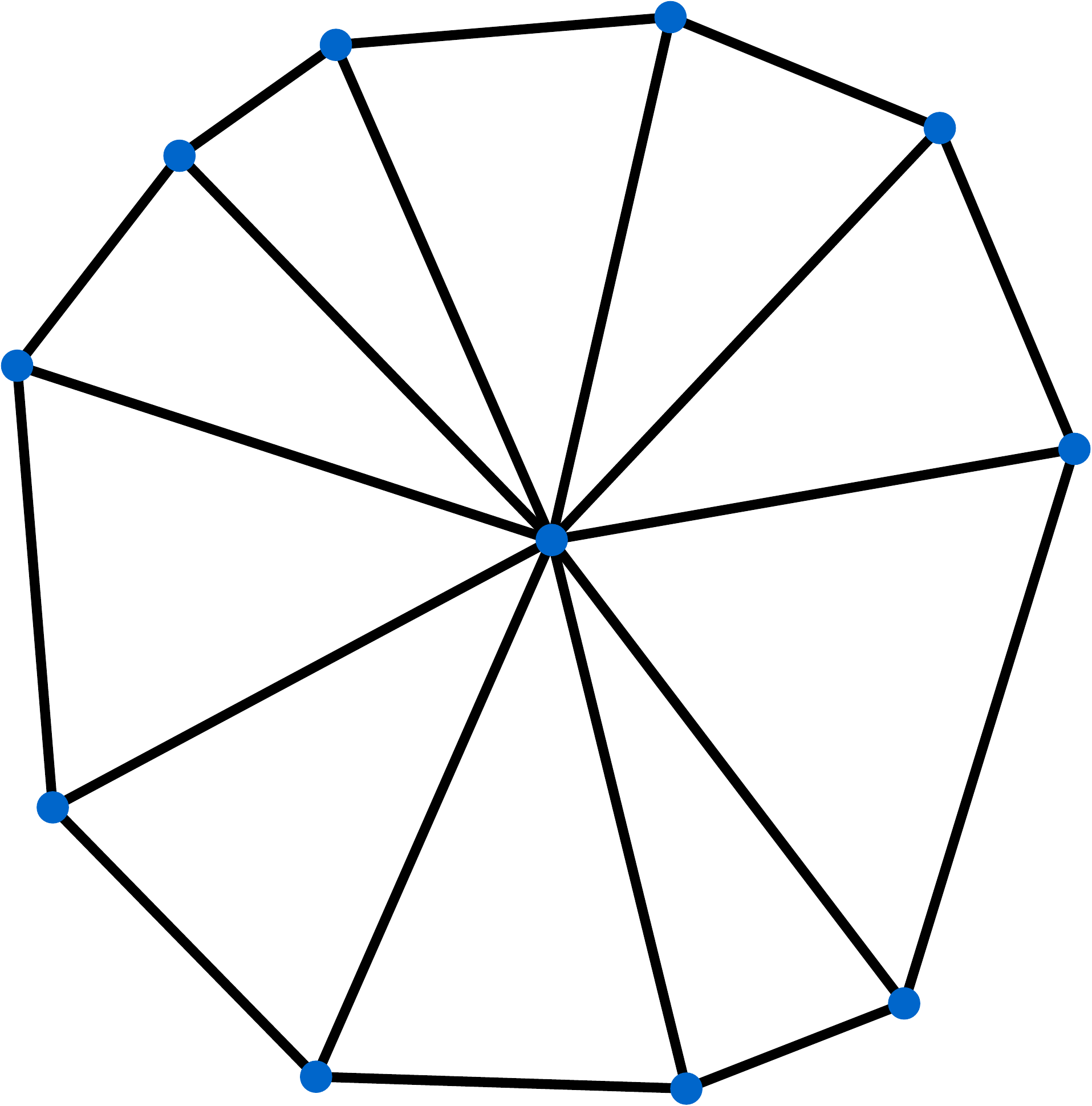} &
        \includegraphics[page=2, width=1\linewidth]{figures/maxLength9max.pdf} &
        \includegraphics[page=3, width=1\linewidth]{figures/maxLength9max.pdf} &
        \includegraphics[page=4, width=1\linewidth]{figures/maxLength9max.pdf} &
        \includegraphics[page=5, width=1\linewidth]{figures/maxLength9max.pdf} &
        \includegraphics[page=6, width=1\linewidth]{figures/maxLength9max.pdf} &
        \includegraphics[page=7, width=1\linewidth]{figures/maxLength9max.pdf} &
        \includegraphics[page=8, width=1\linewidth]{figures/maxLength9max.pdf}\\
        \bottomrule
\end{tabu}
\vspace{6cm}~
\end{table*}

\begin{table*}[t]
\caption{Optimizing with maximum degree 5 (Sum aggregation)}
\label{tab:maxdegree_sum}
\small
\begin{tabu} to \textwidth {X[1,c,m]|X[1,c,m]|X[1,c,m]|X[1,c,m]|X[1,c,m]|X[1,c,m]|X[1,c,m]|X[1,c,m]}
        \toprule
        DT & Opposing Angles & Dual Edge Ratio & Dual Area overlap & Lens & Shrunk Circle & Triangular Lens & Shrunk Circumcircle\\
        \midrule
        \includegraphics[page=1, width=1\linewidth]{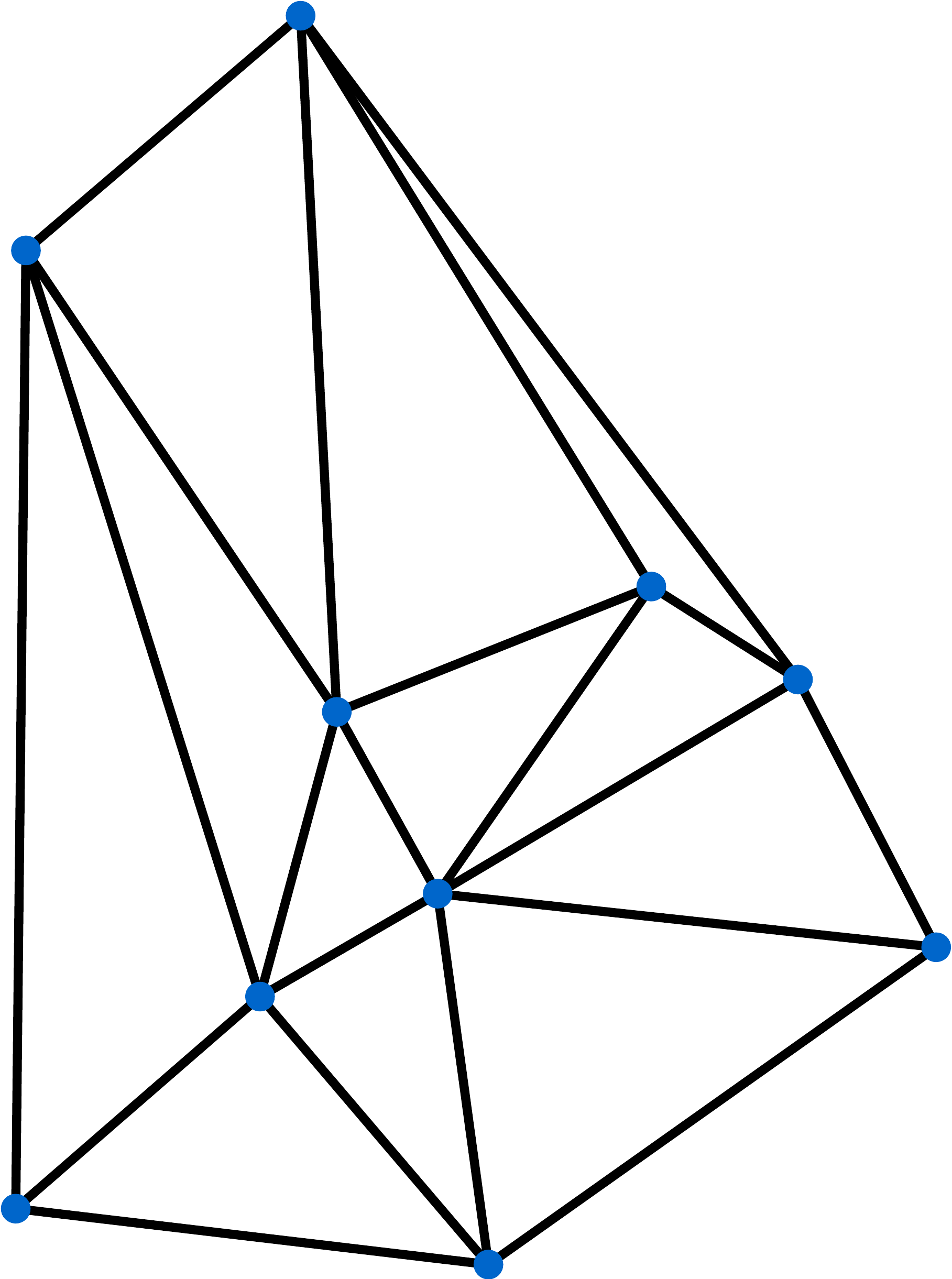} &
        \includegraphics[page=2, width=1\linewidth]{figures/maxDegree3sum.pdf} &
        \includegraphics[page=3, width=1\linewidth]{figures/maxDegree3sum.pdf} &
        \includegraphics[page=4, width=1\linewidth]{figures/maxDegree3sum.pdf} &
        \includegraphics[page=5, width=1\linewidth]{figures/maxDegree3sum.pdf} &
        \includegraphics[page=6, width=1\linewidth]{figures/maxDegree3sum.pdf} &
        \includegraphics[page=7, width=1\linewidth]{figures/maxDegree3sum.pdf} &
        \includegraphics[page=8, width=1\linewidth]{figures/maxDegree3sum.pdf}\\\midrule
        \includegraphics[page=1, width=1\linewidth]{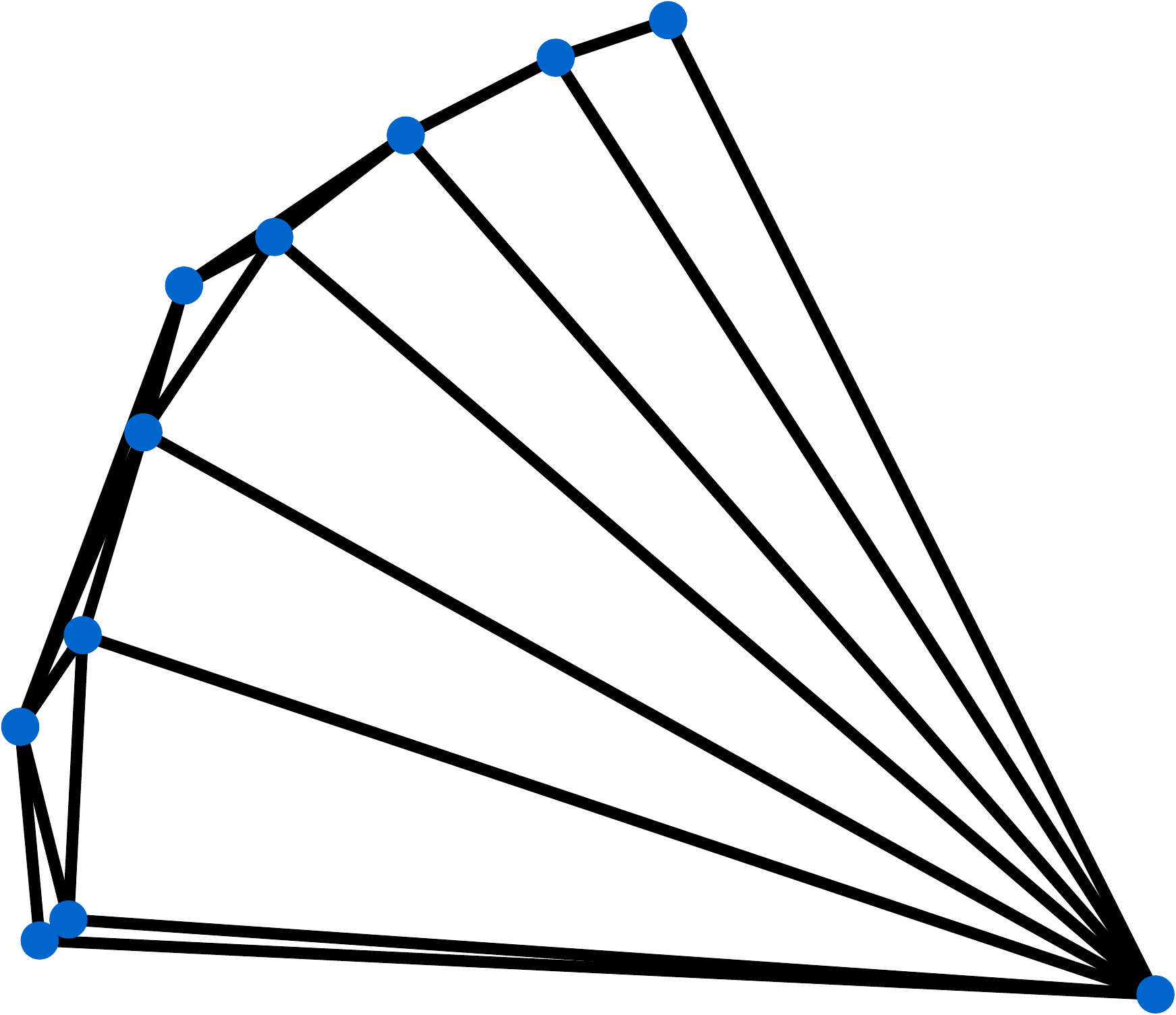} &
        \includegraphics[page=2, width=1\linewidth]{figures/maxDegree8sum.pdf} &
        \includegraphics[page=3, width=1\linewidth]{figures/maxDegree8sum.pdf} &
        \includegraphics[page=4, width=1\linewidth]{figures/maxDegree8sum.pdf} &
        \includegraphics[page=5, width=1\linewidth]{figures/maxDegree8sum.pdf} &
        \includegraphics[page=6, width=1\linewidth]{figures/maxDegree8sum.pdf} &
        \includegraphics[page=7, width=1\linewidth]{figures/maxDegree8sum.pdf} &
        \includegraphics[page=8, width=1\linewidth]{figures/maxDegree8sum.pdf}\\\midrule
        \includegraphics[page=1, width=1\linewidth]{figures/maxDegree9sum.pdf} &
        \includegraphics[page=2, width=1\linewidth]{figures/maxDegree9sum.pdf} &
        \includegraphics[page=3, width=1\linewidth]{figures/maxDegree9sum.pdf} &
        \includegraphics[page=4, width=1\linewidth]{figures/maxDegree9sum.pdf} &
        \includegraphics[page=5, width=1\linewidth]{figures/maxDegree9sum.pdf} &
        \includegraphics[page=6, width=1\linewidth]{figures/maxDegree9sum.pdf} &
        \includegraphics[page=7, width=1\linewidth]{figures/maxDegree9sum.pdf} &
        \includegraphics[page=8, width=1\linewidth]{figures/maxDegree9sum.pdf}\\
        \bottomrule
\end{tabu}
\vspace{\baselineskip}
\end{table*}

\begin{table*}[t]
\caption{Optimizing with maximum degree 5 (Bottleneck aggregation)}
\label{tab:maxdegree_bottleneck}
\small
\begin{tabu} to \textwidth {X[1,c,m]|X[1,c,m]|X[1,c,m]|X[1,c,m]|X[1,c,m]|X[1,c,m]|X[1,c,m]|X[1,c,m]}
        \toprule
        DT & Opposing Angles & Dual Edge Ratio & Dual Area overlap & Lens & Shrunk Circle & Triangular Lens & Shrunk Circumcircle\\
        \midrule
        \includegraphics[page=1, width=1\linewidth]{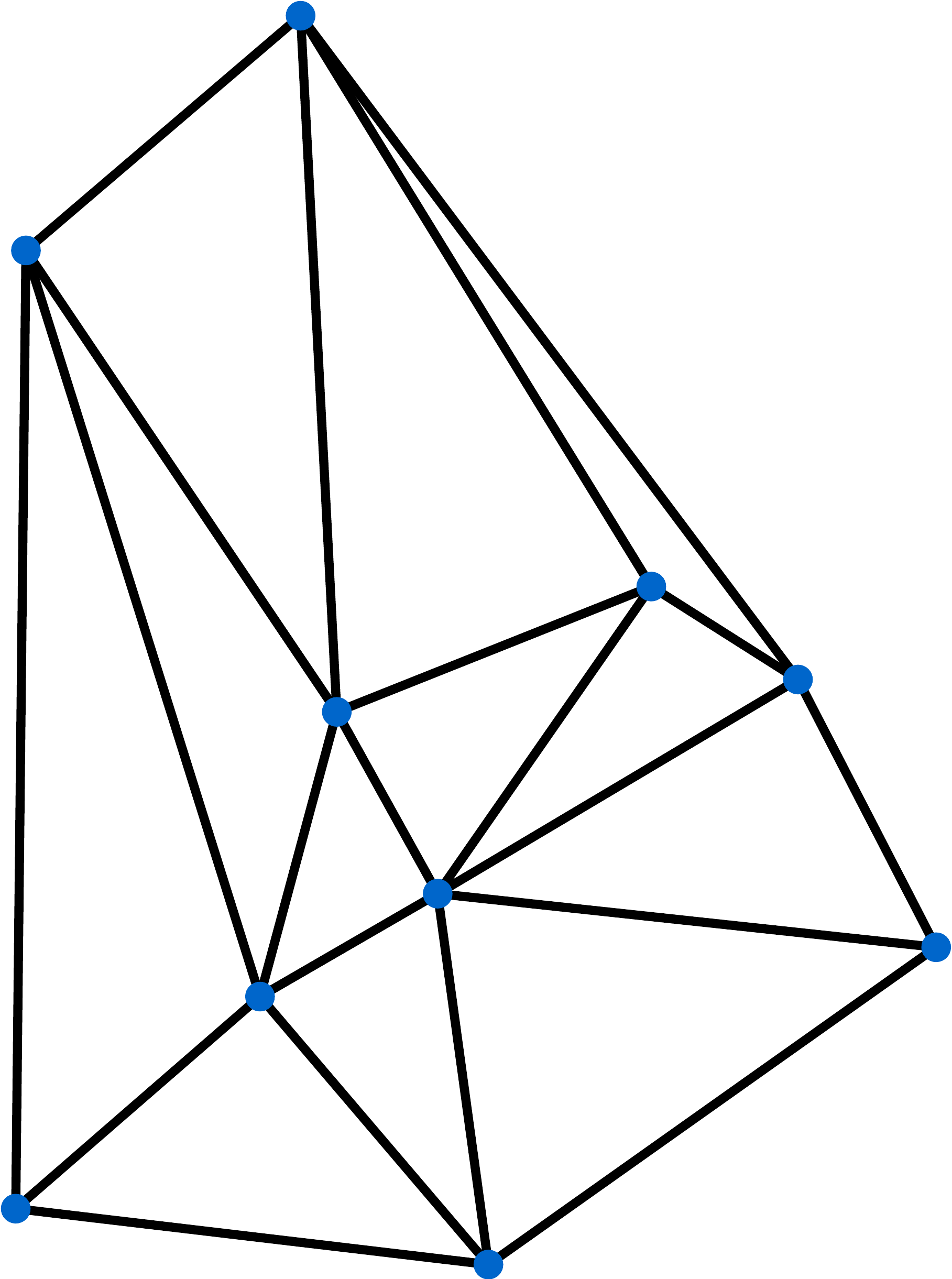} &
        \includegraphics[page=2, width=1\linewidth]{figures/maxDegree3max.pdf} &
        \includegraphics[page=3, width=1\linewidth]{figures/maxDegree3max.pdf} &
        \includegraphics[page=4, width=1\linewidth]{figures/maxDegree3max.pdf} &
        \includegraphics[page=5, width=1\linewidth]{figures/maxDegree3max.pdf} &
        \includegraphics[page=6, width=1\linewidth]{figures/maxDegree3max.pdf} &
        \includegraphics[page=7, width=1\linewidth]{figures/maxDegree3max.pdf} &
        \includegraphics[page=8, width=1\linewidth]{figures/maxDegree3max.pdf}\\\midrule
        \includegraphics[page=1, width=1\linewidth]{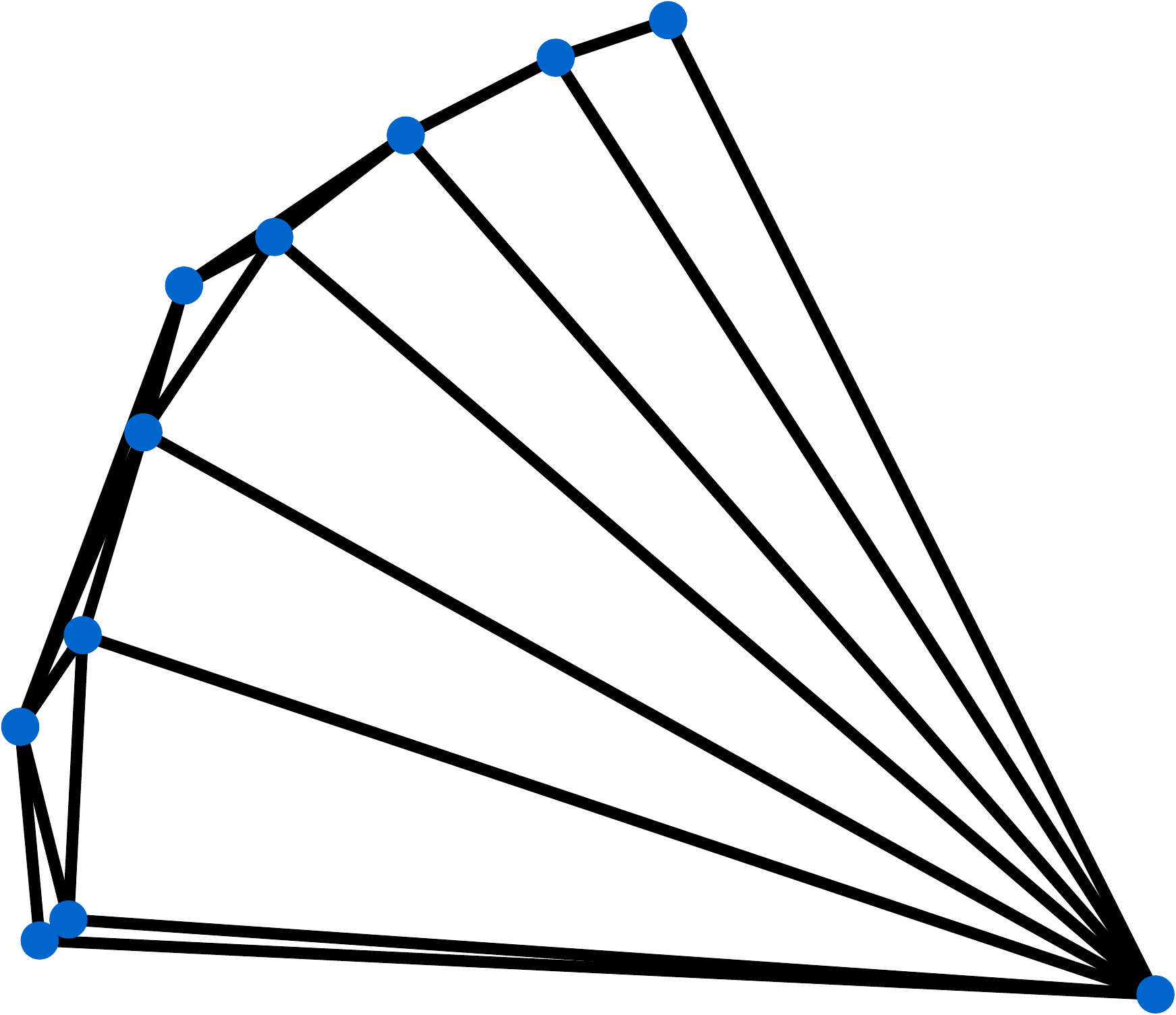} &
        \includegraphics[page=2, width=1\linewidth]{figures/maxDegree8max.pdf} &
        \includegraphics[page=3, width=1\linewidth]{figures/maxDegree8max.pdf} &
        \includegraphics[page=4, width=1\linewidth]{figures/maxDegree8max.pdf} &
        \includegraphics[page=5, width=1\linewidth]{figures/maxDegree8max.pdf} &
        \includegraphics[page=6, width=1\linewidth]{figures/maxDegree8max.pdf} &
        \includegraphics[page=7, width=1\linewidth]{figures/maxDegree8max.pdf} &
        \includegraphics[page=8, width=1\linewidth]{figures/maxDegree8max.pdf}\\\midrule
        \includegraphics[page=1, width=1\linewidth]{figures/maxDegree9max.pdf} &
        \includegraphics[page=2, width=1\linewidth]{figures/maxDegree9max.pdf} &
        \includegraphics[page=3, width=1\linewidth]{figures/maxDegree9max.pdf} &
        \includegraphics[page=4, width=1\linewidth]{figures/maxDegree9max.pdf} &
        \includegraphics[page=5, width=1\linewidth]{figures/maxDegree9max.pdf} &
        \includegraphics[page=6, width=1\linewidth]{figures/maxDegree9max.pdf} &
        \includegraphics[page=7, width=1\linewidth]{figures/maxDegree9max.pdf} &
        \includegraphics[page=8, width=1\linewidth]{figures/maxDegree9max.pdf}\\
        \bottomrule
\end{tabu}
\vspace{6cm}~
\end{table*}


\begin{thebibliography}{10}

\bibitem{aichholzer2004lower}
O.~Aichholzer, F.~Hurtado, and M.~Noy.
\newblock A lower bound on the number of triangulations of planar point sets.
\newblock {\em Computational Geometry}, 29(2):135--145, 2004.

\bibitem{bern1993edge}
M.~Bern, H.~Edelsbrunner, D.~Eppstein, S.~Mitchell, and T.~S. Tan.
\newblock Edge insertion for optimal triangulations.
\newblock {\em Discrete \& Computational Geometry}, 10(1):47--65, 1993.

\bibitem{PaulChew1989}
L.~P. Chew.
\newblock Constrained {D}elaunay triangulations.
\newblock {\em Algorithmica}, 4(1-4):97--108, 1989.

\bibitem{dobkin1990delaunay}
D.~P. Dobkin, S.~J. Friedman, and K.~J. Supowit.
\newblock Delaunay graphs are almost as good as complete graphs.
\newblock {\em Discrete \& Computational Geometry}, 5(4):399--407, 1990.

\bibitem{d1989optimal}
E.~F. D’Azevedo and R.~B. Simpson.
\newblock On optimal interpolation triangle incidences.
\newblock {\em SIAM Journal on Scientific and Statistical Computing},
  10(6):1063--1075, 1989.

\bibitem{giannopoulos2010computing}
P.~Giannopoulos, R.~Klein, C.~Knauer, M.~Kutz, and D.~Marx.
\newblock Computing geometric minimum-dilation graphs is NP-hard.
\newblock {\em International Journal of Computational Geometry \&
  Applications}, 20(02):147--173, 2010.

\bibitem{gudmundsson2002higher}
J.~Gudmundsson, M.~Hammar, and M.~van Kreveld.
\newblock Higher order {D}elaunay triangulations.
\newblock {\em Computational Geometry}, 23(1):85--98, 2002.

\bibitem{dilation-survey}
J.~Gudmundsson and C.~Knauer.
\newblock Dilation and detours in geometric networks.
\newblock In T.~F. Gonzalez, editor, {\em Handbook of Approximation Algorithms
  and Metaheuristics}. Chapman and Hall/CRC, 2007.

\bibitem{jansen1993one}
K.~Jansen.
\newblock One strike against the min-max degree triangulation problem.
\newblock {\em Computational Geometry}, 3(2):107--120, 1993.

\bibitem{kant1997triangulating}
G.~Kant and H.~L. Bodlaender.
\newblock Triangulating planar graphs while minimizing the maximum degree.
\newblock {\em Information and Computation}, 135(1):1--14, 1997.

\bibitem{keil1992classes}
J.~M. Keil and C.~A. Gutwin.
\newblock Classes of graphs which approximate the complete {E}uclidean graph.
\newblock {\em Discrete \& Computational Geometry}, 7(1):13--28, 1992.

\bibitem{lawson1977software}
C.~L. Lawson.
\newblock Software for {C1} surface interpolation.
\newblock In {\em Mathematical software}, pages 161--194. Elsevier, 1977.

\bibitem{lee1986generalized}
D.-T. Lee and A.~K. Lin.
\newblock Generalized delaunay triangulation for planar graphs.
\newblock {\em Discrete \& Computational Geometry}, 1(3):201--217, 1986.

\bibitem{mullen2011hot}
P.~Mullen, P.~Memari, F.~de~Goes, and M.~Desbrun.
\newblock {HOT}: Hodge-optimized triangulations.
\newblock In {\em Proc. ACM SIGGRAPH 2011}, Article 103, pages 1--12. 2011.

\bibitem{mulzer2008minimum}
W.~Mulzer and G.~Rote.
\newblock Minimum-weight triangulation is {NP}-hard.
\newblock {\em Journal of the ACM (JACM)}, 55(2):1--29, 2008.

\bibitem{openproblemscccg17}
J.~O'Rourke.
\newblock Open problems from {CCCG} 2017, 2018.

\bibitem{pilz2014flip}
A.~Pilz.
\newblock Flip distance between triangulations of a planar point set is
  {APX}-hard.
\newblock {\em Computational Geometry}, 47(5):589--604, 2014.

\bibitem{rajan1994optimality}
V.~T. Rajan.
\newblock Optimality of the {D}elaunay triangulation in $\mathbb{R}^d$.
\newblock {\em Discrete \& Computational Geometry}, 12(2):189--202, 1994.

\bibitem{sharir2011counting}
M.~Sharir and A.~Sheffer.
\newblock Counting triangulations of planar point sets.
\newblock {\em The Electronic Journal of Combinatorics}, 18(P70):1, 2011.

\bibitem{sibson1978locally}
R.~Sibson.
\newblock Locally equiangular triangulations.
\newblock {\em The Computer Journal}, 21(3):243--245, 1978.

\bibitem{van2010optimization}
M.~van Kreveld, M.~L{\"o}ffler, and R.~I. Silveira.
\newblock Optimization for first order {D}elaunay triangulations.
\newblock {\em Computational Geometry}, 43(4):377--394, 2010.

\end{thebibliography}
\end{document}